\newtheorem{theorem}{Theorem}
\definecolor{o}{rgb}{0.8, 0.588, 0.204} 
\definecolor{db}{rgb}{0., 0.445, 0.738} 
\definecolor{b}{rgb}{0.24, 0.45, 0.76} 
\newcommand{\edit}[1]{{\color{black}{#1}}}
\newcommand{\expA}{\emph{PCL}}
\newcommand{\expB}{$\neg s/m$}
\newcommand{\expC}{\emph{local}~$\eta~s/m$}
\newcommand{\expD}{\emph{global}~$\eta~s/m$}
\journal{CAGD special issue on ``Computational Geometric Design''}
\begin{document}

\begin{frontmatter}



\title{Variational Shape Approximation of Point Set Surfaces}


\author[a]{Martin Skrodzki\corref{cor1}}
\author[b]{Eric Zimmermann}
\author[b]{Konrad Polthier}

\cortext[cor1]{Corresponding author: mail@ms-math-computer.science}

\address[a]{ICERM, Brown University, Providence, RI, USA\\ and RIKEN iTHEMS, Wako, Saitama, Japan}
\address[b]{Freie Universit\"at Berlin, Berlin, Germany}

\begin{abstract}	
	\edit{In this work, we present a translation of the complete pipeline for variational shape approximation (VSA) to the setting of point sets.
	First, we describe an explicit example for the theoretically known non-convergence of the currently available VSA approaches.
	The example motivates us to introduce an alternate version of VSA based on a \emph{switch} operation for which we prove convergence.
	Second, we discuss how two operations---\emph{split} and \emph{merge}---can be included in a fully automatic pipeline that is in turn independent of the placement and number of initial seeds.
	Third and finally, we present two approaches how to obtain a simplified mesh from the output of the VSA procedure. This simplification is either based on simple plane intersection or based on a variational optimization problem.
	Several qualitative and quantitative results prove the relevance of our approach.}
\end{abstract}

%

\begin{keyword}
Variational Shape Approximation \sep Point Set Segmentation \sep Simplification


\MSC[2010] 68U05 \sep 68U07 \sep 65D18
\end{keyword}

\end{frontmatter}




\section{Introduction}
\label{sec:Introduction}

\noindent Point sets arise naturally in almost all kinds of three-dimensional acquisition processes, like 3D laser-scanning. As early as 1985, they have been recognized as fundamental shape representations in computer graphics by~\cite{levoy1985use}. Ever since, they have found manifold applications e.g.\@ in face recognition, traffic accident analysis, archaeology, and several other fields.

\edit{However, in many applications, large parts of the point set carry redundant information. For example, a flat area of a surface can be sampled sparsely without---compared to an area of high curvature---loosing information. In several applications, it is not even necessary to consider all details carried by the point set. For instance, in architecture---for a first draft---the rough outline of a building suffices and there is no need to send more detailed geometries. In general, when transmitting geometries e.g.\@ to give an overview of a certain portfolio, the general outlines of the geometries suffices and sending only these saves on bandwidths during transmission.
In this sense, algorithms are necessary that reduce a complex geometry to several basic shapes that still retain the most important features of the input. Towards this end,~\cite{cohen2004variational} proposed their \emph{Variational Shape Approximation} (VSA) for meshes.}

\edit{The VSA procedure segments a mesh into a given number of flat proxy regions, see Section~\ref{sec:Method}. Finally, a simplified surface is obtained with only one element for each region, see Section~\ref{sec:Simplification}. A translation of the VSA method to the setting of point sets was done by~\cite{lee2016feature} with the explicit goal of feature curve extraction. While VSA is able to provide an easy to implement simplification of any geometry, it also has several downsides. 
First, it is dependent on the number of proxies which has to be chosen a priori. 
Second, in the previous publications, the quality of the result depends heavily on the manual placement of the starting seeds for the proxies (\cite{cohen2004variational,lee2016feature,yan2006quadraicSurfExVSA}). Towards this end, two manual operations were proposed, which allow for splitting proxy regions of high error and merging neighboring ones with a combined low error~\cite[Sec.~3.5]{cohen2004variational}. 
In the context of meshes, these operations have been automatized~\cite[Sec.~3.1]{yan2006quadraicSurfExVSA}.} 
Third and finally, the previous publications were not able to construct a VSA algorithm with guaranteed convergence. 
This article closes these gaps. 
Our main contributions are:
\begin{itemize}
	\item Providing an example of a growing error during the run of the VSA algorithm which applies to meshes and point sets alike.
	\item Presentation of a modified VSA procedure \edit{including the \emph{switch} operation} and proof of its guaranteed convergence.
	\item \edit{Inclusion of the two operations, \emph{split} and \emph{merge}, as automatic parts in the point set processing pipeline, making the initial choice of a fixed proxy number and the manual selection of seeds unnecessary.}
	\item Extension of variational tangent plane intersection to the setting of point sets and inclusion of the procedure in the VSA pipeline for simplification.
\end{itemize}

This paper is an extension of a chapter in a PhD thesis~\cite[Chapter 5]{skrodzki2019neighborhood}. Some results of the paper have been presented as a poster at the International Geometry Summit 2019 in Vancouver, Canada and have been published in the corresponding poster proceedings, see~\cite{skrodzki2019variational}.



\section{Related Work}
\label{sec:rel_work}

\edit{
\noindent The VSA procedure was introduced by~\cite{cohen2004variational} as a method for concise, faithful approximation of complex three-dimensional meshes. It does so by fitting a set of planar proxies to the input mesh. We will provide a detailed discussion of the procedure in Section~\ref{sec:Method}. As the resulting elements are oriented corresponding to all associated faces of the original mesh, the effects of simplification are less drastic as in the classical approach of~\cite{garland1997surface}. A next step towards even better approximations consisted of the inclusion of more than just planar shapes. In the work of~\cite{wu2005structure}, also e.g.\@ cylinders and spheres are used as proxies to even better approximate the input shape. This was generalized even further by~\cite{yan2006quadraicSurfExVSA} who utilized general quadrics as proxies to be fitted to the input. However, all these methods are implemented in the setting of surface meshes.

A translation of the VSA procedure to the setting of point sets was performed in an article by~\cite{lee2016feature}. 
The authors studied the problem of computing smooth feature curves from CAD type point cloud models.
Their reconstructed curves arise from the intersections of developable strip pairs which approximate the regions along both sides of the features. 
The generation of the developable surfaces is in turn based on VSA.
While the presented results are convincing, it remains unclear whether the approach of fitting developable surfaces works outside of the CAD realm.
Furthermore, the work does not provide details on how to obtain the used linear planar approximations or how to construct a watertight mesh from them.
These aspects motivate the present research.

Our proposed approach incorporates two different areas of point set processing. 
On the one hand, we aim at segmenting the input into several flat---i.e.\@ planar---parts. 
Thus, we will discuss related segmentation approaches in Section~\ref{sec:Segmentation}. 
On the other hand, we want to construct a simplified mesh on the basis of the found flat surfaces patches. 
Therefore, we will present corresponding work on mesh generation and simplification in Section~\ref{sec:MeshingAndSimplifaction}.

\subsection{Segmentation}
\label{sec:Segmentation}

\noindent Segmentation of point clouds is the process of classifying the input into multiple homogeneous regions, where points in the same region will have the same properties.
In real-world applications---like intelligent vehicles, autonomous mapping, and navigation---the problem is challenging because of high redundancy, uneven sampling density, and lack of explicit structure in the input data.
Methods for point set segmentation can roughly be classified as follows: edge-based, region-based (seeded/bottom-up or unseeded/top-down), attribute-based, model-based, graph-based, or machine-learning-based, see \cite{nguyen2013point}, \cite{grilli2017review}.
Following this terminology, the VSA procedure is a seeded, region-based method, which is characterized by starting the segmentation process from seed points and letting regions grow by adding neighbors if they satisfy certain conditions---like normal similarity. 
We refer to the survey of~\cite{nguyen2013point} for a discussion of several corresponding methods. In this work, the authors draw the following conclusion on seeded region-based methods:
\begin{quote}
	\emph{[They] are highly dependent on selected seed points. 
	Inaccurate choosing seed points will affect the segmentation process and can cause under or over segmentation.}
\end{quote}
The survey paper of~\cite{grilli2017review} draws a similar conclusion for the corresponding set of discussed methods.
Hence, in contrast to the procedures covered in the mentioned surveys, we put an emphasis on the independence of both the number and placement of seed points. See Section~\ref{sec:UserControllerLevelOfDetail} for a corresponding discussion.

Point cloud segmentation can be considered either from a semantic or from a geometrical perspective. The former aims at separating a model into its parts: A chair should for instance be segmented into four legs, a seating surface, and a backrest. The geometric approach is to segment the model into different primitives as well as possible. A recent survey paper of~\cite{xie2019review} provides a comprehensive list of methods following both approaches. In the terminology used in this paper, the VSA approach creates a ``plane point cloud segmentation''. While we cannot discuss all works mentioned, we will consider two popular approaches in the following and refer to the survey of~\cite{xie2019review} for a thorough discussion of other related work.

Note that fitting different planar segments to a model can be considered as a natural approach in order to render the model with a reduced set of planar patches. Naturally, those models are captured well that are comprised of mostly planar surface parts. The presence of spherical or cylindrical shapes will cause for larger distortions when approximating only with planar parts. Thus, a next step---after planar fitting---is the usage of other geometric primitives, like spheres, cylinders, cones, or tori. Each of these primitives then require their own fitting. As the VSA approach only fits planes, we briefly discuss different fitting concepts for this primitive. Note at this point that the method of~\cite{wu2005structure} utilizes the exact same procedure for fitting of planes as the original VSA paper by~\cite{cohen2004variational}.

To fit a planar patch, the approach of~\cite{schnabel2007efficient} considers three points~${p_i,p_j,p_\ell\in P}$ from an input point set~$P$ and computes a normal of the plane spanned by these points. This normal is then compared to the respective normals at~$p_i$, $p_j$, and~$p_\ell$. A fitting plane is introduced if all three normal variations stay below a user-given angle. Clearly, the results of this approach heavily depends on the choice of the three points.

In contrast, the approach of~\cite{attene2010hierarchical} places a plane at the weighted barycenter~$b$ of a considered subset~${\{p_i\}\subset P}$ of the input point set~$P$. A weighted covariance matrix is used to determine a normal~$n$ and the fitting error is computed as a weighted least-squares formulation. However, this computation neglects the normal information at the points~$p_i$.

Another choice for the segmentation of point clouds is the algorithm of~\cite{rabbani2006segmentation}. It is popular because of its easily accessible implementation in the widely used Point Cloud Library (PCL) by~\cite{rusu20113d}. This method can be seen as a reduced version of the VSA approach. Regions are also grown from seeds according to normal information. However, the growing process is only executed once and not repeated from a different set of seeds, like in VSA (see Section~\ref{sec:Method}). Thus, the result is even more dependent on the initial seeding than in other, comparable techniques.

For the give reasons, the discussed methods have their respective downsides. Contrasting the presented algorithms, in our translation of the VSA approach, we include the entire normal information of the input point set. Furthermore, we have a setup of the pipeline that ensures independence of the initial seed points, which eliminates the major disadvantage of seed-based region growing methods.
}

\edit{
\subsection{Meshing and Simplification}
\label{sec:MeshingAndSimplifaction}

\noindent As stated in Section~\ref{sec:Introduction}, the ultimate goal of our VSA procedure for point sets is to create a simplified mesh from a set of planar proxy regions. That is, a set of mesh vertices has to be created from the intersection of the proxy planes. Then, these vertices have to be connected to represent face elements for the proxies respectively. While three pairwise non-parallel planes intersect in a unique point, this is not necessarily the case for more than three planes in~$\mathbb{R}^3$. In the context of planar panelization of freeform surfaces,~\cite{zimmer2012variational} confirm this statement, asserting that tangent plane intersection is numerically not stable enough to obtain reliable results. The authors proceed to present a variational approach and a corresponding minimization problem in order to obtain a planar representation of a given mesh structure. This method improves the approach of~\cite{cutler2007constrained} for constrained planar remeshing of architectural geometries, which is itself based on VSA. Therefore, we turn to the work of~\cite{zimmer2012variational} to make the calculation of a simplified mesh from the input point set as robust as possible. See Section~\ref{sec:Simplification} for a discussion of the technical details of the optimization and also for our translation to the setting of point sets.

Aside from VSA, there are other approaches to obtain a simplified mesh from an input point set. For instance, a possibility is to first mesh the input point set and then simplify the created mesh. An overview of methods for meshing of point sets is presented in the survey of~\cite{berger2017survey}. Several methods are available for the subsequent simplification of the mesh. These mostly collapse edges in the mesh to reduce its complexity. By using quadric error metrics, it can be assured that the collapses remove elements that carry the least amount of feature information, see \cite{garland1997surface}. This simple approach can be improved by adjusting the position of the vertex resulting from an edge collapse according to the local curvature information, see \cite{hua2015mesh,yao2015quadratic}. However, as these methods perform their operations in a greedy manner, they do not provide reliable results when performing a drastic number of simplifications. Also, these approaches require a costly meshing operation on the unfiltered point set, which can introduce topological failures, like a surface of wrong genus or flipped triangles.

Another possibility to obtain a simplified mesh from an input point set is to first simplify the point set and to then create a mesh from this. A brief introduction and (error) analysis of different point set simplification algorithms can be found in the work of~\cite{pauly2002efficient}. Important attributes in real-world applications are the performance and quality of the rendering process. This requires a specific focus on features represented by the point sets. By utilizing a bilateral filtering, both Euclidean distances and normal information can be taken into account throughout a simplification process on a point set to best preserve both the large-scale geometry and small-scale features, in~\cite{sosorbaram2010simplification}. While these methods are feature-preserving, they are not robust in the presence of outliers or noise. Also, the construction of the mesh cannot use the full information of the input point set anymore, as the majority of points will have been removed during the simplification step.

Because of the downsides of both the approaches, we aim at taking all points of the input into account when creating planar proxies. From these, we then create a mesh by completely creating new vertices and connections on them without going through a costly meshing operation on the original input, see Section~\ref{sec:Simplification}.
}


\section{The Method}
\label{sec:Method}

\noindent In this section, we will present the Variational Shape Approximation (VSA) as introduced by~\cite{cohen2004variational} and as used by~\cite{yan2006quadraicSurfExVSA} for surfaces and surface meshes. Also, we present a translation of the procedure to the setting of point sets, similar to the work of~\cite{lee2016feature}.

\subsection{The VSA Procedure for Surfaces and Surface Meshes}
\label{sec:TheVSAProcedureForMeshes}

\noindent The VSA procedure of~\cite{cohen2004variational} acts on a surface~${S\subseteq\mathbb{R}^3}$. The goal is to partition~$S$ into~$m$ disjoint regions~${R_i\subseteq S}$, ${\dot{\bigcup}R_i=S}$, where each region is associated a linear proxy~${(C_i,N_i)}$ with a center~${C_i\in\mathbb{R}^3}$ and a unit-length normal~${N_i\in\mathbb{R}^3}$, ${i\in\{1,\ldots,m\}}$. The authors propose two different metrics to find the optimal shape proxies, with the first metric based on the $\mathcal{L}^2$ measure
\begin{align}
\label{equ:L2energy}
	\mathcal{L}^2(R_i,C_i,N_i)=\int_{x\in R_i}\left\|x-\pi_i(x)\right\|_2^2\:dx,
\end{align}
where~$\pi_i(\cdot)$ denotes the orthogonal projection of the argument on the plane with normal~$N_i$ centered at~$C_i$. Thus, the integral~(\ref{equ:L2energy}) measures the squared error between points in the region~$R_i$ and its linear approximation given by~${(C_i,N_i)}$.

A second metric, denoted by~$\mathcal{L}^{2,1}$ is based on the~$\mathcal{L}^2$ measure when evaluated on the normal field. It is given by
\begin{align}
\label{equ:L21energySmooth}
	\mathcal{L}^{2,1}(R_i,N_i)=\int_{x\in R_i}\left\|n(x)-N_i\right\|_2^2\:dx,
\end{align}
where~$n(x)$ denotes the normal of the surface at point~$x\in S$. As~\cite{cohen2004variational} conclude that the~$\mathcal{L}^{2,1}$ metric is more effective, we will reduce the following discussion to this formulation.

In the discrete setting, the surface~$S$ is given by a finite set of~${T\in\mathbb{N}}$ (triangular) elements~$t_j$, ${j\in[T]}$ and the centers~$C_i$ are found by randomly choosing a triangle~$t_j$ as center~$C_i$. Therefore, the second smooth formulation~(\ref{equ:L21energySmooth}) can be discretized to
\begin{align}
\label{equ:L21energy}
	\mathcal{L}^{2,1}(R_i,N_i)=\sum_{t_j\in R_i}\left\|n(t_j)-N_i\right\|_2^2\cdot|t_j|,
\end{align}
with~$n(t_j)$ the normal and~$|t_j|$ the area of the element~$t_j$ respectively.

The actual minimization of expression~(\ref{equ:L21energy}) with respect to the segmentation of~$S$ into regions~$R_i$ and with respect to the proxies~${(C_i,N_i)}$ is then performed iteratively. For this, a variation of Lloyd's fixed point iteration given by~\cite{lloyd1982least} is used. The first step is to pick a user-given number~$m$ of center elements~$C_1,\ldots,C_m$ randomly from the set of triangles~${\{t_j\mid j\in[T]\}}$. The normals~$N_i$ are set to the normals of corresponding center triangles~$C_i$ and the regions are initialized as~${R_i=\{t_i\}}$. The neighbors of the chosen center triangles are collected in a priority queue~$\mathcal{Q}$ sorted increasingly with growing $\mathcal{L}^{2,1}$-distance between neighboring triangle and center triangle:~${\left\|n(t_j)-N_i\right\|_2^2}$. Then, the following three steps are performed iteratively until convergence:
\begin{enumerate}
	\item \emph{Flood}: As long as the queue~$\mathcal{Q}$ is not empty, pop the first element~$t_j$ from~$\mathcal{Q}$. Ignore it, if it has already been assigned to a region. If it is not assigned yet, assign it to the region~$R_i$ that pushed it into the queue and push all neighboring elements of~$t_j$ into~$\mathcal{Q}$, noting that they have been pushed by~$R_i$. Without loss of generality, we assume~$S$ to be connected. If that is not the case, the algorithm can simply be run on each connected component of~$S$. For a connected surface~$S$, after the queue~$\mathcal{Q}$ has been emptied, all elements~${\{t_j\mid j\in[T]\}}$ have been assigned to some region respectively.
	\item \emph{Proxy Update}: The proxy normals~$N_i$ are updated according to
	\begin{align*}
		N_i=\frac{\sum_{t_j\in R_i}|t_j|n(t_j)}{\left\|\sum_{t_j\in R_i}|t_j|n(t_j)\right\|_2},
	\end{align*}
	where it is ensured that the updated~$N_i$ are unit-length normals. Note that as the surface will be segmented into a large enough number of locally flat patches, the denominator of this expression will never be zero in practice.
	\item \emph{Seed}: For each region~$R_i$, find some element~${t'\in R_i}$ such that
	\begin{align*}
		\left\|n(t')-N_i\right\|_2^2\leq\left\|n(t_j)-N_i\right\|_2^2
	\end{align*}
	for all~${t_j\in R_i}$. This ensures that the flooding in the next iteration is started from regions that best reflect the current proxy normals.
\end{enumerate}
Finally, the iteration is stopped, when no region changes from one step to the next. From the converged regions~$R_i$ and assigned proxies~${(C_i,N_i)}$, a simplified mesh with corresponding~$m$ surface elements is constructed. Respective results are shown in~\cite{cohen2004variational}.


\subsection{The VSA Procedure on Point Sets}
\label{sec:TheVSAProcedureOnPointSets}

\noindent We will now proceed to present a translation of the VSA procedure to the setting of point sets. A corresponding reformulation can be found in~\cite{lee2016feature}, while we include weights to obtain a more general setup. Compared to the VSA on meshes, several details have to be adjusted for the method to work on point sets. At first, consider the partition problem as stated in Section~\ref{sec:TheVSAProcedureForMeshes}. In the context of point sets, not elements, but the points themselves have to be assigned to the proxies. That is, the given point set ${P=\{p_1,\ldots,p_\aleph\}}$ will be partitioned into disjoint subsets ${\dot{\bigcup}_{i=1}^{m} P_i=P}$,~${m\in\mathbb{N}}$. Therefore, in the following expressions, the centers~$C_i$ denote points from the point set~$P$, while the normals~$N_i$ at the respective center point are those obtained from a normal field imposed on the point set. The normals of the points~${p_j\in P}$ will be denoted by~$n_j$ respectively.

Consider the energy as defined in Equation~(\ref{equ:L21energy}). For proxies obtained from point sets, the area term~$|t_j|$ cannot be used. Thus, we replace it by a weighting term~${\omega_j\in\mathbb{R}_{\geq0}}$ which is to approximate the area represented by the point~${p_j\in P}$. We obtain the following energy of a single proxy and the resulting energy formulation on the set of all proxies
\begin{align}
\label{equ:L21energyPointsOneProxy}
	\mathcal{L}^{2,1}(P_i,N_i) &= \sum_{p_j\in P_i}\omega_{j}\left\|n_j-N_i\right\|_2^2,\\
\label{equ:L21energyPoints}
	E(\{(P_i,N_i)\mid i=1,\ldots,m\}) &= \sum_{i=1}^m \mathcal{L}^{2,1}(P_i,N_i).
\end{align}
\edit{An approximation of the area term can be obtained by 
\begin{align}
\label{equ:DiscreteAreaWeights}
	\omega_j=\sum_{\ell\in\mathcal{N}(j)}\left\|p_{\ell}-p_j\right\|_2^2,
\end{align}
where~${\mathcal{N}(j)\subset P}$ denotes the neighborhood of $p_j$ in $P$. Including weights reflecting the area are of interest because of varying densities. Therefore, another possible weighting scheme could be the incorporation of directional density measures proposed in~\cite{skrodzki2018directional}. This could be coupled in a bilateral manner together with the Euclidean distances mentioned before. In contrast, weight determination via normal deviations should not be used, as the energy is defined upon these, i.e. weighting these terms in the same fashion seems counterproductive.}

The initial seeding as outlined above can still be done in the point cloud setting, but instead of triangles, now,~$m\in\mathbb{N}$ points~${p_j\in P}$ are chosen for the initial position of the center points~$C_i$. Also, those points from~$P$ are pushed to the priority queue~$\mathcal{Q}$ that are neighbors, but not identical, to the chosen center points~$C_i$. For this neighborhood relation, any neighborhood concept such as combinatorial $k$-nearest neighborhoods or geometric neighborhoods of radius~$r$ can be used. Denote the neighborhood of~$p_i$ by~${\mathcal{N}(i)\subset P}$. Again, the points in~$\mathcal{Q}$ are sorted increasingly with~$\mathcal{L}^{2}$ distance between their own normal and the normal of the proxy that pushed them into the queue:~${\left\|n_j-N_i\right\|_2^2}$. The following three iteratively applied steps remain almost unchanged:

\begin{enumerate}
	\item \emph{Flood}: As long as the queue~$\mathcal{Q}$ is not empty, pop the first element~$p$ from~$\mathcal{Q}$. Ignore it, if it has already been assigned to a subset~$P_i$. If it is not assigned yet, assign it to the subset~$P_i$ that pushed it into the list and push all neighboring points~${p_j\in\mathcal{N}(i)}$ into~$\mathcal{Q}$, noting that they have been pushed by~$P_i$. As we assume~$S$ to be  connected via the imposed neighborhood relation (see above), after the queue~$\mathcal{Q}$ has been emptied, all elements of~$P$ have been assigned to some subset~$P_i$.
	\item \emph{Proxy Update}: The proxy normals~$N_i$ are updated according to
	\begin{align*}
		N_i=\frac{\sum_{p_j\in P_i}\omega_{j}n_j}{\left\|\sum_{p_j\in P_i}\omega_jn_j\right\|_2},
	\end{align*}
	where we once again obtain unit-length normals and will not encounter a denominator equal to zero (see above).
	\item \emph{Seed}: For all subsets~$P_i$, find some~${p_\ell\in P_i}$, ${\ell\in[\aleph]}$, such that
	\begin{align*}
		\left\|n_\ell-N_i\right\|_2^2\leq\left\|n_j-N_i\right\|_2^2
	\end{align*}
	for all ${p_j\in P_i}$. Again, this ensures that the next flooding step starts from regions that best reflect the current proxy normals.
\end{enumerate}
Finally, once the subsets~${P_i}$ do not change anymore over two iterations, the process is stopped. From the converged subsets~$P_i$ and assigned proxies~${(C_i,N_i)}$, a simplified mesh with corresponding~$m$ surface elements is constructed. Respective results are shown in~\cite{lee2016feature}, while our corresponding approach will be discussed in Section~\ref{sec:Simplification}.


\section{Improved VSA Pipeline}
\label{sec:ImprovedVSAPipeline}

\noindent Having described the VSA procedure for both meshes and point sets in the previous chapter, we now turn to our contributions for this pipeline. First, we will establish by an example that convergence of neither the meshed nor the point set version is guaranteed. \edit{Following up on this, we propose an alternative formulation of VSA with guaranteed convergence.} Furthermore, we turn to a different issue of the VSA procedure. Namely, it is highly depended on both the number of initial seeds and their placement at the beginning of the procedure. 
\edit{We circumvent this dependency by including two more operations in the point set pipeline that have already been used manually~\cite{cohen2004variational} and automatically~\cite{yan2006quadraicSurfExVSA} in the context of meshes.}

\subsection{Example for Failure of Convergence of the VSA Procedure}
\label{sec:ExampleConvergenceFailure}

\noindent Concerning the convergence of their algorithm,~\cite{cohen2004variational} state:
\begin{quote}
	\emph{[$\ldots$] Lloyd's algorithm always converges in a finite number of steps, since each step reduces the energy~$E$: the partitioning stage minimizes~$E$ for a fixed set of centers~$c_i$, while the fitting stage minimizes~$E$ for a fixed partition.}
\end{quote}	
While this statement holds for the original algorithm of Lloyd as presented in~\cite{lloyd1982least}, it does not hold for neither the VSA procedure on meshes as presented in~\cite{cohen2004variational,yan2006quadraicSurfExVSA} nor for the translation to point sets as given by~\cite{lee2016feature}. This is already recognized in the paragraph \emph{Convergence} of Section~3.5 in~\cite{cohen2004variational}. We will demonstrate this with the following concrete example, which is to the best of our knowledge the first explicit example presented. 

\begin{figure}
	\captionsetup[subfigure]{justification=centering}
	\begin{subfigure}[b]{0.5\textwidth}
		\def\svgwidth{1.\textwidth}
		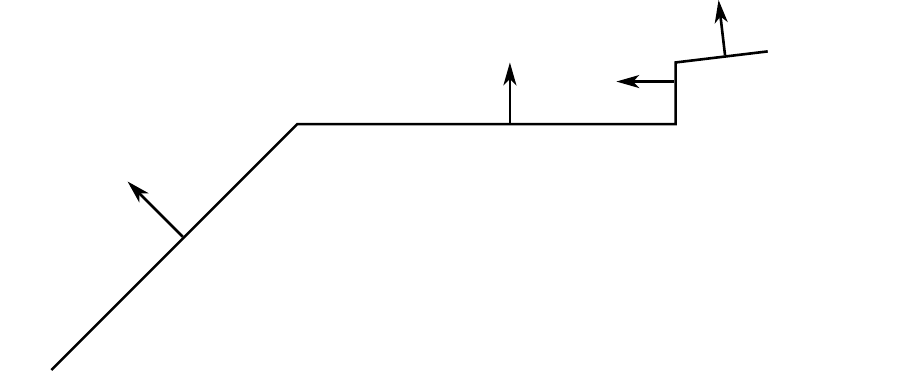
		\caption{Setup for growing error functional.}
		\label{fig:ConvergenceExampleSetup}
	\end{subfigure}
	\begin{subfigure}[b]{0.24\textwidth}
		\def\svgwidth{1.\textwidth}
		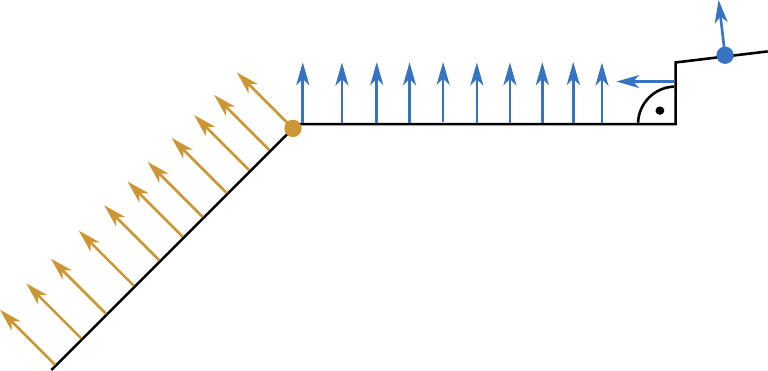
		\caption{Segmentation after first flood.}
		\label{fig:ConvergenceExample1Flood}
	\end{subfigure}
	\hfill
	\begin{subfigure}[b]{0.24\textwidth}
		\def\svgwidth{1.\textwidth}
		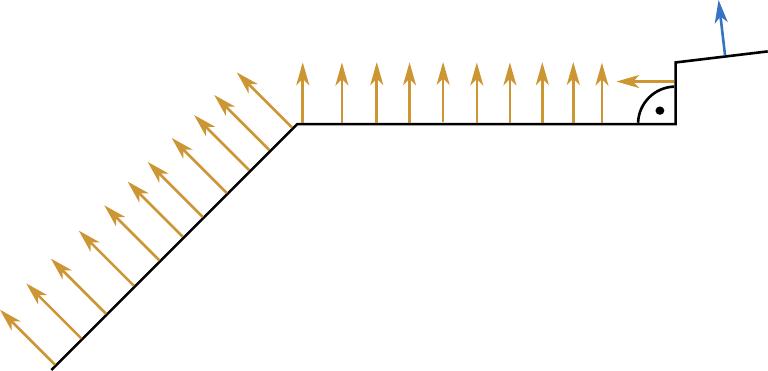
		\caption{Segmentation after second flood.}
		\label{fig:ConvergenceExample2Flood}
	\end{subfigure}
	\caption{Example for a growth in the error measure after a flood and proxy update.}
\end{figure}

Consider the two-dimensional setup shown in Figure~\ref{fig:ConvergenceExampleSetup}. It is given by~$n$ points connected on a line with normal~\edit{$ {\frac{1}{\sqrt{2}}{-1\choose 1}} $} next to a line of~$n$ points with normal~${0\choose 1}$. At the right end of the second line, there is a single point with normal~${-1\choose 0}$ and another single point with normal~$N$ given by the equation
\begin{align*}
	N = \edit{\frac{1}{\left\|n\cdot{0 \choose 1} + {-1 \choose 0} + N\right\|_2}}\cdot \left(n\cdot{0 \choose 1} + {-1 \choose 0} + N\right),
\end{align*}
\edit{which solves to~$ { N = \frac{1}{\sqrt{n^2+1}}{-1 \choose n}} $}.
Now, two proxies will act on this example, with their initial seeds shown in yellow and blue in Figure~\ref{fig:ConvergenceExampleSetup}. They each start on one of the two lines of~$n$ points respectively. The result after a flood is shown in Figure~\ref{fig:ConvergenceExample1Flood}, where each line is completely covered by the proxy starting on it and the two single points are associated to the proxy with normal~${0 \choose 1}$. After updating the proxy normals, the yellow proxy has normal~\edit{$ {\frac{1}{\sqrt{2}}{-1 \choose 1}} $} while the blue proxy has normal~$N$ given by the equation above. Thus, the yellow proxy starts from an arbitrary point on its line while the blue proxy starts from the rightmost point. The error after this first flood and proxy update is given by
\begin{align*}
	E_1 = n\cdot\left\|{0 \choose 1} - N\right\|_2^2 + \left\|{-1 \choose 0} - N\right\|_2^2 = -2(\sqrt{n^2+1}-n-1),
\end{align*}
\edit{where only the blue proxy contributes to the error, because the normals corresponding to the yellow proxy coincide with their proxy normal and cancel out in energy $ E_1 $.}
Starting from the new seed points, a second flood results in the situation shown in Figure~\ref{fig:ConvergenceExample2Flood}. Here, almost all points except for the rightmost one are associated to the yellow proxy with former normal~\edit{$ {\frac{1}{\sqrt{2}}{-1 \choose 1}} $}. Its new normal after a proxy update is
\begin{align*}
	N' = \edit{\frac{1}{\left\| \frac{n}{\sqrt{2}}\cdot {-1 \choose 1} + n\cdot {0 \choose 1} + {-1 \choose 0}\right\|}}\cdot\left( \frac{n}{\edit{\sqrt{2}}}\cdot {-1 \choose 1} + n\cdot {0 \choose 1} + {-1 \choose 0}\right),
\end{align*}
which amounts to an error after the second flood and proxy update given by
\begin{align*}
	E_2 = n\cdot\left\|\edit{\frac{1}{\sqrt{2}}{-1 \choose 1}} - N'\right\|^2_2 + n\cdot\left\|{0 \choose 1} - N'\right\|^2_2 + \left\|{-1 \choose 0} - N'\right\|^2_2.
\end{align*}
\edit{Note that the error term for the blue proxy cancels out, as the one representative corresponds to the normal of the proxy it belongs to.} Choosing~$n=100$ points on each of the two lines, we obtain~$ \edit{{E_1\approx1.9900}} $, but~$ \edit{{E_2\approx31.6782}} $. Furthermore, the corresponding error value after the flood is also growing. Thus, convergence cannot be proven by an always shrinking error functional. 

\edit{Note that this example is described as a curve in 2D, where neighborhood selection is generally more involved than for surfaces in 3D. However, the example can easily be extended to a surface in 3D space, see Figure~\ref{fig:ConvergenceExample3D}. There, we also close the loop and thereby cause the original VSA algorithm to run infinitely long. For the given example, the crucial step as depicted in Figure~\ref{fig:ConvergenceExample2Flood} can be resolved via a manual~(\cite{cohen2004variational}) or automatic~(\cite{yan2006quadraicSurfExVSA}) split of the large proxy. Thus, this example only applies to the VSA procedure as described above.}

\begin{figure}
	\centering
	\includegraphics[width=1.0\textwidth]{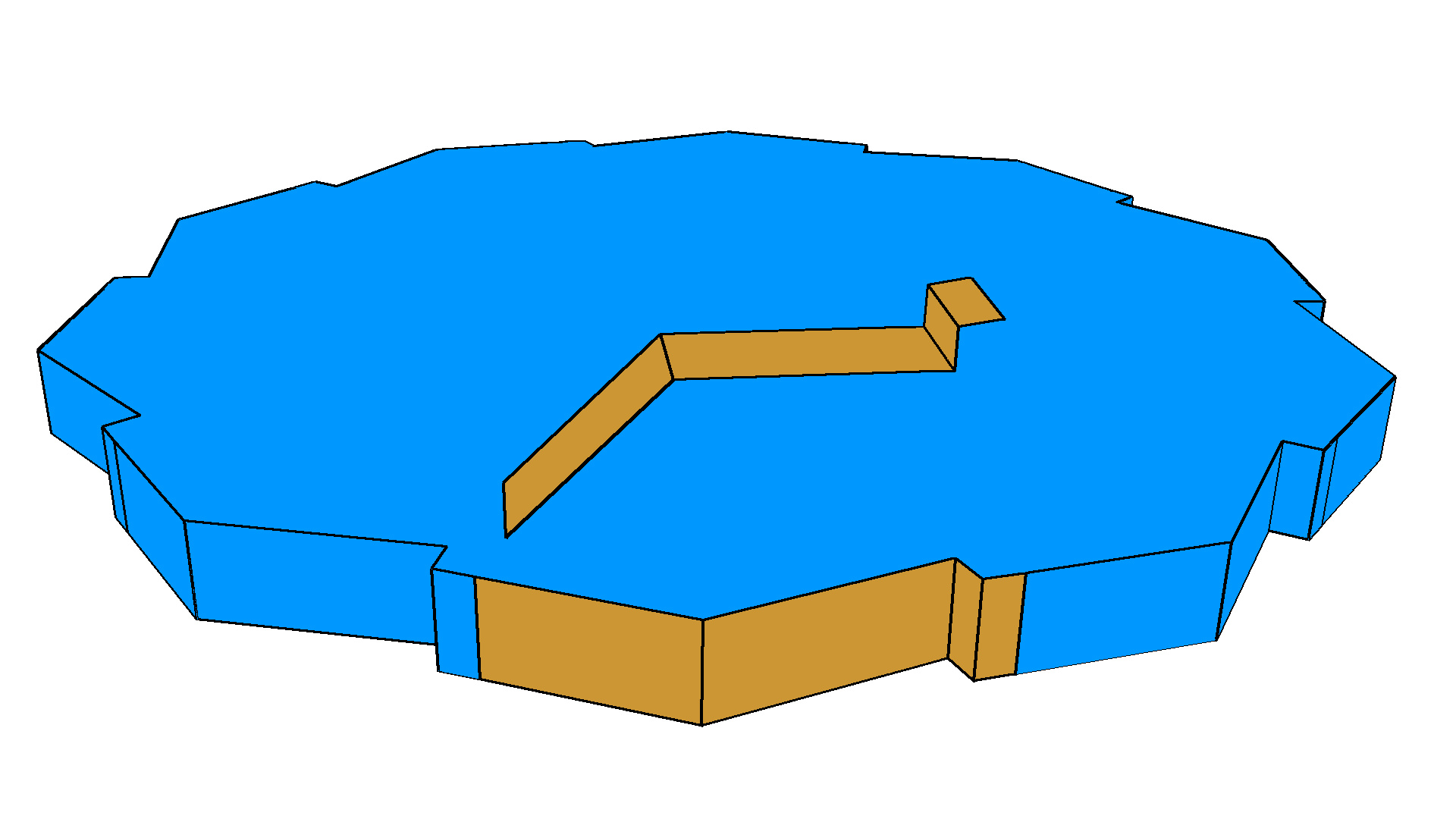}
	\caption{A regular 10-gon, built from the shape shown floating on top, which is a three-dimensional extension of the setup shown in Figure~\ref{fig:ConvergenceExampleSetup}.}
	\label{fig:ConvergenceExample3D}
\end{figure}


\subsection{VSA with guaranteed Convergence}
\label{sec:VSAWithGuaranteedConvergence}

\noindent \edit{The example presented above highlights the main deficiency of the VSA procedure as used in~\cite{cohen2004variational,yan2006quadraicSurfExVSA,lee2016feature}. 
Namely, if an outlier causes a proxy normal to be distorted, the new proxy seed can end up to be a border point that does not actually reflect the normal behavior of the majority of points in the proxy.
In other words, the change of seeds before flooding is a problematic step.
Thus, in the following, we aim at altering the VSA procedure in a way such that no new seeds need to be found, but proxies can still move and alter.
In particular, proxies should be able to take over the original seed points of other proxies if necessary.
These changes should finally lead to an alternative VSA procedure with guaranteed convergence.
In order to achieve this goal, we propose to alter the steps of the algorithm as follows.}

First, we perform an initial seeding and one flood step and proxy update as explained in Sections~\ref{sec:TheVSAProcedureForMeshes} and~\ref{sec:TheVSAProcedureOnPointSets} above. Instead of the seeding step in the following iterations, we perform a different procedure:
\begin{enumerate}
	\item[4.] \emph{Switch}: Consider the neighborhoods~${\mathcal{N}(i)\subset P}$ for all points~${p_i\in P}$. Assume that~$p_i$ is assigned to subset~$P_\ell$. If any point~$p_j\in\mathcal{N}(i)$ is assigned to another subset~$P_h$, compute the change of the error measure~(\ref{equ:L21energyPoints}) resulting from reassigning~$p_i$ from~$P_\ell$ to~$P_h$. Compare it to the current best known reassignment. After iterating through all points~${p\in P}$, reassign the point such that the error measure is reduced maximally.
\end{enumerate}
This new switch step replaces the \emph{seed} step and the \emph{flood} step described in Sections~\ref{sec:TheVSAProcedureForMeshes} and~\ref{sec:TheVSAProcedureOnPointSets} above. That is, it is only iterated together with the \emph{proxy update}. The iteration is continued until no further switch operations can be performed. For this alternate procedure, we can prove the following statement.

\begin{theorem}[Error reduction by switch and proxy update]
	\label{the:ConvergenceOfVSA}
	Given a point set ${P=\{p_1,\ldots,p_\aleph\}}$ with a neighborhood structure, such that the neighborhood graph on~$P$ is connected and normals~${n_1,\ldots,n_\aleph}$ on~$P$. Then, each proxy update step and each switch step as defined above leads to proxies~${(P_i,C_i,N_i)}$ with a smaller error measure in Equation~(\ref{equ:L21energyPoints}).
\end{theorem}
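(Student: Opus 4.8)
The plan is to treat the total energy $E$ of~(\ref{equ:L21energyPoints}) as a function of two independent arguments, the partition $\{P_i\}$ and the tuple of proxy normals $\{N_i\}$, and to observe first that $E$ does not depend on the centres $C_i$ at all, so that only the $P_i$ and $N_i$ enter the estimate. The proxy update modifies the $N_i$ for a fixed partition, while the switch modifies the partition for fixed normals; I would therefore prove one monotonicity statement for each operation and then chain the two inequalities. Because $E = \sum_i \mathcal{L}^{2,1}(P_i,N_i)$ splits as a sum of the per-proxy energies~(\ref{equ:L21energyPointsOneProxy}), both operations can be analysed locally, region by region.

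For the proxy update I would fix a region $P_i$ and expand the single-proxy energy using $\|N_i\|_2=1$, which gives
\begin{align*}
	\mathcal{L}^{2,1}(P_i,N_i) = \sum_{p_j\in P_i}\omega_j\|n_j\|_2^2 \;-\; 2\left\langle \sum_{p_j\in P_i}\omega_j n_j,\, N_i\right\rangle \;+\; \sum_{p_j\in P_i}\omega_j .
\end{align*}
Only the middle term depends on $N_i$, so minimising over unit vectors reduces to maximising the inner product $\langle\sum_{p_j\in P_i}\omega_j n_j,\,N_i\rangle$. I would then invoke Cauchy--Schwarz: the maximum is attained exactly when $N_i$ is the positive normalisation of $\sum_{p_j\in P_i}\omega_j n_j$, which is precisely the update rule of Section~\ref{sec:TheVSAProcedureOnPointSets}. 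Hence the update returns the global minimiser over the unit sphere, so it can never increase $E$ and strictly decreases it whenever the normal actually moves.

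For the switch I would use that the normals are held fixed, so reassigning a single point $p_i$ from $P_\ell$ to $P_h$ touches only two summands of~(\ref{equ:L21energyPoints}) and changes the energy by
\begin{align*}
	\Delta E = \omega_i\left(\|n_i-N_h\|_2^2 - \|n_i-N_\ell\|_2^2\right),
\end{align*}
a purely local quantity. Since the switch step evaluates this difference over all admissible reassignments and performs the one minimising $\Delta E$, and only when $\Delta E<0$, every executed switch strictly lowers $E$. Concatenating a switch with the ensuing proxy update, which by the previous step cannot raise $E$, yields the strict descent of the combined iteration and completes the argument.

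I expect the switch inequality to be essentially true by construction and the proxy-update inequality to be a one-line application of Cauchy--Schwarz; the real care lies not in any sharp estimate but in the bookkeeping. The main point I would need to secure is that the proxy update stays well defined, i.e.\ that no region becomes empty and that the weighted normal sum $\sum_{p_j\in P_i}\omega_j n_j$ never vanishes, so that its normalisation exists. This is where the flatness-of-patches assumption and the connectivity of the neighbourhood graph enter, the latter guaranteeing that the preceding flood assigned every point of $P$. Finally, I would phrase the conclusion so as to distinguish the weak monotonicity of the update (equality when already optimal) from the strict decrease of an executed switch, since it is this strict drop, together with $E\ge 0$ and the finiteness of the set of partitions, that will ultimately force termination.
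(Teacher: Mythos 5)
Your proposal is correct, and its overall architecture (decompose $E$ into per-proxy terms, prove one monotonicity statement per operation, chain them) matches the paper; the switch half is essentially identical in both, since reassigning a point only touches two summands of~(\ref{equ:L21energyPoints}) and the operation is executed only when the resulting $\Delta E$ is negative. Where you genuinely diverge is the proxy-update half. The paper computes the gradient $\sum_{p_j\in P_i}2\omega_j(n_j-N_i)$, observes that it vanishes at the weighted mean $N_i=\frac{\sum_{p_\ell\in P_i}\omega_\ell n_\ell}{\sum_{p_\ell\in P_i}\omega_\ell}$, and appeals to convexity to conclude this is the global minimizer---an \emph{unconstrained} argument. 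You instead expand $\|n_j-N_i\|_2^2$ using $\|N_i\|_2=1$ and apply Cauchy--Schwarz to show that the \emph{unit-length} normalization $\frac{\sum_{p_j\in P_i}\omega_j n_j}{\|\sum_{p_j\in P_i}\omega_j n_j\|_2}$ is the global minimizer over the unit sphere. Your route is arguably the more faithful one: the algorithm's update rule in Section~\ref{sec:TheVSAProcedureOnPointSets} produces exactly that unit vector, not the weighted mean the paper's critical-point computation identifies (the two agree in direction but not in length, and the energy is not invariant under rescaling of $N_i$), so the paper's proof establishes optimality of a vector the algorithm never uses, while yours establishes optimality of the actual update among all admissible (unit) normals. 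What the paper's argument buys in exchange is brevity and independence from the normalization convention. Your closing remarks---non-vanishing of the weighted normal sum, the distinction between weak descent of the update and strict descent of an executed switch, and finiteness of partitions forcing termination---are all consistent with how the paper completes the convergence argument after the theorem.
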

\begin{proof}
	Concerning the proxy update step, consider
	\begin{align*}
	\nabla E(\{(P_i,N_i)\mid i\in[m]\}) &= \nabla\sum_{i=1}^m\mathcal{L}^{2,1}(P_i,N_i)\\
	&= \sum_{i=1}^m\sum_{p_j\in P_i}\nabla\omega_j\left\|n_j-N_i\right\|_2^2\\
	&= \sum_{i=1}^{m}\sum_{p_j\in P_i}2\omega_j(n_j-N_i).
	\end{align*}
	Setting $N_i=\frac{\sum_{p_\ell\in P_i}\omega_\ell n_\ell}{\sum_{p_\ell\in P_i}\omega_\ell}$, we obtain
	\begin{align*}
	\sum_{p_j\in P_i}2\omega_j(n_j-N_i) 
	&= \sum_{p_j\in P_i}2\omega_jn_j-\sum_{p_j\in P_i}2\omega_j\left(\frac{\sum_{p_\ell\in P_i}\omega_\ell n_\ell}{\sum_{p_\ell\in P_i}\omega_\ell}\right)\\
	&= \sum_{p_j\in P_i}2\omega_jn_j - \left(\frac{\sum_{p_\ell\in P_i}2\omega_\ell n_\ell}{\sum_{p_\ell\in P_i}\omega_\ell}\right)\cdot\sum_{p_j\in P_i}\omega_j\\
	&= \sum_{p_j\in P_i}2\omega_jn_j - \sum_{p_\ell\in P_i}2\omega_\ell n_\ell =0.
	\end{align*}
	Thus, at the chosen updated proxy normal, the energy reaches a (local) minimum. As the energy is convex as sum of norms, which are convex, the found minimum is indeed its global minimum for the current choice of segmentation.
	
	Concerning the switch step, only those points are reassigned which reduce the value of error measure~(\ref{equ:L21energyPoints}). Thus, trivially, after a switch operation the error is smaller.
\end{proof}
\noindent Finally, we note that there are only finitely many ways to partition the~$\aleph$ points of the point set~$P$ into~$m$ subsets. This fact, together with Theorem~\ref{the:ConvergenceOfVSA} proves the convergence of our modified VSA procedure.

\edit{Consider the application of this alternative VSA version to the setup in Figure~\ref{fig:ConvergenceExampleSetup}. After a first flood, which would still result in the proxies shown in Figure~\ref{fig:ConvergenceExample1Flood}, the only possible switch could be performed at the border between the blue and the yellow region. However, a switch would already lead to an increase of the energy functional. Thus, the proxies remain as they are after the first flood and the example converges immediately.}

\edit{By replacing \emph{seed} and \emph{flood} with the \emph{switch} operation, we can ensure convergence of the algorithm.
While this result is theoretically pleasing, it is not necessarily of practical value.
Finding an ideal pair of points for a switch operation requires to iterate at least over all points on the border of proxy regions.
Depending on the number of proxies and on the shape of the geometry, one such switch can reach the same time complexity as a flood operation while only altering a single point's proxy assignment. Thus, in practice, utilizing the \emph{switch} operation causes a significantly longer runtime as trade-off to the guaranteed convergence.}

\edit{Furthermore, iterated application of the switch operation can tear a proxy apart, see Figure~\ref{fig:switch}.
A converged state of the algorithm might therefore include proxy regions that are not connected.
In order to have a sensible result, a final step has to be included that re-interprets connected regions as proxies and that might increase the number of proxies doing so.
However, a disconnectedness only arises if another proxy better reflects the local shape.
Thus, a corresponding higher number of connected proxy regions is desirable in order to faithfully approximate the input geometry.}

\begin{figure}
	\centering
	\captionsetup[subfigure]{justification=centering}
	\begin{subfigure}[t]{0.49\textwidth}
		\includegraphics[width=1.\textwidth]{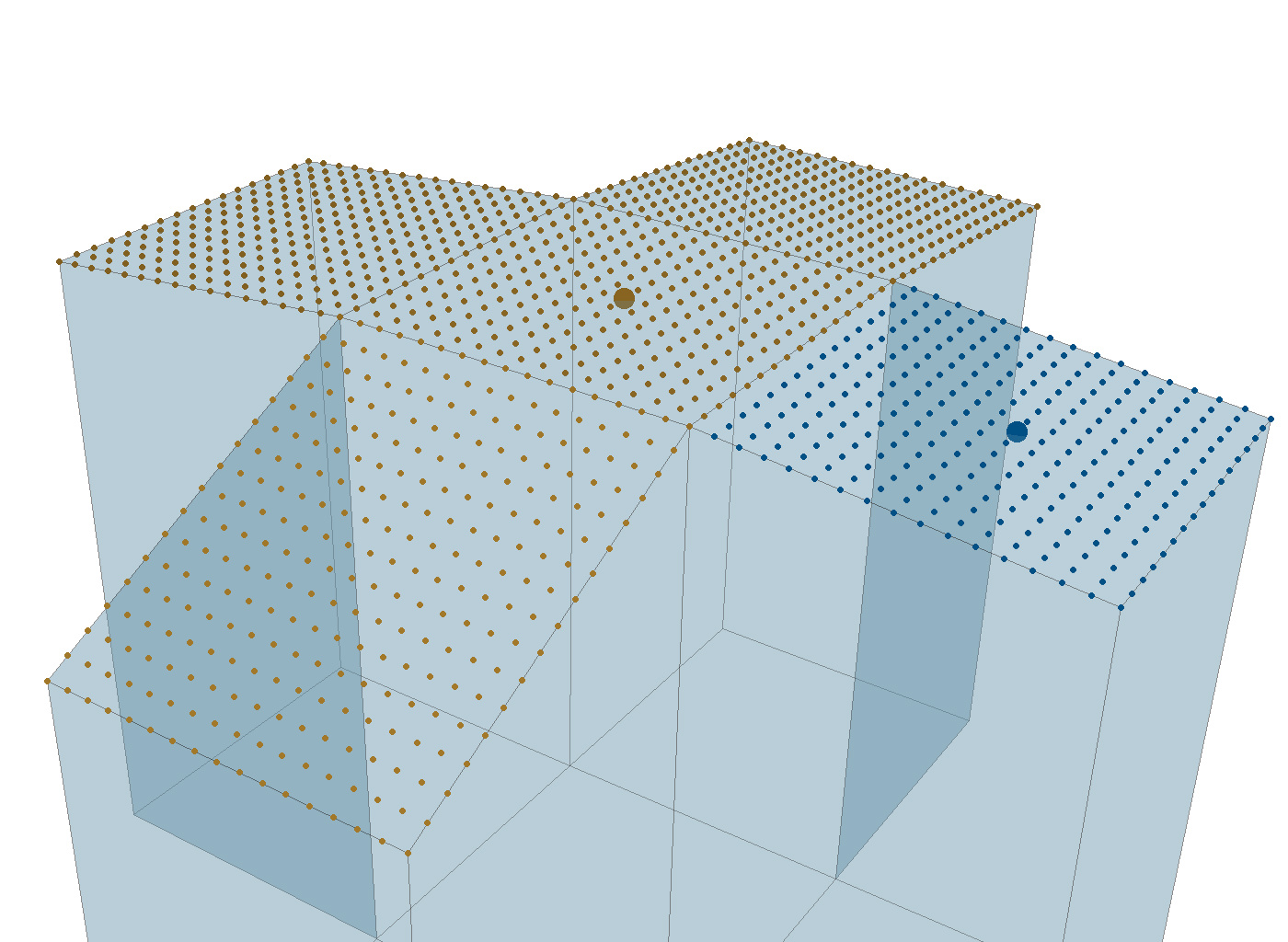}
		\caption{A geometry after initial selection of seeds as indicated and flooding.}
		\label{fig:switchFlood}
	\end{subfigure}
	\hfill
	\begin{subfigure}[t]{0.49\textwidth}
		\includegraphics[width=1.\textwidth]{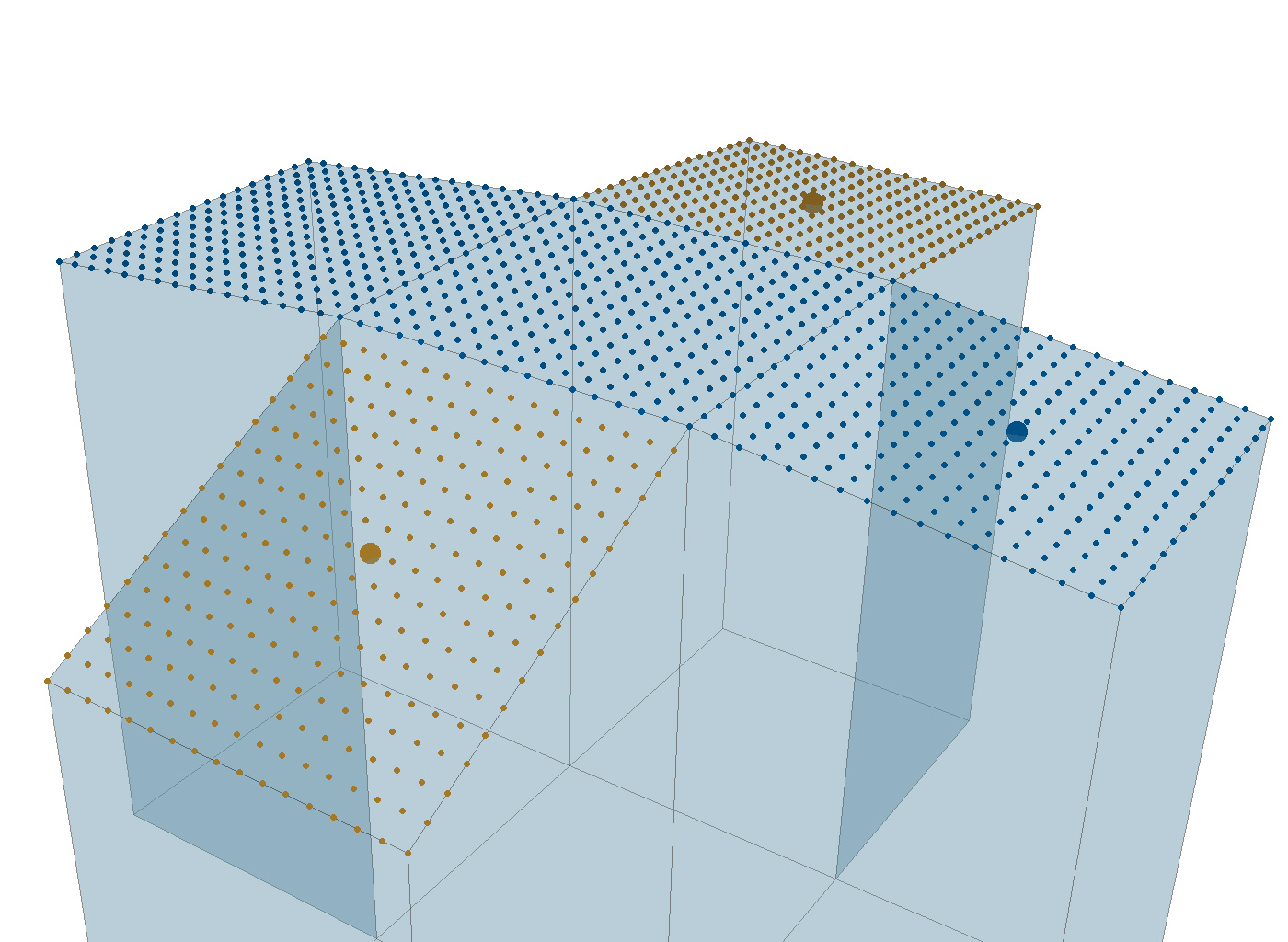}
		\caption{The same geometry after several switch operations. The blue proxy split the other proxy in two components.}
		\label{fig:switchSwitches}
	\end{subfigure}
	\caption{A proxy being torn apart by another proxy under the repeated application of the switch operation.}
	\label{fig:switch}
\end{figure}

\edit{The presented \emph{switch} operation provides one possible way to obtain guaranteed convergence. 
It remains as open question whether another operation or alteration of the VSA procedure provides the same result while coming with a lower runtime.}


\subsection{User controlled Level of Detail}
\label{sec:UserControllerLevelOfDetail}

\noindent \edit{The requirements of proper seed placement and prescribed seed number naturally demand for a variable proxy-treatment in terms of \emph{splits} and \emph{merges}. Both concepts were introduced in~\cite{cohen2004variational} as means for manual adjustments by the user. For meshes, these two operations are incorporated into the pipeline described in~\cite{yan2006quadraicSurfExVSA}. In the following, we propose a translation to point sets. With both operations,} we aim at adaptability of the constructed flat pieces towards user input. 
That is, the user should be able to control the level of detail obtained from the flat regions. 
\edit{However, in contrast to~\cite{cohen2004variational}, this control should be realized via a single input parameter instead of a time-consuming manual interaction with the modeling process.}
For this, we use a user-given parameter~${\eta\in\mathbb{R}_{\geq0}}$ which controls the maximum deviation of a subset~$P_i$ from its flat approximation. It can be thought of as controlling the maximum bending of a segment. This parameter is used in the following two steps:

\begin{enumerate}
	\item[(a)] \emph{Split}: Given a subset~${P_i\subset P}$ such that~${\mathcal{L}^{2,1}(P_i,N_i)>\eta}$. We use weighted principal component analysis by~\cite{harris2011geographically} to compute the most spread direction of~$P_i$. The set~$P_i$ is then split at the center of this direction into two new sets~${P_i=P_i^1\dot{\cup}P_i^2}$. The new normals are chosen as~${N_i^1=\sum_{p_j\in P_i^1}\frac{\omega_j n_j}{\sum_{p_j\in P_i^1}\omega_j}}$ and~$N_i^2$ respectively. The new centers~$C_i^1$ and~$C_i^2$ are then placed at those points of~$P_i^1$ and~$P_i^2$ that have least varying normals from~$N_i^1$ and~$N_i^2$ respectively.\\
	Note that the reasoning of Theorem~\ref{the:ConvergenceOfVSA} holds for this case, too. Thus, the procedure outlined above, with an additional split step does continue to converge.
	\item[(b)] \emph{Merge}: Consider a pair~$P_i$,~$P_j$ of neighboring subset with their respective normals~$N_i$,~$N_j$. If the subset~${P'=P_i\cup P_j}$ with normal
	\begin{align*}
		N'= \left\|\frac{|P_i|\cdot N_i+|P_j|N_j}{|P_i|+|P_j|}\right\|^{-1} \frac{|P_i|\cdot N_i+|P_j|N_j}{|P_i|+|P_j|}
	\end{align*}
	achieves an Energy~(\ref{equ:L21energyPoints}) strictly less than~$\eta$, the two subsets are replaced by their union~$P'$, with normal~$N'$ and a chosen center~${C'\in P'}$ with its normal least deviating from~$N'$.\\
	Note that we could allow only those pairs of neighboring regions to merge such that
	\begin{align*}
		\mathcal{L}^{2,1}(P_i,N_i)+\mathcal{L}^{2,1}(P_j,N_j)\geq\mathcal{L}^{2,1}(P',N').
	\end{align*} 
	Then, the energy would not increase and termination of the algorithm would be guaranteed by Theorem~\ref{the:ConvergenceOfVSA}. However, this would result in neighboring regions not observing the user-given~$\eta$ threshold. Therefore, we accept an increase of the global energy in favor of a better region layout\footnote{Note that the equation for $N'$ in this description of the \textit{merge} procedure deviates from the equation given in the published version of the article in \textit{Computer Aided Geometric Design 2020, Vol. 80}. The formulation given here is more general and works in particular if~$P_i$ and~$P_j$ are of different sizes. Furthermore, in the published version, the inequality on~$\mathcal{L}^{2,1}$ was given in the wrong direction, which is corrected here.}.
\end{enumerate}
Both operations alter the number~$m$ of proxies. Thereby, a significant disadvantage of the algorithm of~\cite{lloyd1982least} is eliminated as the user does not have to choose~$m$ a priori. It is replaced by the user's choice of~$\eta$, providing a semantic guarantee on the regions being built by the algorithm. The user can prescribe a value of~$\eta$ based on the curvature and number of points within a proxy. \edit{See~\ref{app:InterpretationOfEta} for a more detailed interpretation of~$\eta$.}

The possible presence of noise in the point set~$P$ gives yet another reason to refute Energy~(\ref{equ:L2energy}). For points distributed around the~$xy$-plane, with normals~$(0,0,1)^T$ and just slight deviation from the plane, this energy would create larger values for a growing number of points, while the Energy~(\ref{equ:L21energyPoints}) does not suffer from this. Hence, with the chosen energy, noise on the point positions is handled more robustly.

In the merge process outlined above, we asked for two neighboring regions. However, we have not defined any relation on the regions yet. In the meshed case discussed in Section~\ref{sec:TheVSAProcedureForMeshes}, two regions are neighbors if and only if they share an edge in the mesh. In the context of point sets, we cannot rely on this, thus we have used the following approach. \edit{For every proxy and each of its points, we query~$ k $ of the point's nearest neighbors and use the distance to the farthest of them for a geometric neighborhood determination. From all the neighbors gathered that way, we ask for their proxy assignment. If the current center point is assigned to a different proxy than its neighbors, we consider the two proxies to be neighbors.}
This finishes the whole pipeline, including the additional two steps \emph{merge} and \emph{split}. See Figure~\ref{fig:pipeline} for an illustration of the complete pipeline.

\begin{figure}
	\centering
	\includegraphics[width=1.\textwidth]{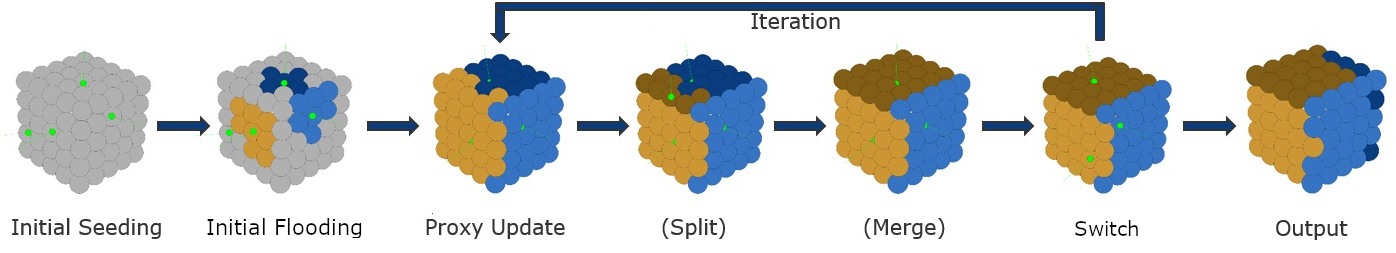}
	\caption{The whole pipeline contains an initial random seed selection and an initial flooding. From there, a proxy update, one or more optional splits and/or merges, and a switch are iterated until no further switches can be applied. Afterwards, we deduce a simplified model according to the proxies, which can also be considered as a simplified surface reconstruction from the initial point set.}
	\label{fig:pipeline}
\end{figure}


\section{Simplification}
\label{sec:Simplification}

\noindent \edit{We will now investigate the creation of a simplified mesh based on the segmentation generated before.} Both works of~\cite{cohen2004variational} and~\cite{lee2016feature} present simplified meshed geometries with~$m$ faces, one representing each proxy. The work of~\cite{yan2006quadraicSurfExVSA} also presents simplified meshes, utilizing their proxy quadrics. However, the approaches of~\cite{cohen2004variational} and~\cite{yan2006quadraicSurfExVSA} are not suitable for our context as they work on meshes. The authors of~\cite{lee2016feature} do not elaborate on the computation of their meshes. They only state that
\begin{quote}
	\emph{($\ldots$) a polygonal mesh is easily generated by computing intersections of proxy planes of neighboring clusters of data points.}
\end{quote}
\edit{In the following, we will see that only simple cases allow for this approach while the general case is more involved. First, we will discuss the creation of vertices for the simplified mesh (Section~\ref{sec:VerticesForASimplifiedMesh}). Subsequently, we will connect these vertices to faces in order to obtain the complete mesh (Section~\ref{sec:SimplificationFaces}). In both section we also address corresponding challenges as well as possible solutions.}

\subsection{Vertices for a Simplified Mesh}
\label{sec:VerticesForASimplifiedMesh}

\noindent \edit{The intuitive way to determine simplified mesh vertices is the intersection of neighboring proxies, as used by~\cite{lee2016feature}. In the following, we will use the notion of neighborhood for proxies as introduced at the end of Section~\ref{sec:UserControllerLevelOfDetail}.}

\paragraph{Intersecting Planes} 
\edit{A first naive solution for the creation of vertices for the simplified mesh is to consider the intersection of neighboring proxies and the construction of vertices in these intersection points. In the general case, where~$q>3$ proxy planes meet, we cannot simply consider the intersection as it will be mostly empty.
	
We call such a situation obtained via the proxy-neighborhood relation an \emph{$q$-tuple}. 
In detail, let~$I$ be the index set labeling the proxies given by subset~${P_i\subseteq P}$, proxy center~$C_i$, and proxy normal~$N_i$. 
Further, consider all neighborhood relations~${(i,j)\in I\times I}$ between proxies.
A~\emph{$q$-tuple} is a subset~${\{i_1,\ldots,i_q\}}$ of~$I$ such that the proxies~$i_j$ and~$i_\ell$ are neighbors for all~${i_j,i_\ell\in\{i_1,\ldots,i_q\}}$ where~${i_j\neq i_\ell}$.
Enumerate all such $q$-tuples~$I_j$ by an index set~$J$ in a way, s.t.\@ if there are~${j,j'\in J}$ with~${I_j\subsetneq I_{j'}}$, then the larger set~$I_{j'}$ is kept and~$j$ is not stored in~$J$.
These tuples hold inclusion maximal candidates for intersection points with proxy indices contributing to the intersection. 

Now, for the case of~$q>3$, we select for each tuple three indices at random, intersect them, and make the resulting vertex known to all proxy members of the tuple. 
As this might cause degenerate faces in the face creation stage---as the vertex does not lie within all of the proxies it is associated to---,we use a triangulation of all created faces (see Section~\ref{sec:SimplificationFaces}) to obtain triangles, which are planar.}


\paragraph{Intersecting Point Optimization}
\label{sec:VariationalTangentPlaneIntersection}

\noindent \edit{A second, more involved solution for the creation of simplified mesh vertices is based on optimization. 
The intersection of more than three planes is numerically unstable as discussed above. 
In the work of~\cite{zimmer2012variational}, the authors turn to a variational approach,} start from a triangle mesh, and aim at computing the intersection points~$x_j$ of the vertex tangent planes of all triangles. 
Thus, exactly three tangent planes---corresponding to each of the three vertices~$v_i$ of a triangle---contribute to an intersection point. 
Denoting the normal at~$v_i$ by~$n_i$, they solve the following minimization problem
\begin{align}
\label{equ:MinimizationZimmer}
\begin{split}
\text{minimize:}\quad & \sum_{t_j}\left(\sum_{v_i\in t_j}\left\|x_j-v_i\right\|_2^2\right)\\
\text{subject to:}\quad & n_i^T(v_i-x_j) = 0\quad \forall t_j,\quad \forall v_i\in t_j\\
& \left\|n_i\right\|_2^2=1\quad \forall v_i
\end{split}
\end{align}
where the normals are variables in the minimization. Note that the original normals at the vertices are not taken into account at all during the minimization. The requirement of unit-length normals is necessary, however, as otherwise~$n_i=0$ would trivially satisfy all conditions.

We generalize this approach in the following way to our setup. \edit{First of all, we use the concept of and notation for \emph{$ q $-tuples} introduced in the previous paragraph.} Then, we consider the following energy
\begin{align}
\label{equ:TangentialPlaneEnergy}
F(\{ x_1, \ldots, x_j \}) = \sum_{j \in J} \left( \sum_{i \in I_j} \left\| x_j - C_i \right\|_2^2 \right) + \sum_{i \in I} \tilde{w}_i \left\| N_i - \tilde{n}_i \right\|_2^2,
\end{align}
with sought-for intersection points~$x_j$ for each maximal tuple~$I_j$, known proxy-centers~$C_i$, weighting terms~$ {\tilde{w}_i \in \mathbb{R}_{\geq 0}} $, unknown normal deviations~$ \tilde{n}_i $, and known proxy-normals~$N_i$ for all proxies~${i\in I}$. Ultimately, we want to solve
\begin{align}
\label{equ:MinimizationTangentialPlane}
\begin{split}
\text{minimize:}\quad & F(\lbrace x_1, \ldots, x_j \rbrace) \\
\text{subject to:}\quad & \tilde{n}_i^T (x_j - C_i) = 0 \quad \forall j \in J \quad \forall i \in I_j\\
& \left\| \tilde{n}_i \right\|_2^2 = 1 \quad \forall i \in I.
\end{split}
\end{align}
This generalizes the problem of~\cite{zimmer2012variational} as stated in Equation~(\ref{equ:MinimizationZimmer}) in several ways. First, we allow for more than three, namely for an arbitrary number of planes to intersect. This arises already at a simple geometry like the octahedron, which has valence~$4$ vertices, see Figure~\ref{fig:zimmerIllustration} for an illustration. Second, we do allow the normals~$\tilde{n}_i$ to deviate from the proxy normals~$N_i$, but a large deviation is punished, where the severity can be steered by the weights~$\tilde{w}_i$.

\edit{In contrast to the first naive solution, this approach guarantees all vertices of the mesh to lie within the proxies that they are derived from. That is, if the optimization problem~(\ref{equ:MinimizationTangentialPlane}) yields a feasible point, after correcting the proxy normals from~$n_i$ to~$\tilde{n}_i$ all vertices associated to a planar proxy lie completely within the corrected proxy plane.}


\subsection{Faces for a Simplified Mesh}
\label{sec:SimplificationFaces}

\noindent \edit{After creating the mesh vertices, we need to connect them in order to generate face for the mesh. 
The general idea is to represent every proxy region with a single star-convex face. 
All vertices associated to a proxy are sorted around the barycenter of the proxy w.r.t.\@ an arbitrary reference direction. 
This yields correct results, when the barycenter of the proxy is also a star-convex center point, see Figure~\ref{fig:starConvex}. 
If we sort the vertices with a non-star-convex center, they are connected in wrong order and the resulting faces will degenerate. 
This approach obviously fails, if a proxy represents a non-convex part of the geometry, see Figure~\ref{fig:torusExample}.}

\begin{figure}
	\centering
	\captionsetup[subfigure]{justification=centering}
	\begin{subfigure}[t]{0.32\textwidth}
		\includegraphics[width=1.\textwidth]{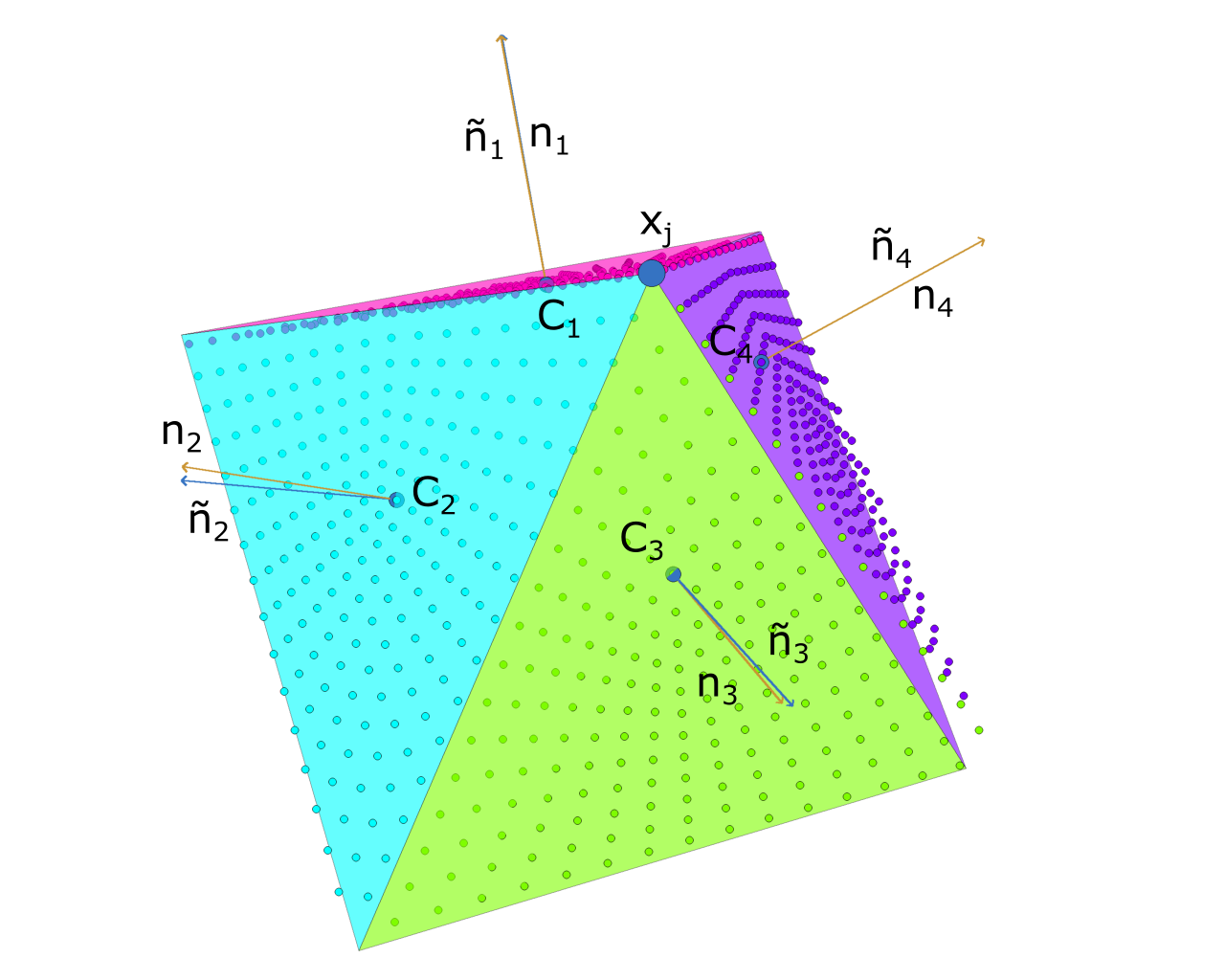}
		\caption{\edit{Intersection of more than three proxies}}
		\label{fig:zimmerIllustration}
	\end{subfigure}
	\hfill
	\begin{subfigure}[t]{0.32\textwidth}
		\includegraphics[width=1.\textwidth]{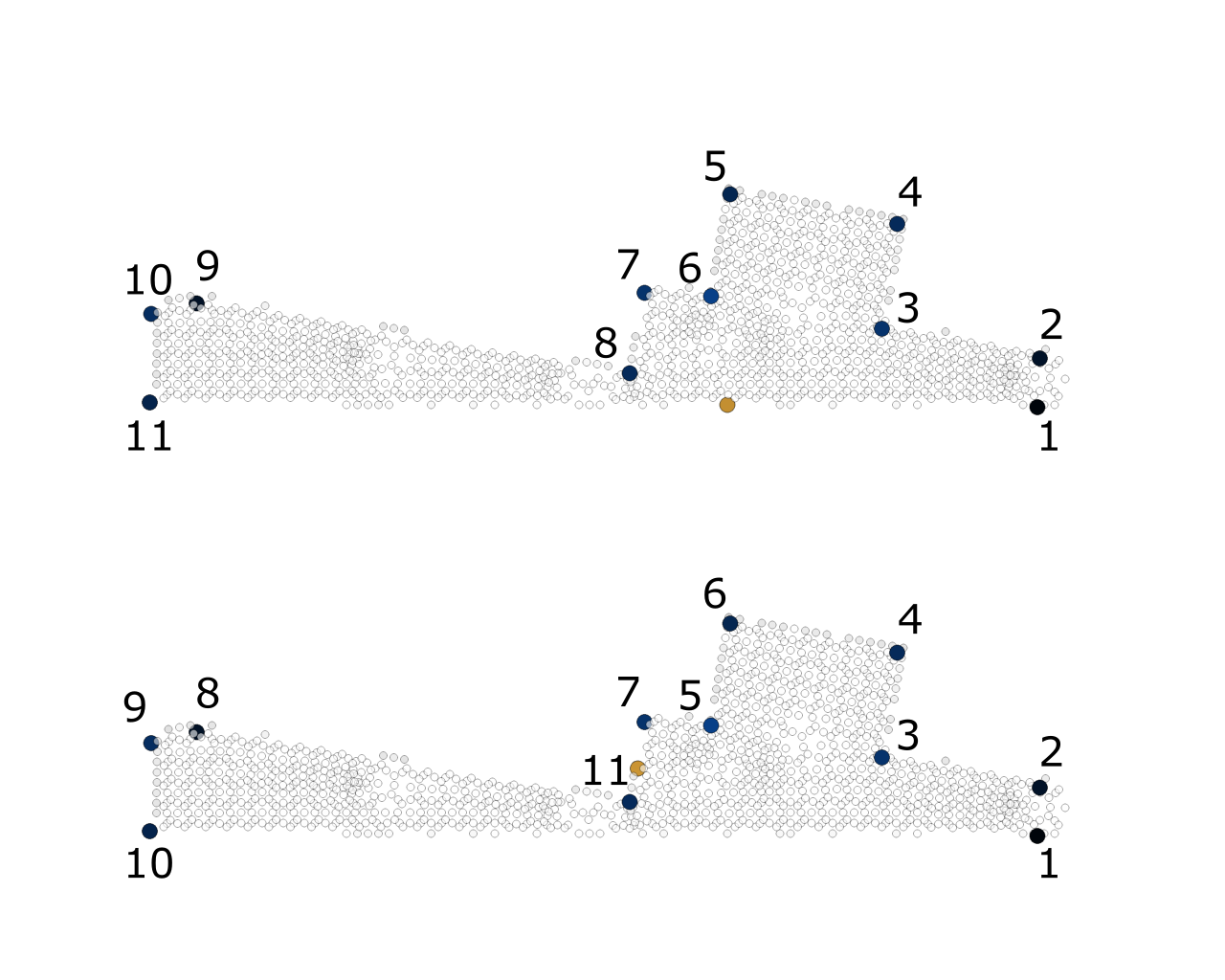}
		\caption{\edit{top: star-convex; bottom: barycenter}}
		\label{fig:starConvex}
	\end{subfigure}
	\hfill
	\begin{subfigure}[t]{0.32\textwidth}
		\includegraphics[width=1.\textwidth]{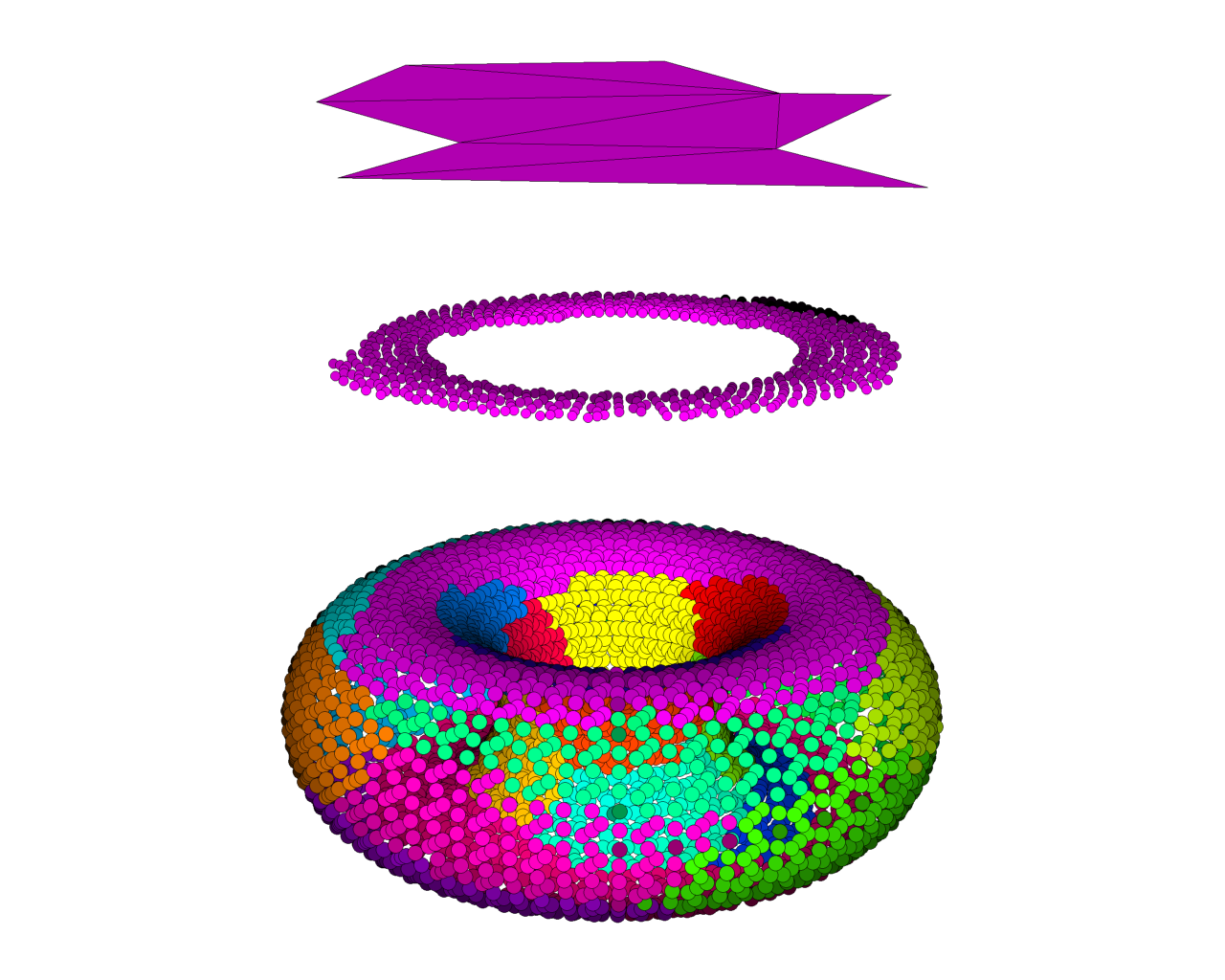}
		\caption{\edit{Reconstructing a non-convex proxy}}
		\label{fig:torusExample}
	\end{subfigure}
	\caption{\edit{(a) Illustration of point optimization with deviation allowance (represented by~$ \tilde{n} $) of $ n_1, \ldots, n_4 $ to find the optimal mesh vertex $ x_j $ satisfying the constraints $ \langle C_i - x_j, \tilde{n}_i \rangle = 0 $ and $ \left\| \tilde{n}_i \right\|_2^2 = 1 $ for $ i = 1, \ldots, 4 $. The $ C_i $ are the normal-corresponding proxy-centers and $ n_i $ the original proxy normals.
	(b) Star-convex proxy (front part of the fandisk, Figure~\ref{fig:fandisk}). Top shows an ordering of the vertices around a star-convex center point, bottom shows the corresponding ordering around the barycenter.\\ 
	(c) Bottom: Segmentation of the Torus with $ 24 $ final proxies. Middle: Non-convex proxy representing the upper part of the torus after projection of the points onto their proxy plane. Top: Resulting mesh face after connecting the vertices.}}
	\label{fig:sth}
\end{figure}


\section{Experimental Results}
\label{sec:results}

\edit{\noindent The following experimental section is divided into two parts. First, we evaluate parameter choices regarding the segmentation and provide a quantitative comparison of the segmentation results. In the second part, we discuss three aspects of the simplification algorithm on clean and noisy models.
	
In all our experiments, we processed models with quite uniform samplings. Hence, for simplicity,  we utilized equal weights~$ {\omega_{j} = 1} $ in Equation~(\ref{equ:L21energyPointsOneProxy}). We proceed similarly with the weight assignment in the optimization problem formulated in Equation~(\ref{equ:TangentialPlaneEnergy}) and set~$ {\tilde{w} = 1} $. 
We use neighborhoods to both propagate a proxy during the \emph{flood} step and establish neighborhood relations between the different proxies.
For the first purpose, we use a combinatorial $k$-nearest neighbors approach. 
When determining the proxy neighborhoods, we turn to a combination of the combinatorial and geometric approach based on the same~$ k $ (see end of Section~\ref{sec:UserControllerLevelOfDetail}). 
In all our experiments, we use $ k = 8 $. Deviations from the default parameters are indicated.}

\subsection{Segmentation}
\label{sec:SegmentationResults}

\edit{
\noindent For a large scale experiment, we chose~$ 600 $ models from the repository used in the work~\cite{hu2018tetrahedral}. For all these models, we used the mesh information to generate an oriented vertex normal field. Furthermore, we translated the models and scaled them uniformly to fit into the unit cube. Finally, we performed our experiments on the point cloud given by the mesh vertices, disregarding the connectivity information and the triangular faces.
	
We compared four different approaches. The first one was the segmentation algorithm of~\cite{rabbani2006segmentation} as implemented in the \emph{Point Cloud Library} of~\cite{rusu20113d}. As parameters, we turned to the ones described in the original paper, see~\cite{rabbani2006segmentation}. We will refer to this experiment by \expA. Second, with these results at hand, we took the final numbers of proxies given by \expA\ for each geometry. This number served as number of proxies to be sought by the variational shape approximation algorithm of~\cite{lee2016feature}. Here, no splits or merges are applied, thus we refer to this experiment by~\expB. Third, we took the total~$ \mathcal{L}^{2,1} $-error (Equation~(\ref{equ:L21energyPoints})) of each geometry, as produced by the \expA\ experiment, and divided it by its final number of proxies. This division provides an initial guess for a local, geometry-dependent value~$ \eta $. In this third experiment, we allowed splits as well as merges. Also, we started with an initial seed number of~$ {m=\aleph} $, i.e.\@ each point was a seed at the start. Because of the $ \eta $-threshold and merge-processes, the number of proxies reduced drastically over the run of the experiment. We will refer to this as \expC. Fourth and finally, without any priors, we set~$ {\eta = 25} $ and allowed splits and merges. Furthermore, we once more started with every point as a seed. The choice of~$\eta$ is motivated from previous experiments. We will refer to this fourth experiment as \expD. The terminology \emph{local} or \emph{global} indicates whether~$ \eta $ is chosen with respect to the geometry or globally for all geometries. Observe that the experiments~\expB\ and \expC\ are dependent on the results of \expA, while only \expD\ is independent.

We are interested in gaining insight into the relationship between the obtained proxy-number~$m$ and the quality of the induced flat proxy-regions. Besides the~$ \mathcal{L}^{2,1} $-measure of Equation~\ref{equ:L21energyPoints}, we focus on the mean squared error (MSE) caused by point-to-proxy-plane distances to evaluate the region quality. The MSE is given as
\begin{equation}
	MSE(\lbrace(P_i, N_i) \mid i = 1, \ldots, m\rbrace) = \frac{1}{\aleph} \sum_{p_j \in P_i} \left\|p_j-\pi(p_j)\right\|_2^2,
\end{equation}
where $ \pi(p_j) $ denotes the orthogonal projection of $ p_j $ onto its related proxy plane, given by normal~$ N_i $ and base point~$ C_i $.}

\edit{From the~$ 600 $ chosen models, we obtained~$ 499 $ that offered segmentation results in all four experiments. For~$27$ models, \expA\ was unable to provide a valid segmentation, because it assigned a zero proxy-normal to at least one region (for instance a region holding only two antipodal normals). These models were excluded for the subsequent experiments. The variational shape segmentation of~\cite{lee2016feature} did not report a complete segmentation on additional~$72$ models. Here, some points are not assigned to any proxies, because they cannot be reached from the proxy centers during a \emph{flood} when traversing the nearest neighbor graph. Increasing the parameter~$k$ alleviates this problem. Similar failures occurred on one additional model in experiments \expC\ and \expD\ respectively. Even though these experiments started with seed numbers equal to the geometries' points, they reduced the number of regions via \emph{merge} operations. Due to proxy updates and new seed selection, it is possible that seeds travel away from sparsely sampled areas, where they do not reach all formerly assigned points in the neighborhood graph during the next \emph{flood}. This reduced the number of models by a total of~$101$ failures to~$499$ feasible models. All reported experimental values are taken over this set of~$499$ models.

\begin{table}
	\centering
	\begin{tabular}{r||r r r r}
		& \expA & \expB & \expC & \expD \\
		\hline \hline
		min MSE & 2.75E-09 & 7.31E-07 & \textbf{6.95E-11} & 2.11E-06
		\\
		max MSE & 1.12E-01 & 2.98E-02 & 4.15E-02 & 1.04E-02\\
		avg MSE & 1.59E-02 & \textbf{2.88E-04} & \textbf{6.40E-04} & \textbf{4.50E-04}\\
		sd MSE & 2.12E-02 & 1.90E-03 & 2.47E-03 & 8.15E-04 \\
		\hline \hline
		min m & 1.00 & 1.00 & 2.00 & 1.00 \\
		max m & 1,103.00 & 1,103.00 & 393.00 & 135.00 \\
		avg m & 174.17 & 174.17 & \textbf{55.05} & \textbf{33.20} \\
		sd m & 163.10 & 163.10 & 58.41 & 20.69\\
	\end{tabular}
	\caption{Statistical evaluation of error MSE and proxy number m taken over all $ 499 $ geometries. We give the minimum, maximum, mean, and standard deviation.}
	\label{tab:errors}
\end{table}

For the following analysis, we turn to Table~\ref{tab:errors}. There, we give statistics on both the MSE as obtained from experiments on our model set. Regarding the average MSE over all experiments, we see that all three experiments---\expB, \expC, and \expD---outperform \expA\ by two orders of magnitude. A direct comparison between the MSE obtained for the models reveals that \expC\ and \expD\ outperform \expB\ in roughly~$8.5\%$ of all experiments. Note that the minimal MSE error obtained over all geometries is up to five orders of magnitude smaller for \expC\ when compared with the other experimental setups.

Aside from the MSE results, Table~\ref{tab:errors} also reports statistics on the number of proxies obtained by the different experiments over all geometries. Recall that we are not only interested in small error values, but also in representations of the geometry that reduce its complexity, i.e.\ that have a low number of proxies. Towards this end, it is remarkable that \expC\ and \expD\ attain MSE values comparable to those of \expB\ while only using~$31.6\%$ and~$19.1\%$ of the proxies on average, respectively. The close error results are especially of interest for \expD, as this runs without any dependency or information provided by \expA, as a global parameter is applied to all geometries equally. Hence, the assignment of points to proxies is on one hand optimized in terms of the MSE errors measure, while providing significantly fewer proxies on the other hand. Note that the lowest (and therefore optimal) MSE of $0$ is given for a segmentation in which every point is represented by its own proxy.
}

\begin{figure}
	\includegraphics[width=1.\textwidth]{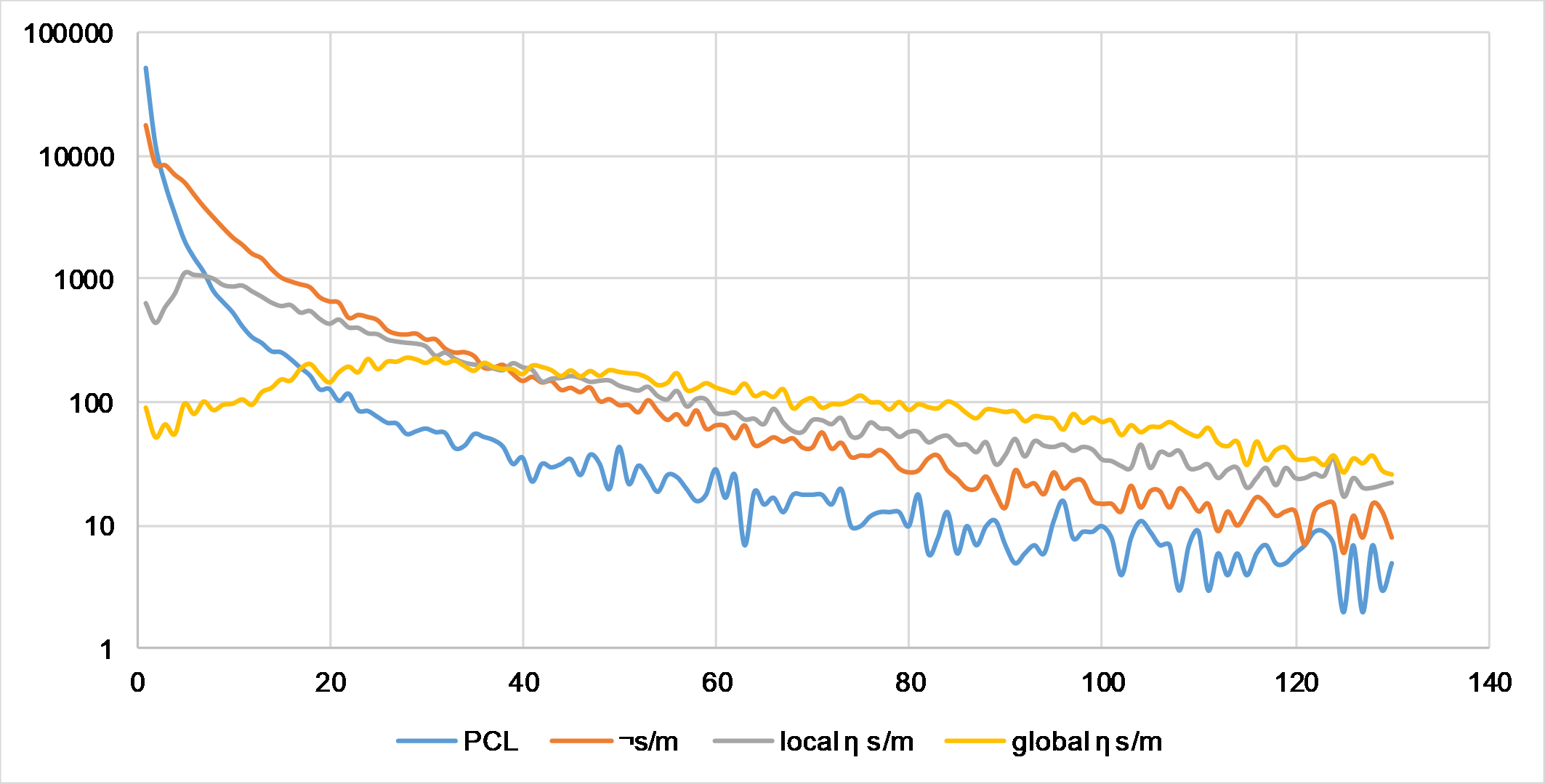}
	\caption{Histogram over all proxy-sizes up to $131$ among all $ 499 $ geometries for all four experiments. The upper bound of~$131$ is given by the sum of the mean ($11.74$) and corresponding standard deviation ($120.27$) regarding cluster sizes obtained from \expA. Note the logarithmic scale on the \mbox{$y$-axis}.}
	\label{fig:hist_ptsPerProxy}
\end{figure}

\edit{We proceed to further investigate the proxies obtained by the different experiments. In Figure~\ref{fig:hist_ptsPerProxy}, we show a histogram over the attained proxy-sizes taken over all geometries in the experiment. Note that the \mbox{$y$-axis} has a logarithmic scale. We show all proxy sizes up to~$131$, where this bound is given by the sum of the mean ($11.74$) and corresponding standard deviation ($120.27$) regarding proxy sizes obtained from \expA. We can see that both \expA\ and \expB\ create a significantly larger number of small proxies when compared with the segmentation results of \expC \ and \expD. In fact the average proxy sizes are~$11.74$ (\expA,\expB), $37.15$ (\expC), and $64.00$ (\expD). As segmentation---in our setup---should create few regions that still reflect the geometry attributes, extremely small regions as exposed by \expA\ and \expB\ are undesirable. The availability of splits and merges in \expC\ and \expD\ results in a bell-curve-like behavior in Figure~\ref{fig:hist_ptsPerProxy}, as both curves first increase and show a small descent with minor oscillations after their peaks. Hence, in the critical area of small sized proxies, the availability of splits and merges not only reduces their required number, but also balances their sizes, causing for more uniformly sized proxies.

To summarize the quantitative analysis of the segmentation part, we conclude:
\begin{itemize}
	\item The proposed method outperforms the segmentation approach of~\cite{rabbani2006segmentation} as well as variational shape approximation without splits and merges, as used by~\cite{lee2016feature} in regard of MSE.
	\item Without any knowledge of seed numbers or error values, a globally set~$ \eta $, availability of splits as well as merges, and treatment of all points as initial seeds provides segmentation results that have MSE comparable to~\cite{rabbani2006segmentation,lee2016feature} but a significantly reduced number of proxies.
	\item The availability of splits and merges not only optimizes for small proxy numbers, but also causes more uniform region sizes.
\end{itemize}} 

\subsection{Simplification}
\label{sec:SimplificationExperiments}

\edit{\noindent In the following, we present different experiments regarding the simplification as obtained from the proxy segmentation.
Each experiment addresses different aspects of the simplification pipeline.
First, we consider how the parameter~$\eta$ influences the obtained simplification (Section~\ref{sec:ThresholdEta}). 
Next, we turn to the Fandisk model, to discuss difficulties arising due to face generation (Section~\ref{sec:FandiskModel}). 
The last experiment deals with a noisy geometry and robustness of our algorithm (Section~\ref{sec:NoisyModel}).}

\edit{Throughout our experiments, in order to solve the minimization problem~(\ref{equ:MinimizationTangentialPlane}), we turn to the build-in solver of Matlab. Note that the minimization problem has a non-linear target function with non-linear constraints and can thus not be solved by any LP or even ILP solver. Hence, we follow the example from the Matlab manual~\cite{matlab2019solved}, which is supported by all versions newer than~\emph{R2019b}. The solver asks for starting points from which to run the optimization. We initialize the normals~$\tilde{n}_i$ by the proxy normals~$N_i$. As first guesses for any intersection point~$x_j$, we chose the barycenter of the centers~$C_i$ of those proxies that contribute to this intersection.}

\begin{figure}
	\centering
	\captionsetup[subfigure]{justification=centering}
	\begin{subfigure}[t]{0.25\textwidth}
		\includegraphics[width=1.\textwidth]{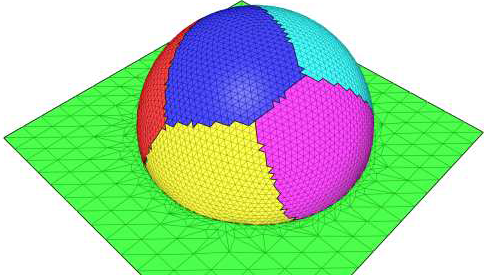}
		\caption{}
	\end{subfigure}
	\hfill
	\begin{subfigure}[t]{0.3\textwidth}
		\includegraphics[width=1.\textwidth]{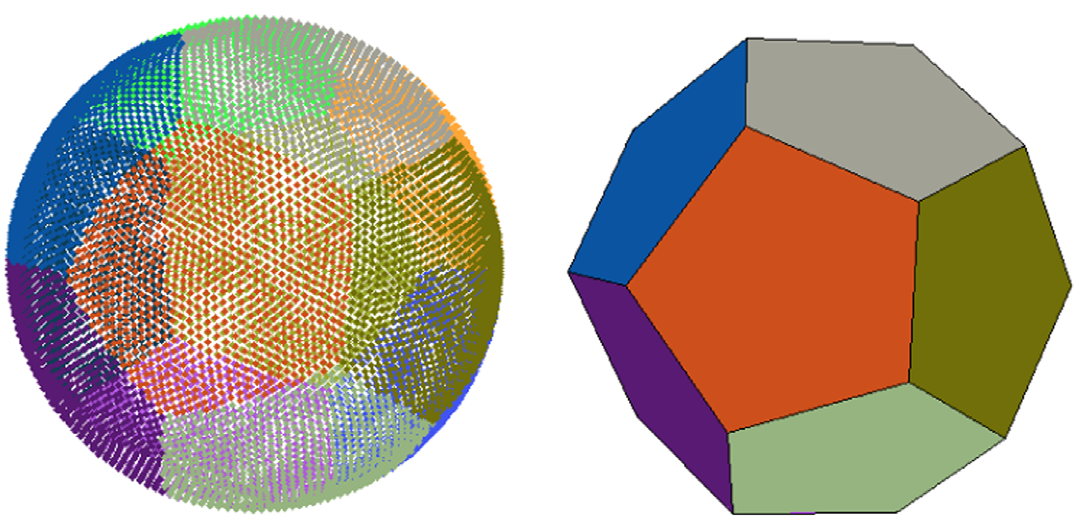}
		\caption{}
	\end{subfigure}
	\hfill
	\begin{subfigure}[t]{0.35\textwidth}
		\includegraphics[width=0.48\textwidth]{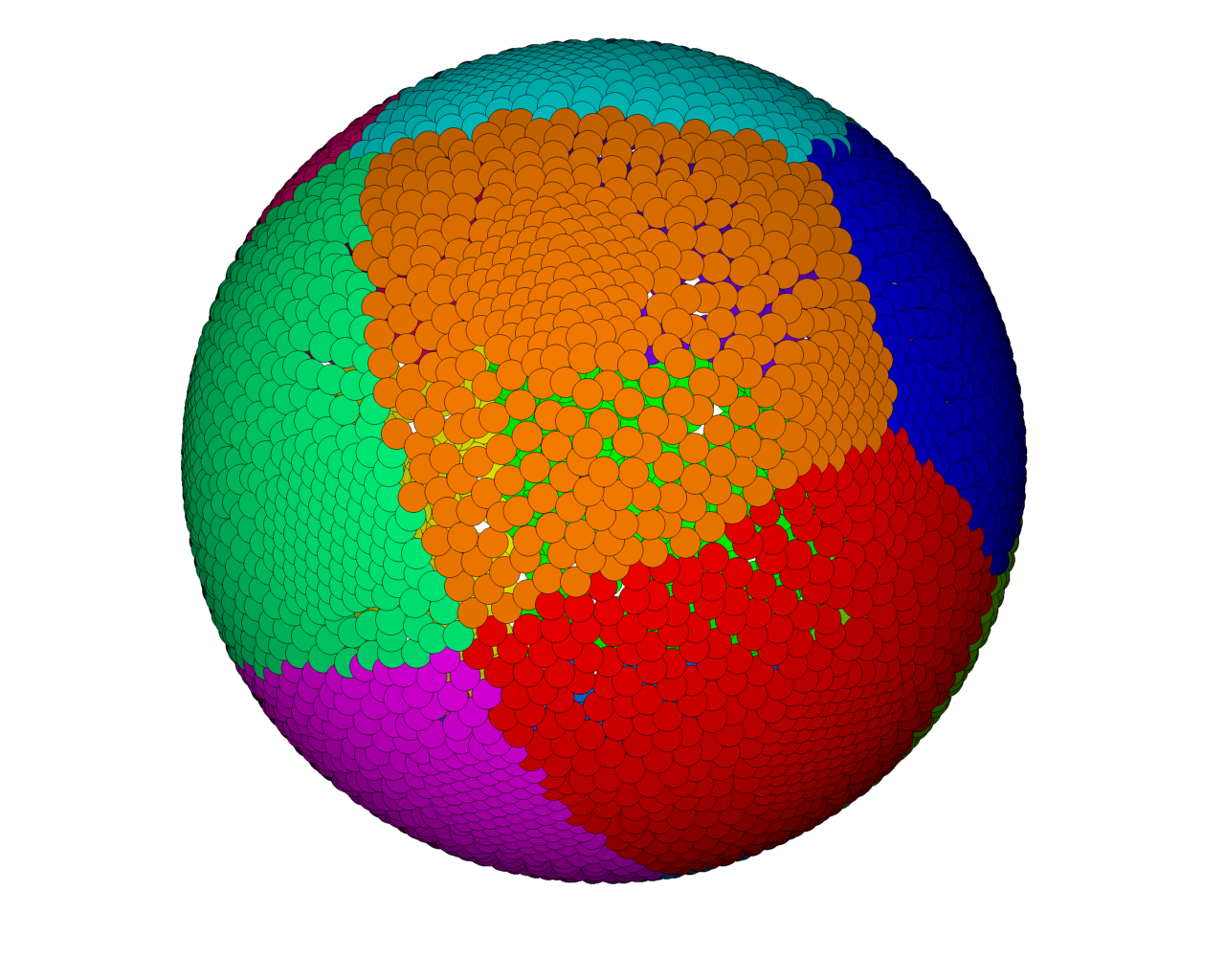}
		\includegraphics[width=0.48\textwidth]{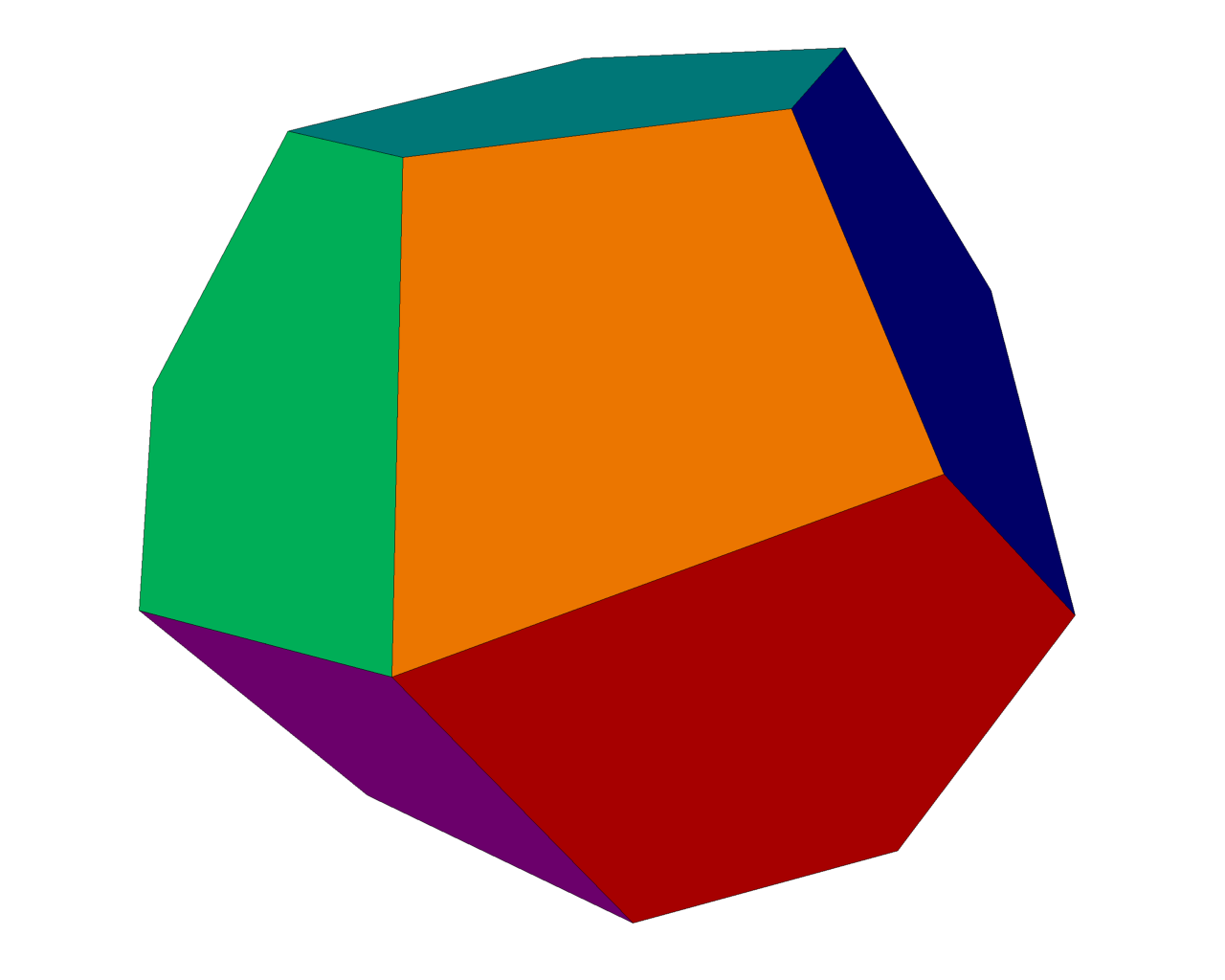}
		\caption{}
	\end{subfigure}
	\caption{A visual comparison of the output of (a)~\cite{cohen2004variational} showing a segmentation of the half-sphere into six proxies, (b)~\cite{lee2016feature} with a segmentation of the sphere into 12 proxies, and (c)~the results of our algorithm applied to the sphere deducing 12 proxies.}
	\label{fig:sphere}
\end{figure}

\subsubsection{Threshold $ \eta $-dependency on the Sphere Model}
\label{sec:ThresholdEta}
\noindent Our first simplification model is a sphere sampled with~${\aleph=5,122}$ points. We chose this model as it also appears in~\cite{cohen2004variational,lee2016feature}. By running our algorithm with~$12$ initial centers without split and merge we obtain a segmentation into~$12$ planar faces, shown together with the simplification done by optimization in Figure~\ref{fig:sphere} coupled with results of~\cite{cohen2004variational,lee2016feature}. 

\begin{figure}
	\centering
	\begin{subfigure}{0.18\textwidth}
		\includegraphics[width=1.\textwidth]{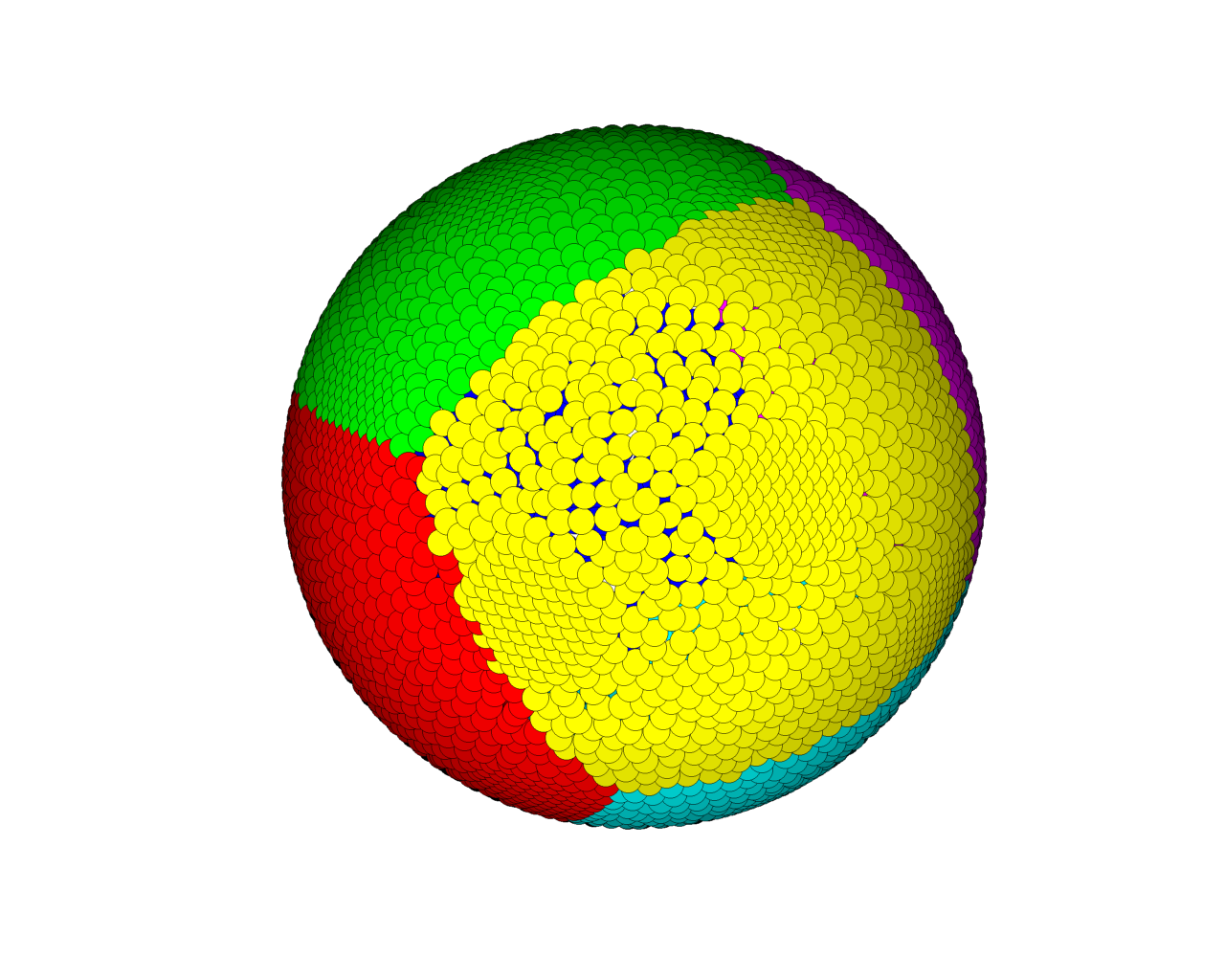}
		\includegraphics[width=1.\textwidth]{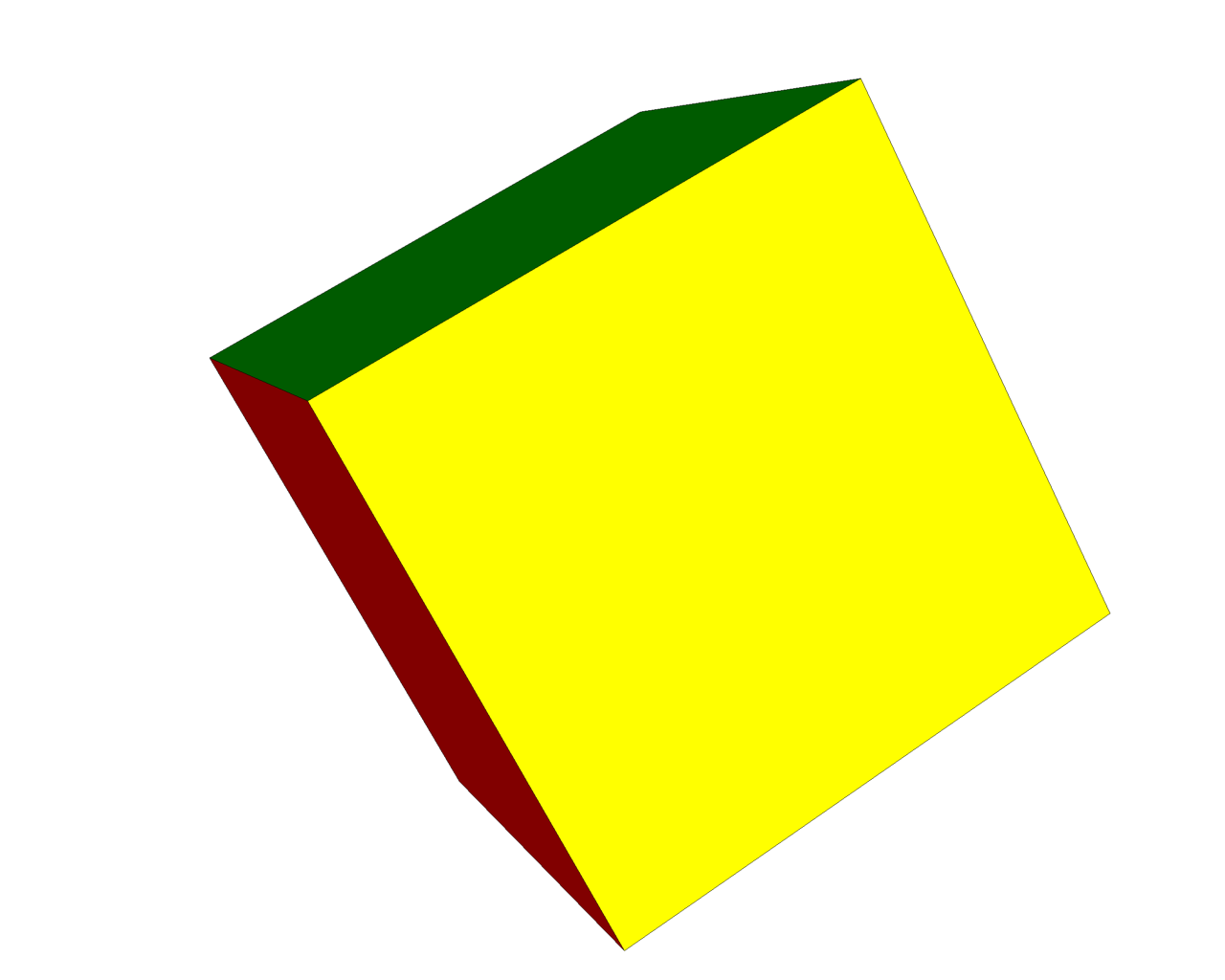}
		\includegraphics[width=1.\textwidth]{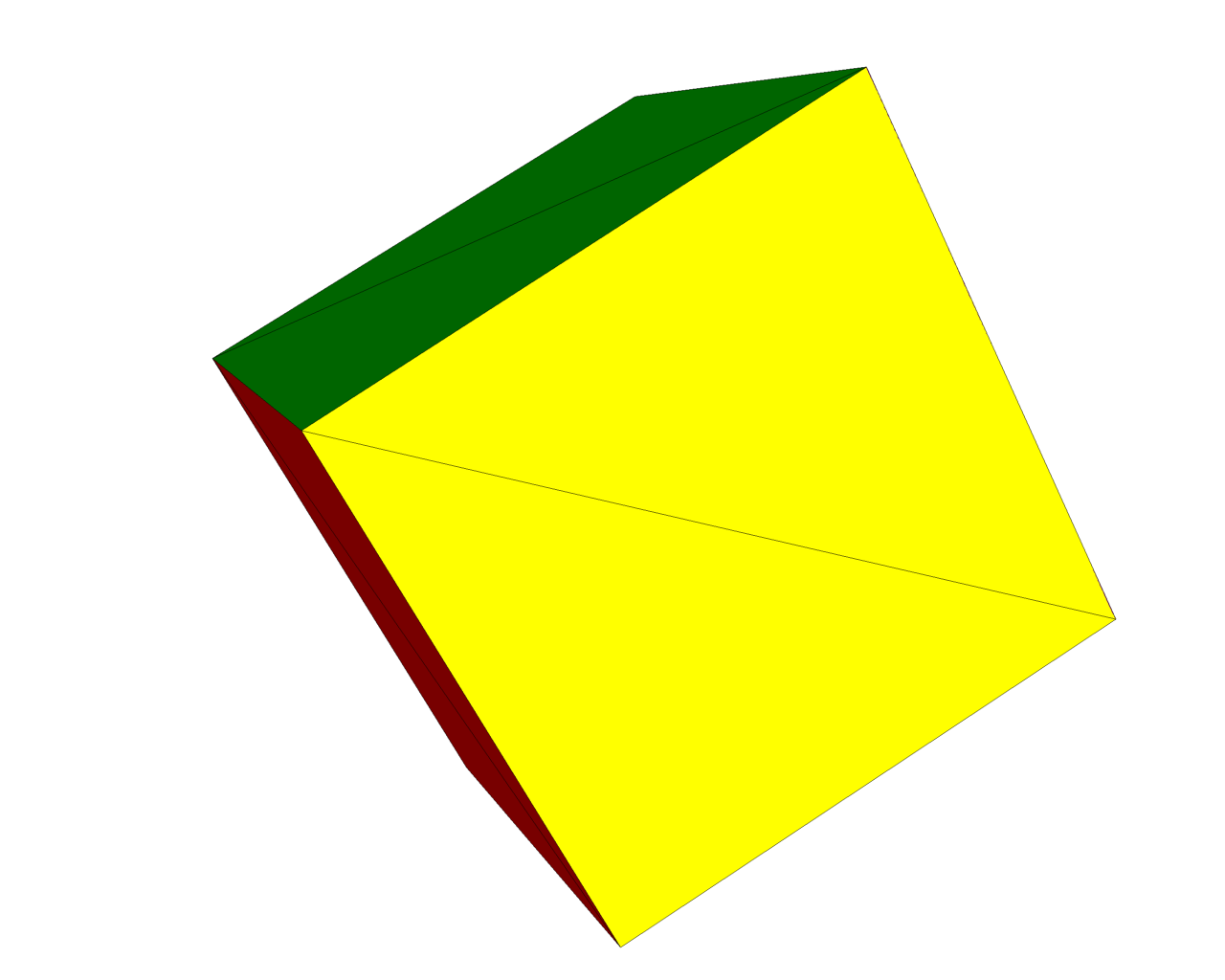}
		\caption*{$ \eta = 500 $\\ $ m = 6$}
	\end{subfigure}
	\begin{subfigure}{0.18\textwidth}
		\includegraphics[width=1.\textwidth]{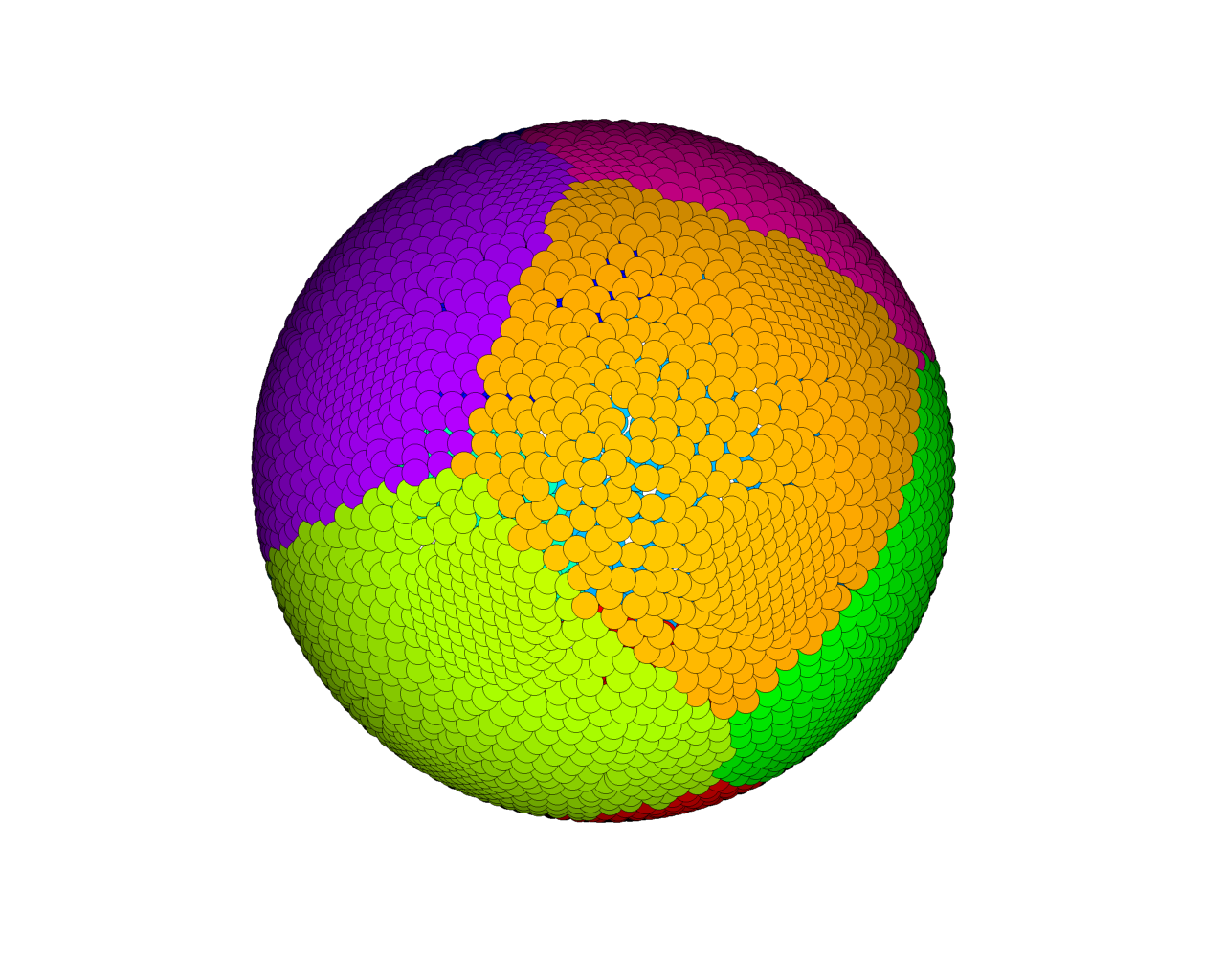}
		\includegraphics[width=1.\textwidth]{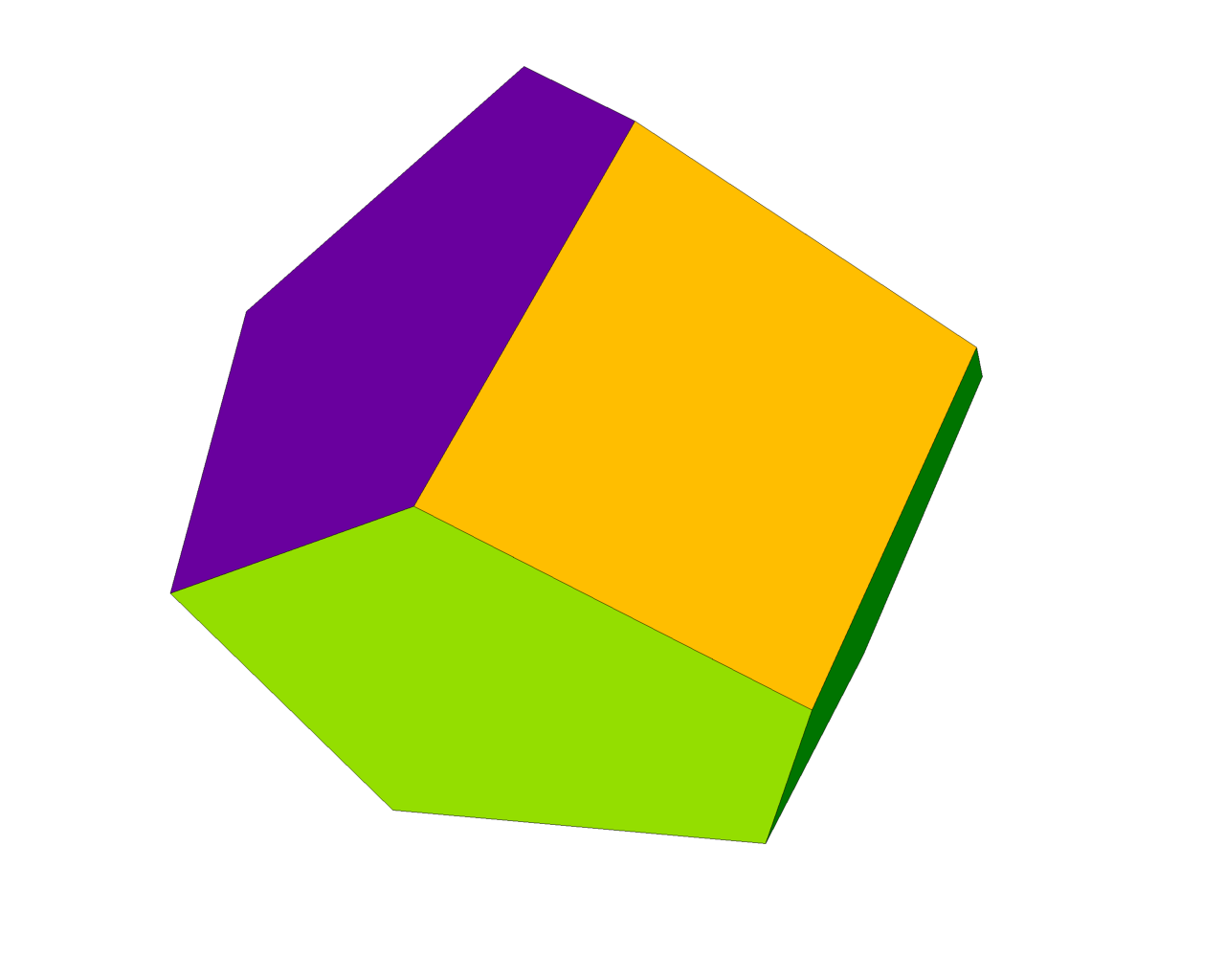}
		\includegraphics[width=1.\textwidth]{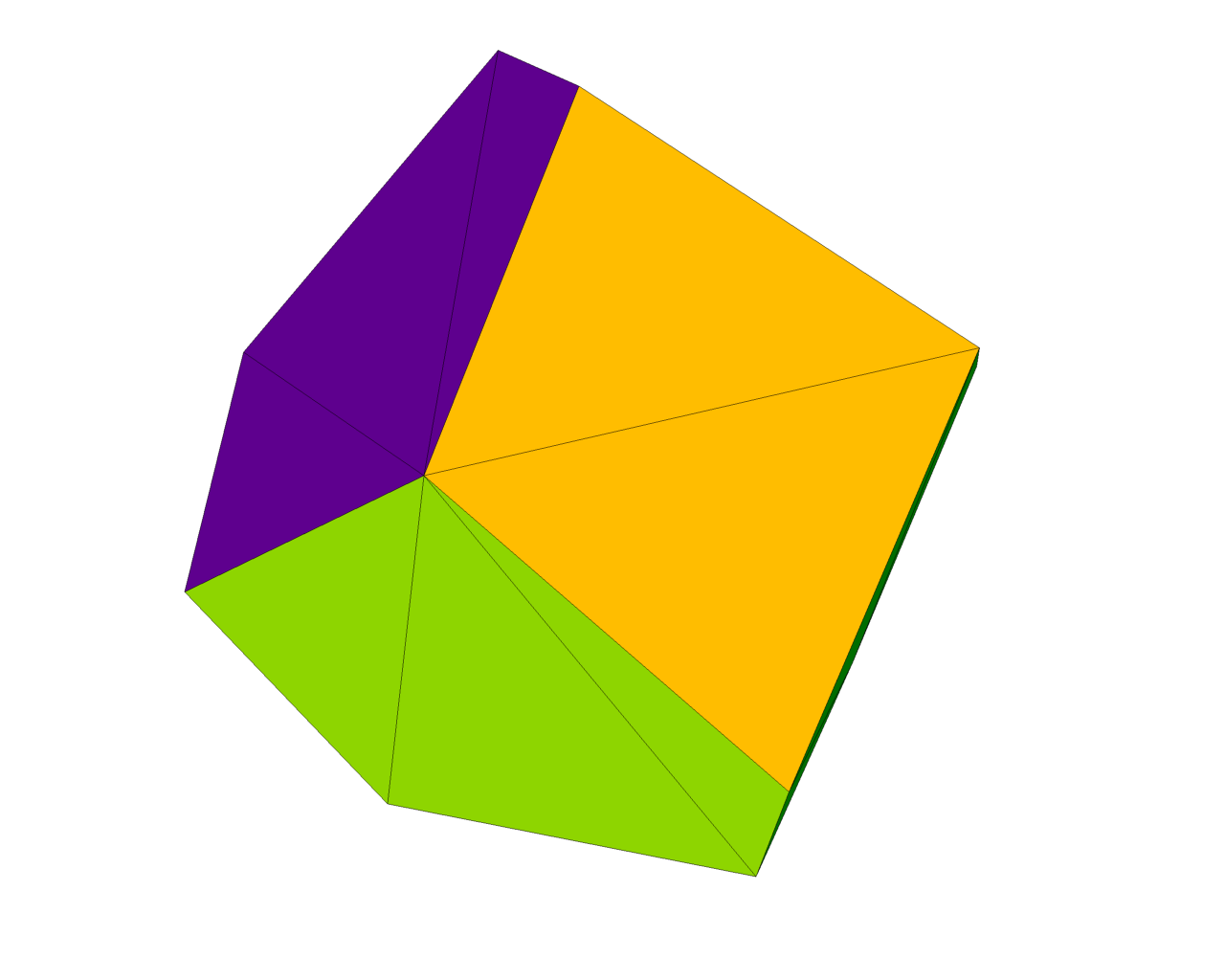}
		\caption*{$ \eta = 200 $\\ $ m = 9$}
	\end{subfigure}
	\begin{subfigure}{0.18\textwidth}
		\includegraphics[width=1.\textwidth]{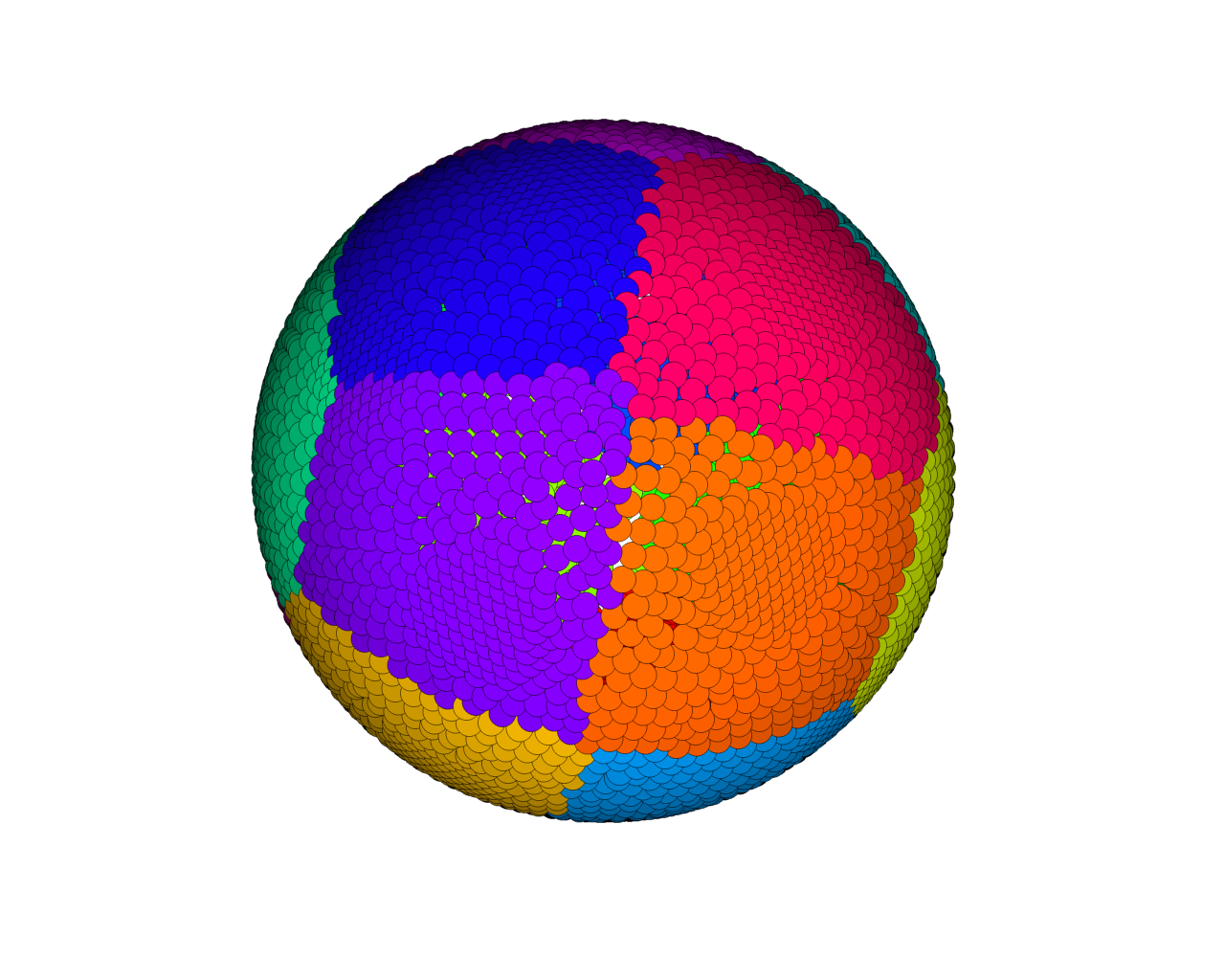}
		\includegraphics[width=1.\textwidth]{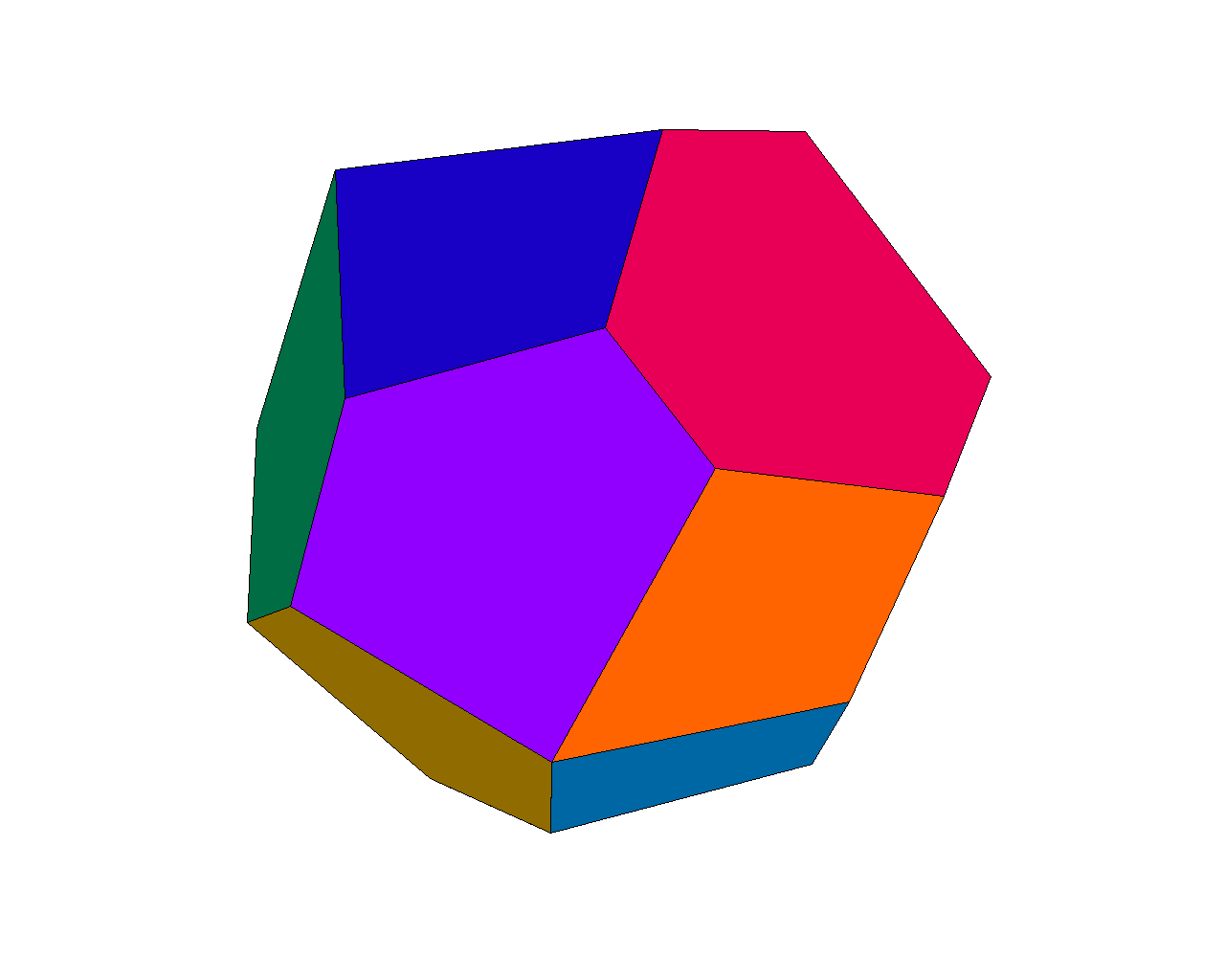}
		\includegraphics[width=1.\textwidth]{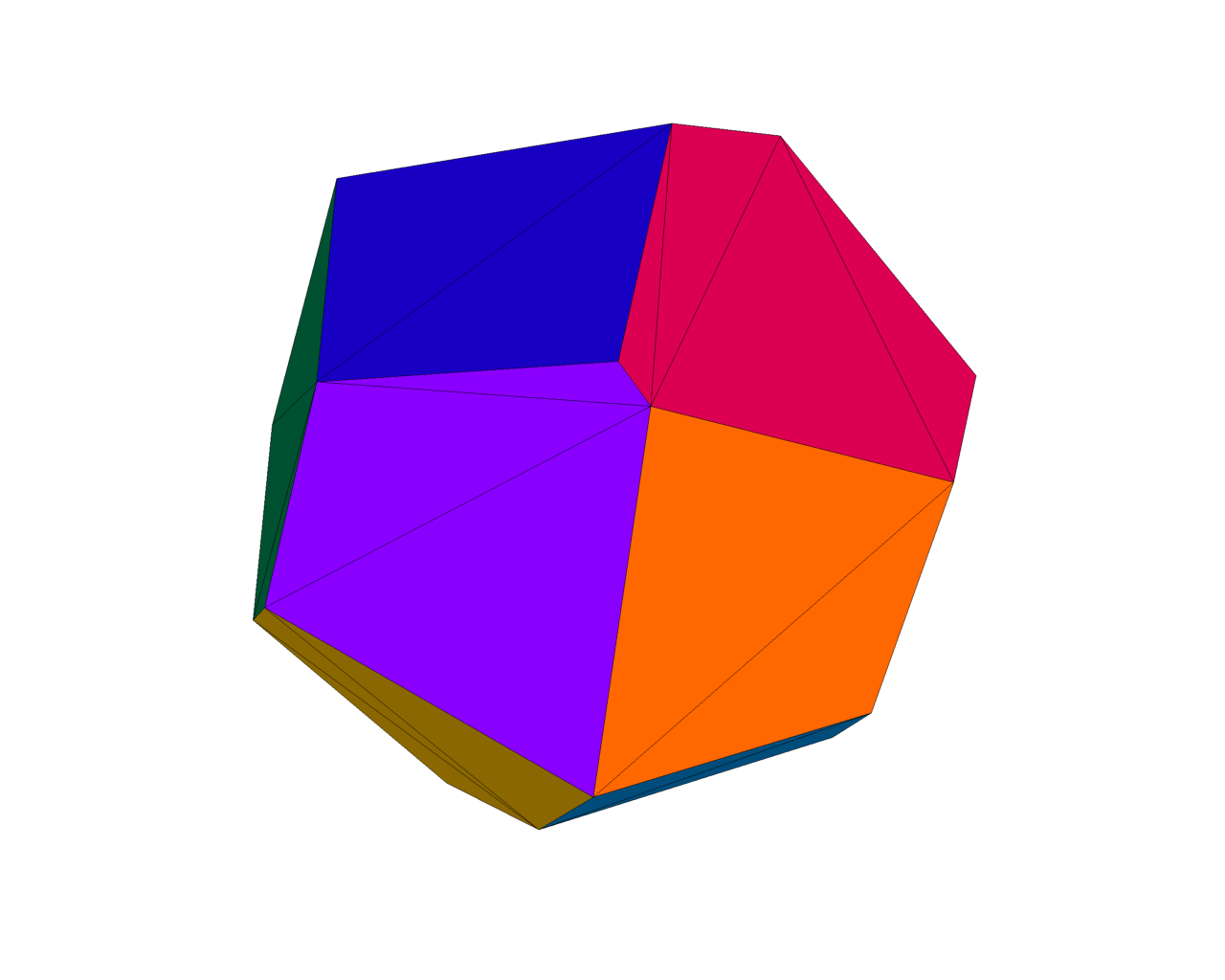}
		\caption*{$ \eta = 100 $\\ $ m = 16$}
	\end{subfigure}	
	\begin{subfigure}{0.18\textwidth}
		\includegraphics[width=1.\textwidth]{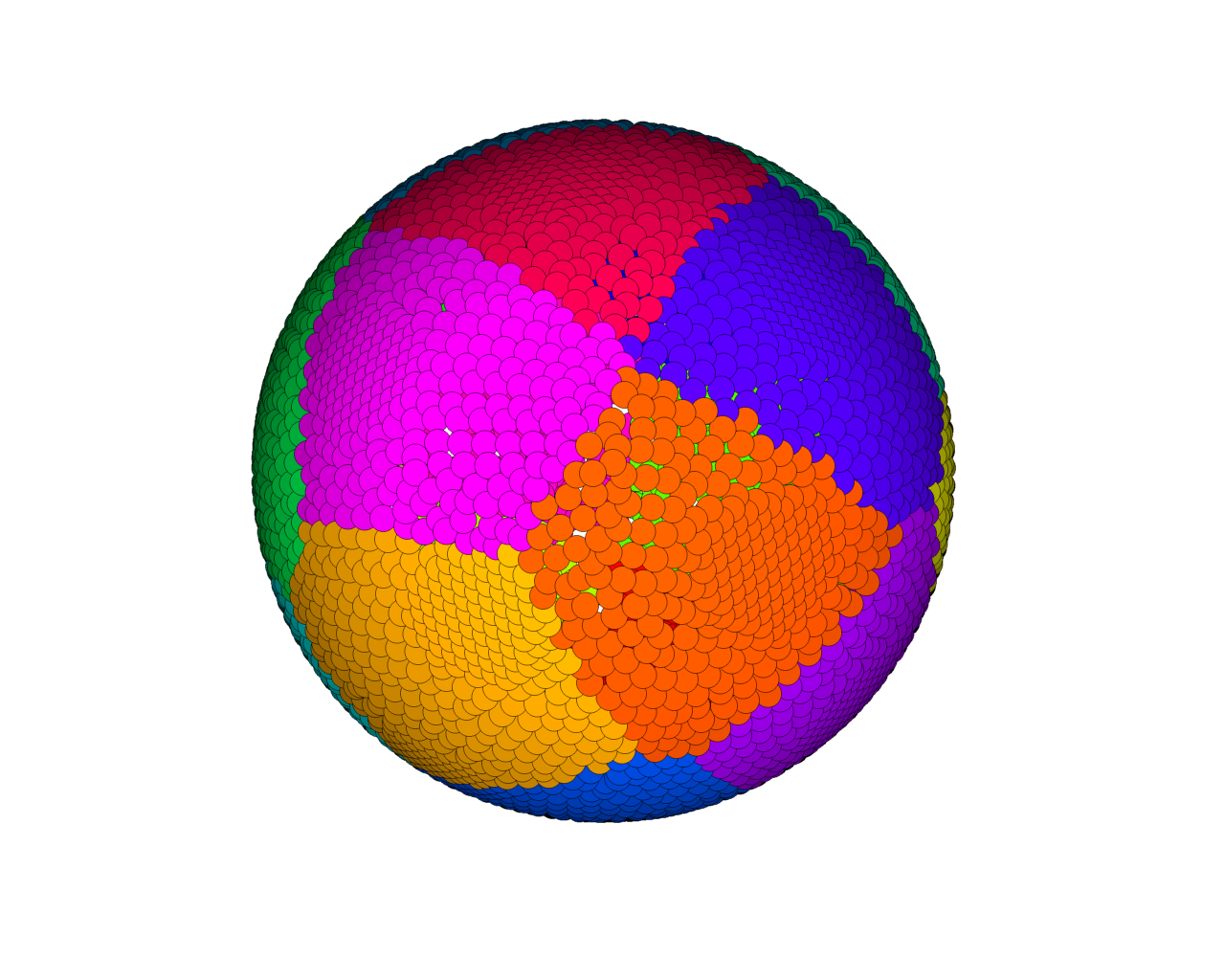}
		\includegraphics[width=1.\textwidth]{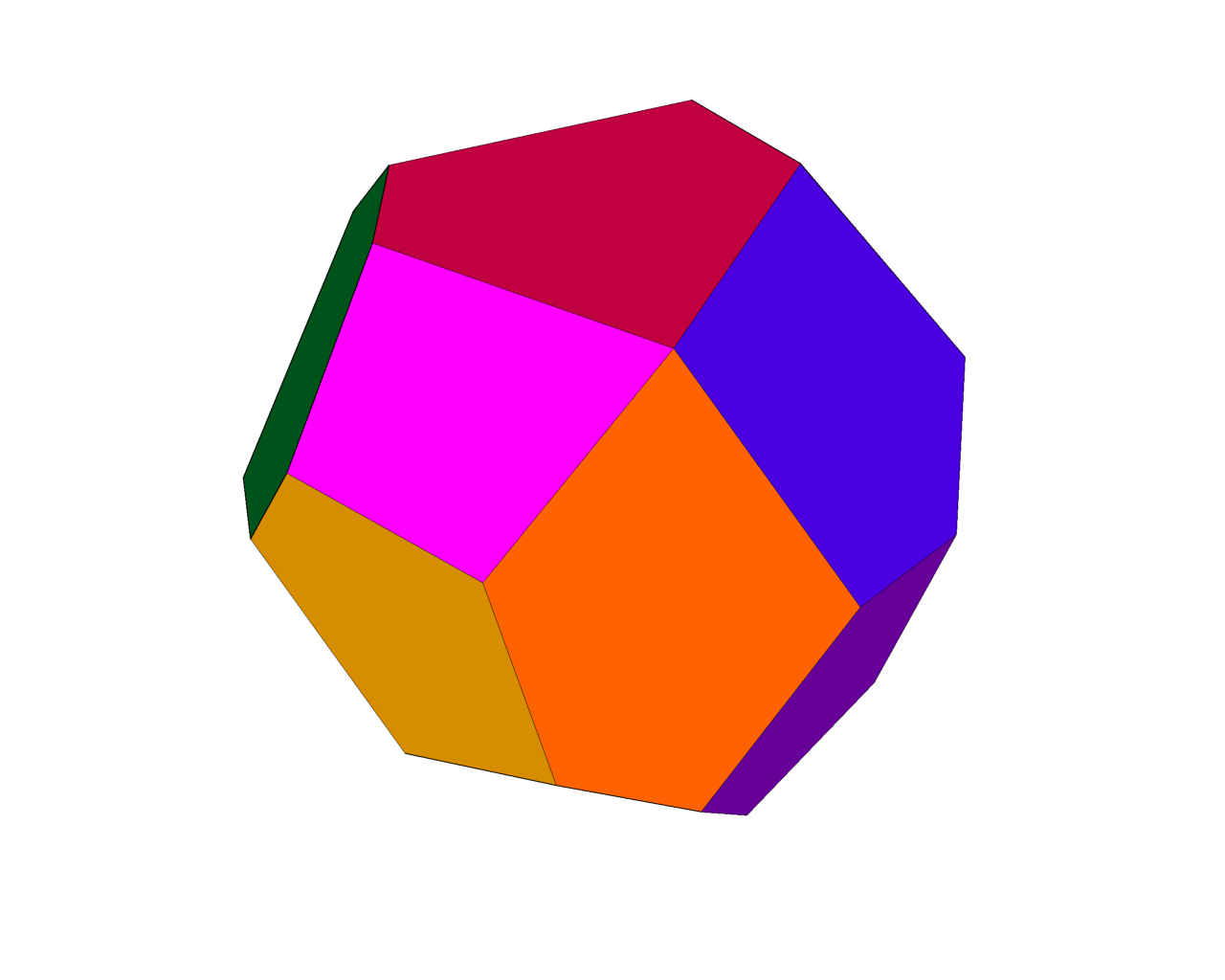}
		\includegraphics[width=1.\textwidth]{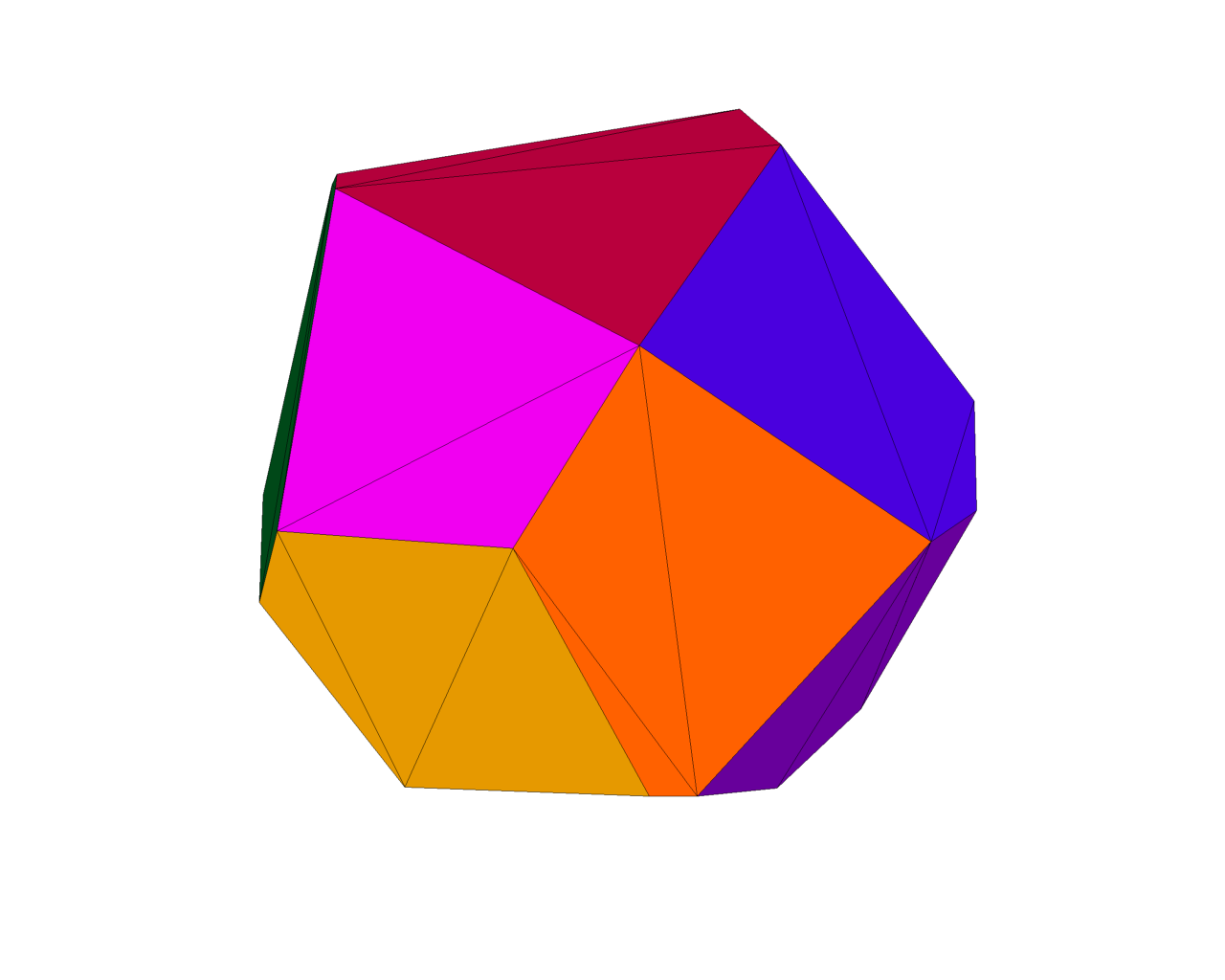}
		\caption*{$ \eta = 50 $\\ $ m = 18$}
	\end{subfigure}	
	\begin{subfigure}{0.18\textwidth}
		\includegraphics[width=1.\textwidth]{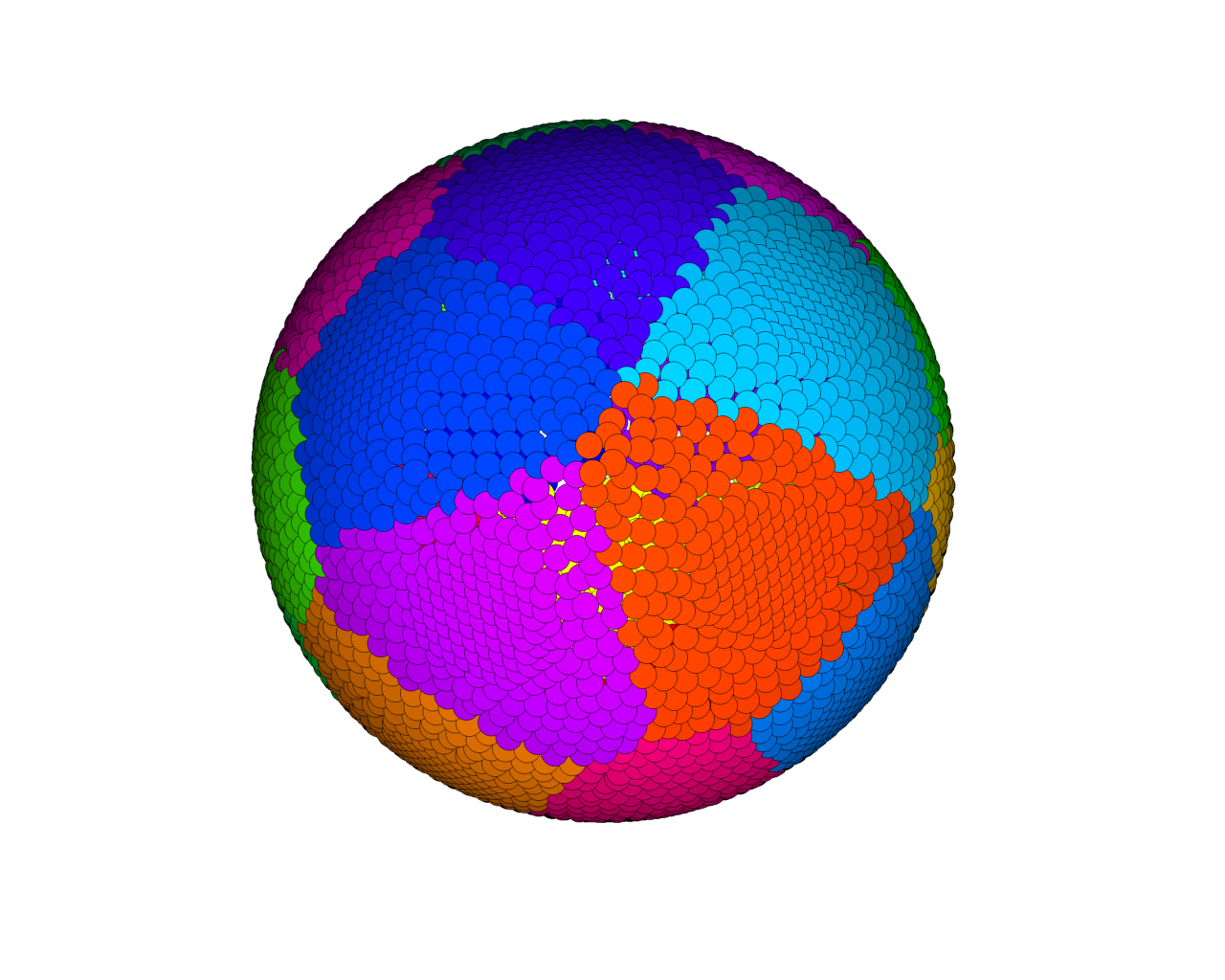}
		\includegraphics[width=1.\textwidth]{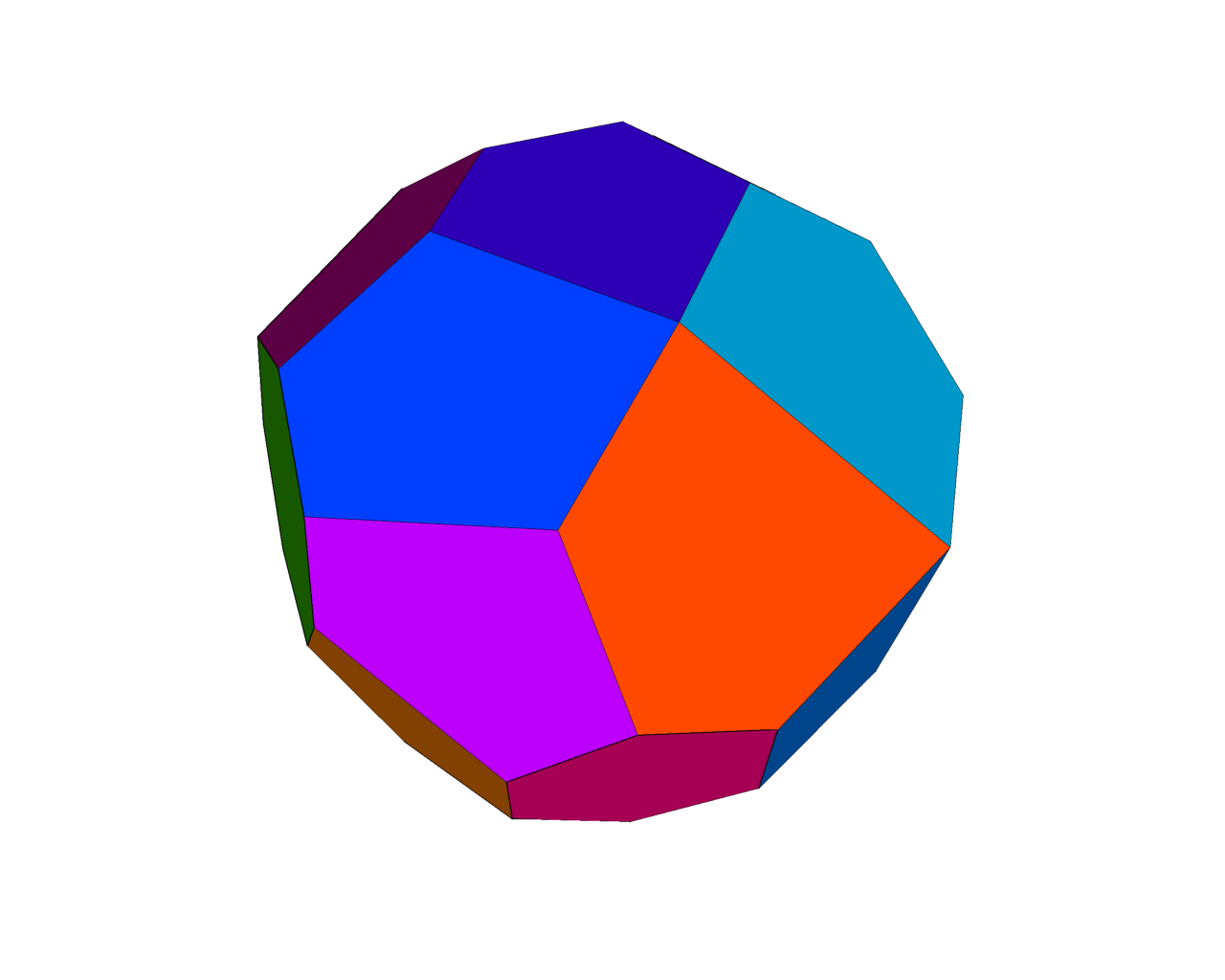}
		\includegraphics[width=1.\textwidth]{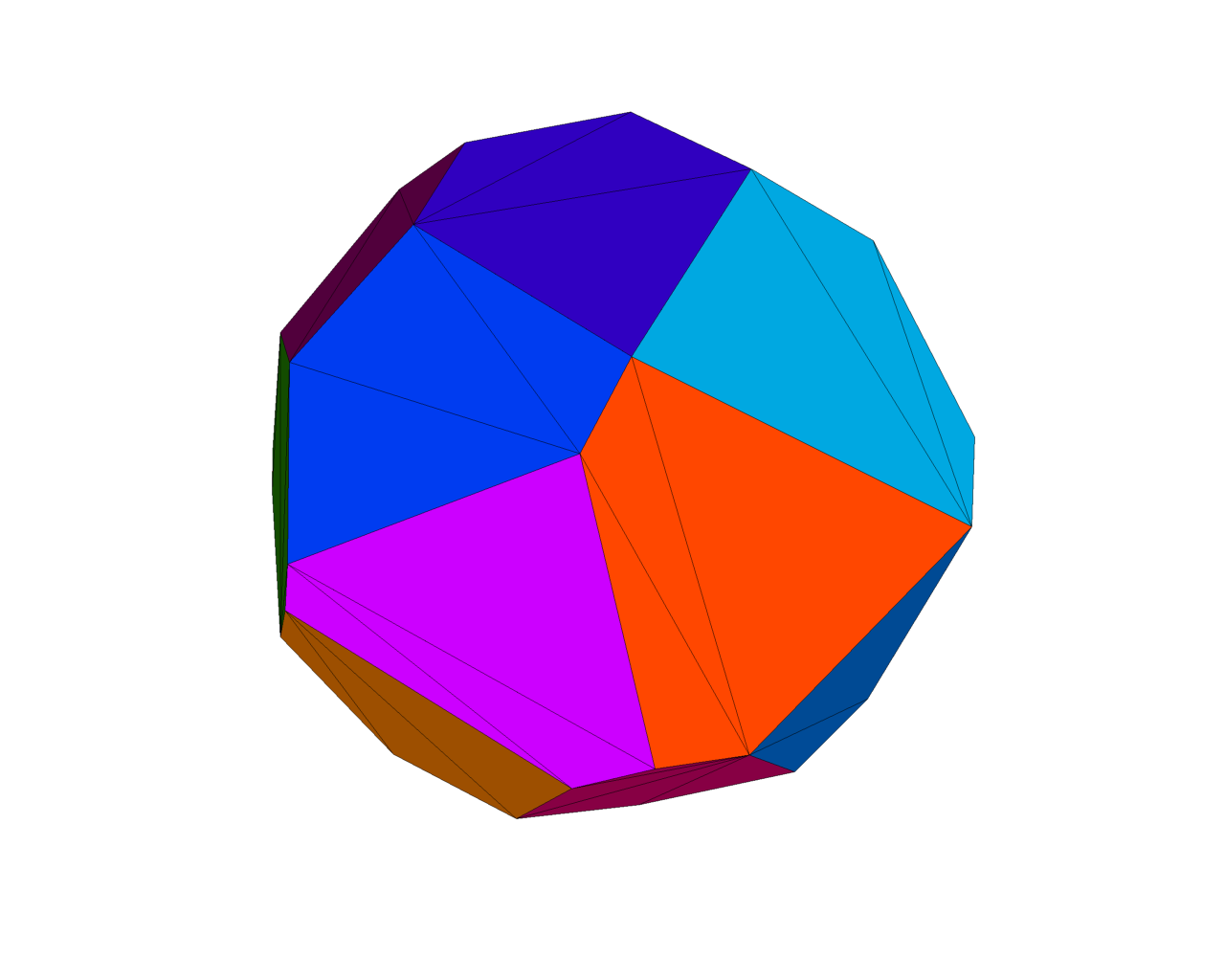}
		\caption*{$ \eta = 25 $\\ $ m = 24$}
	\end{subfigure}	
	\caption{The segmentation and simplification for different $ \eta $ values. The first row shows the segmented point sets, the second and third rows the meshes deduced via optimization and plane intersection, respectively.}
	\label{fig:sphere_kappa}
\end{figure}

In Figure~\ref{fig:sphere_kappa}, we show segmentation and simplification results w.r.t.\@ different values of~$ \eta $ taken from~$ {\lbrace 500, 200, 100, 50, 25 \rbrace} $. 
The utilized geometry is the sphere used above. The simplification points are calculated via both approaches introduced in Section~\ref{sec:Simplification}, i.e.\@ via intersection of the proxy planes and via the optimization problem given in Equation~(\ref{equ:MinimizationTangentialPlane}). 
All results are obtained from the same set of six randomly selected seed points. \edit{For the optimization case $ \eta = 500 $ we increased the weight $ \tilde{w}_i = 3 $, as otherwise the simplification points would have produced a smaller version of the cube. Hence, in this case we forced the optimization putting more emphasis on less proxy normal deviation.}
Figure~\ref{fig:sphere_kappa} also shows the amount of final proxies in correspondence to the chosen value of~$ \eta $. 
It is not surprising that with a decreasing number of~$ \eta $, the number of proxies increases as this decreases the error measure~(\ref{equ:L21energyPoints}) in order to meet the prescribed threshold. 
Note further that the resulting meshes contain vertices where more than three proxies meet, see the fourth and fifth column in Figure~\ref{fig:sphere_kappa}. 
While it is not problematic in this case, it does cause problems for a different model, see Section~\ref{sec:NoisyModel}.



\subsubsection{Face Reconstruction on the Fandisk Model}
\label{sec:FandiskModel}

\edit{
\noindent We proceed to discuss a more involved geometry, namely the Fandisk model (CAD) with~$ {\aleph = 38,840} $ vertices, shown in Figure~\ref{fig:fandisk}. 
Here, we started the segmentation with~$ 36 $ manually selected seeds, $ {\eta = 75} $, and without using splits or merges. 
We consequently obtained~$ {m = 36} $ proxy regions, shown in Figure~\ref{fig:fandiskPts}.

Reconstructing this model is challenging to our algorithm in two aspects. First, our simplification procedure requires star-convex faces, see Section~\ref{sec:SimplificationFaces}. However, the orange front plate of the fandisk model is not star-convex with respect to its barycenter (Figure~\ref{fig:starConvex}, bottom) and thus a first automatic reconstruction is slightly faulty (Figure~\ref{fig:fandiskSimpError}). 
These errors are easily identified and fixed by assigning a correct order to the contributing face vertices. 
See Section~\ref{sec:Conclusion} for a discussion how to circumvent the requirement of star-convex proxies.
The second challenging aspect is caused by the sensitivity of neighborhood notions for different densities in the point set. 
For example, the light-purple region fits between the blue and purple and hence, they see each other (Figure~\ref{fig:fandiskSimpErrorRed}, right marked spot). 
However, their planes are almost parallel, and so their intersection appears as an outlier. 
This could be avoided, if we forbid intersection points built by almost parallel planes or if we forbid those intersections that lie too far away from either one of the proxies. 
The behavior of a misplaced intersection point is also the case for the one produced by the light green, light purple and purple proxies (Figure~\ref{fig:fandiskSimpError} left marked spot), whereas in consequence a gap results between the light purple and blue area, which should not be there, according to the segmentation. 
As before, we manually removed faulty intersection points and reset face incidences to obtain a clean mesh for visual representation (Figure~\ref{fig:fandiskSimp}).

Note that these challenges are unique to the setting of point sets. 
In the context of meshed geometries, the intersection vertices can simply be ordered along the boundary of their respective proxy region, yielding a feasible face.
Also, neighborhood relations in the mesh setting can be computed via shared edges and do not require an additional neighborhood parameter~$k$.
Hence, the works of~\cite{cohen2004variational,yan2006quadraicSurfExVSA} did not have to tackle these issues, while the work of~\cite{lee2016feature} does not contain any description of how they solved these problems.
}

\begin{figure}
	\centering
	\captionsetup[subfigure]{justification=centering}
	\begin{subfigure}[t]{0.23\textwidth}
		\includegraphics[width=\textwidth]{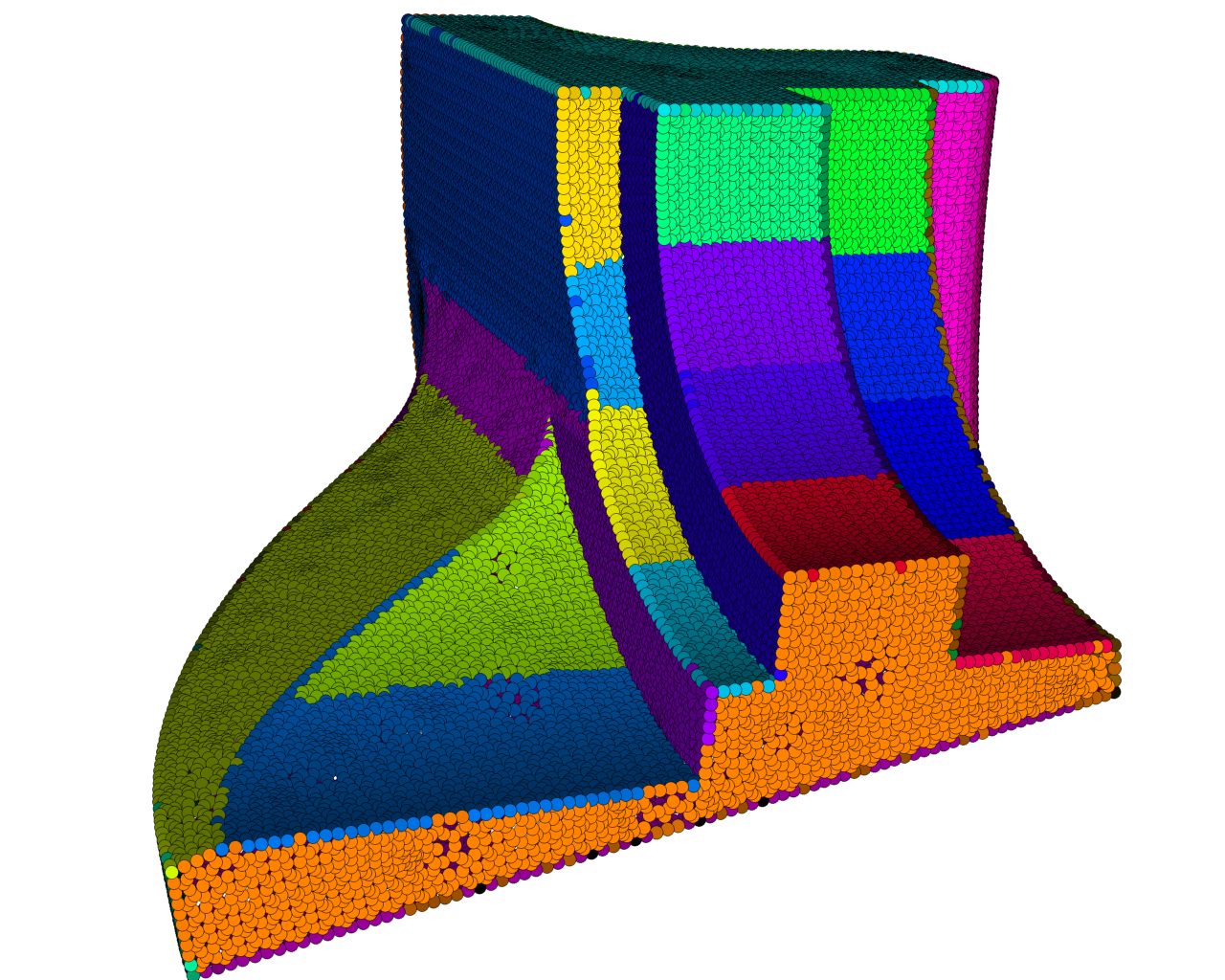}
		\caption{Initial Segmentation}
		\label{fig:fandiskPts}
	\end{subfigure}
	\begin{subfigure}[t]{0.23\textwidth}
		\includegraphics[width=\textwidth]{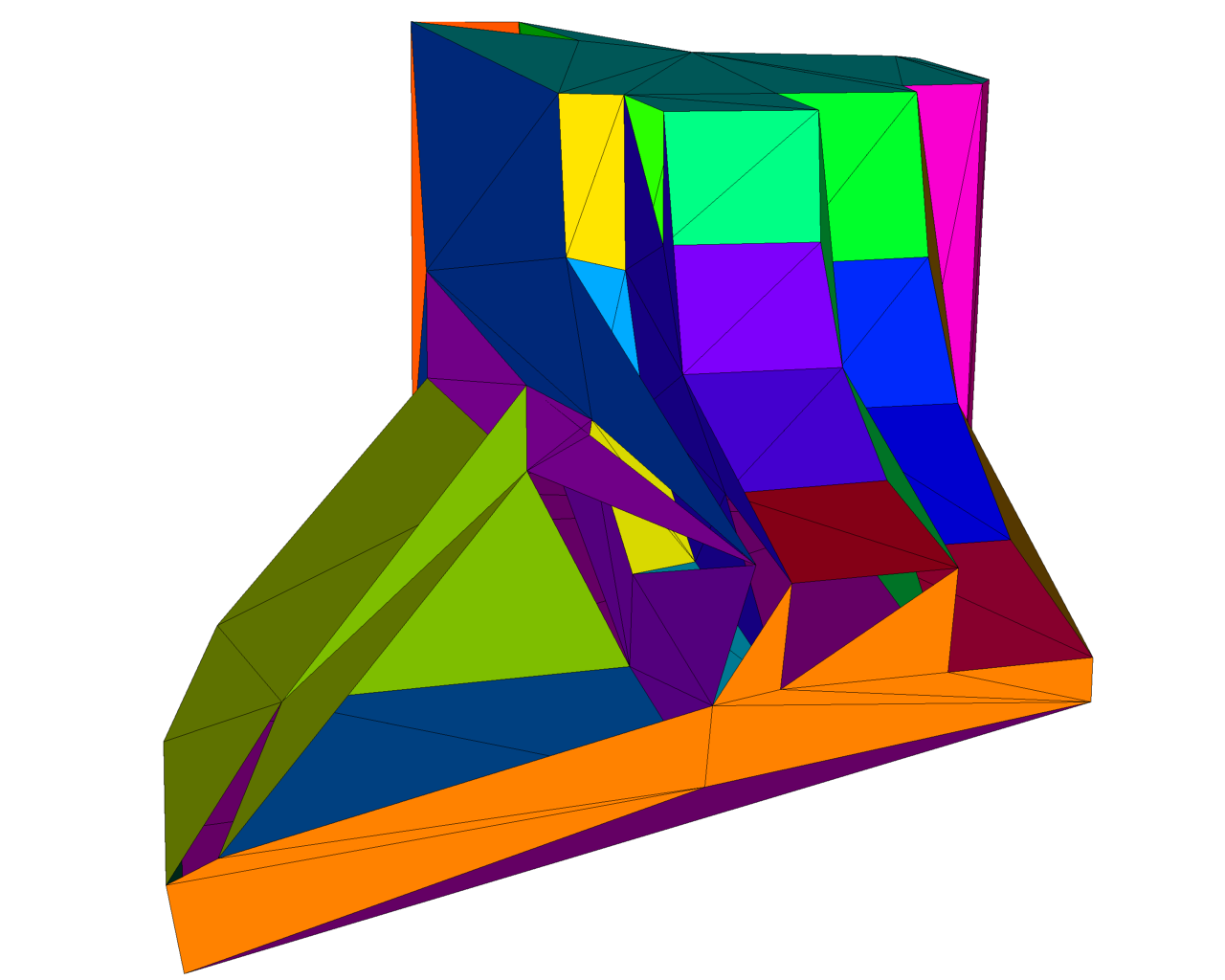}
		\caption{Faulty Vertex Order in Faces}
		\label{fig:fandiskSimpError}
	\end{subfigure}
	\begin{subfigure}[t]{0.23\textwidth}
		\includegraphics[width=\textwidth]{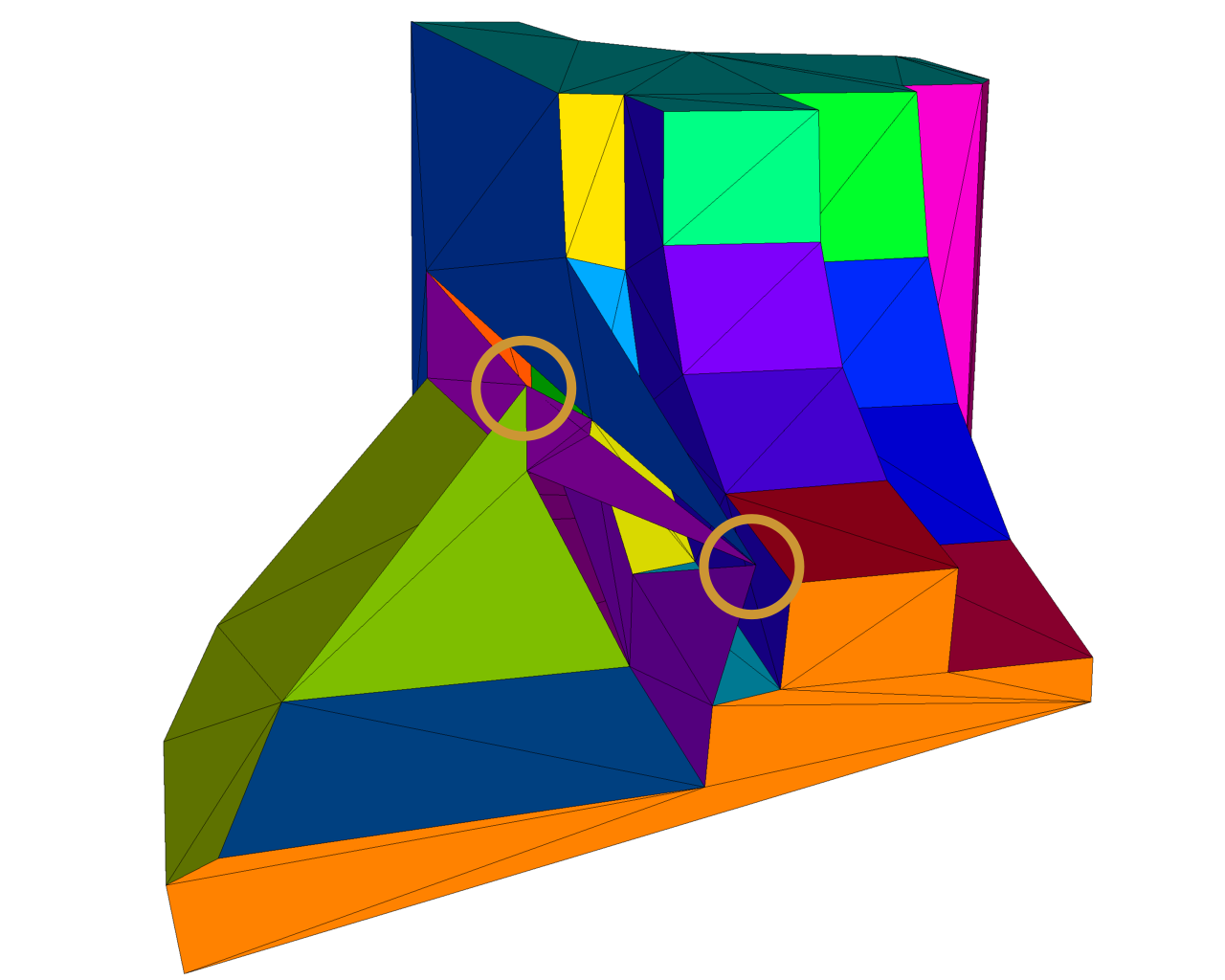}
		\caption{Faulty Proxy Intersection Points}
		\label{fig:fandiskSimpErrorRed}
	\end{subfigure}
	\begin{subfigure}[t]{0.23\textwidth}
		\includegraphics[width=\textwidth]{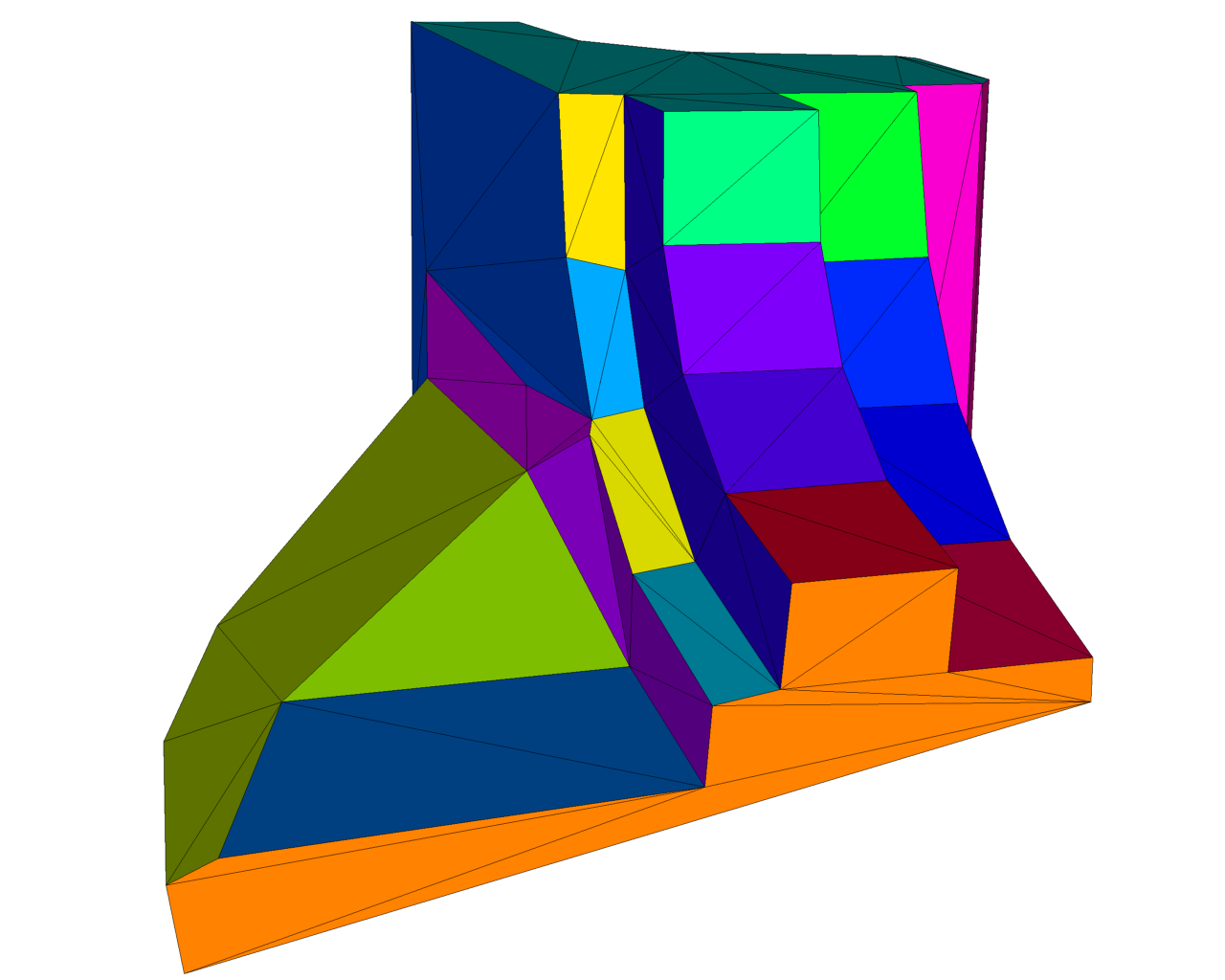}
		\caption{Corrected Simplification}
		\label{fig:fandiskSimp}
	\end{subfigure}
	\caption{Segmentation and simplification (plane intersection) of the Fandisk model.}
	\label{fig:fandisk}
\end{figure}

\subsubsection{Robustness against Noise on the Dodecahedron Model}
\label{sec:NoisyModel}

\edit{\noindent As our final experiment, we consider the simplification of a dodecahedron equipped with Gaussian noise in normal direction with an amplitude of $ 25\% $ of the average neighbor distance (taken as averaged sum over all points and their $ 12 $ nearest neighbors.).
This geometry is not easily translated into a clean mesh and therefore, the methods of~\cite{cohen2004variational,yan2006quadraicSurfExVSA} cannot be applied here straightforward.
The model consists of~$ {\aleph = 962} $ and we started with~$ 12 $ randomly chosen seeds and a threshold of~$ {\eta = 50} $. Here, in contrast to the other experiments, we use a neighborhood size of~$ {k = 12} $, because of the involved noise components. The otherwise used value of~$ {k=8} $ caused points to not be associated to any proxies. Furthermore, we allowed for splits and merges. The algorithm converged after~$ 8 $ iterations with~$ {m = 11} $ final proxies, see Figure~\ref{fig:noise_dodeca}. Observe that the faces reflecting the top proxy in the third image are not planar, which is a possible occurrence outlined in the intersection of planes when finding the simplified mesh vertices. In contrast, the optimization provides planar patches (rightmost image).

The segmentation reflects the different parts of the geometry correctly. This probably results from the normal differences caused by the noise still being smaller than the normal differences between the different faces of the dodecahedron. Hence, if the noise level and its influence in normal deviation still lies beyond the normal deviation of neighboring geometry regions, its segmentation will reflect the geometric structure well. However, this still depends on initial seed placements and therefore also an performing splits and merges.

With a segmentation reflecting the correct structure of the geometry, the simplification should not cause any additional issues, as it is the result of proxy plane intersections. 
Only the neighborhood relation between proxies might be more involved, as point locations now deviate more because of additional noise components.}

\begin{figure}
	\centering
	\captionsetup[subfigure]{justification=centering}
	\includegraphics[width=0.24\textwidth]{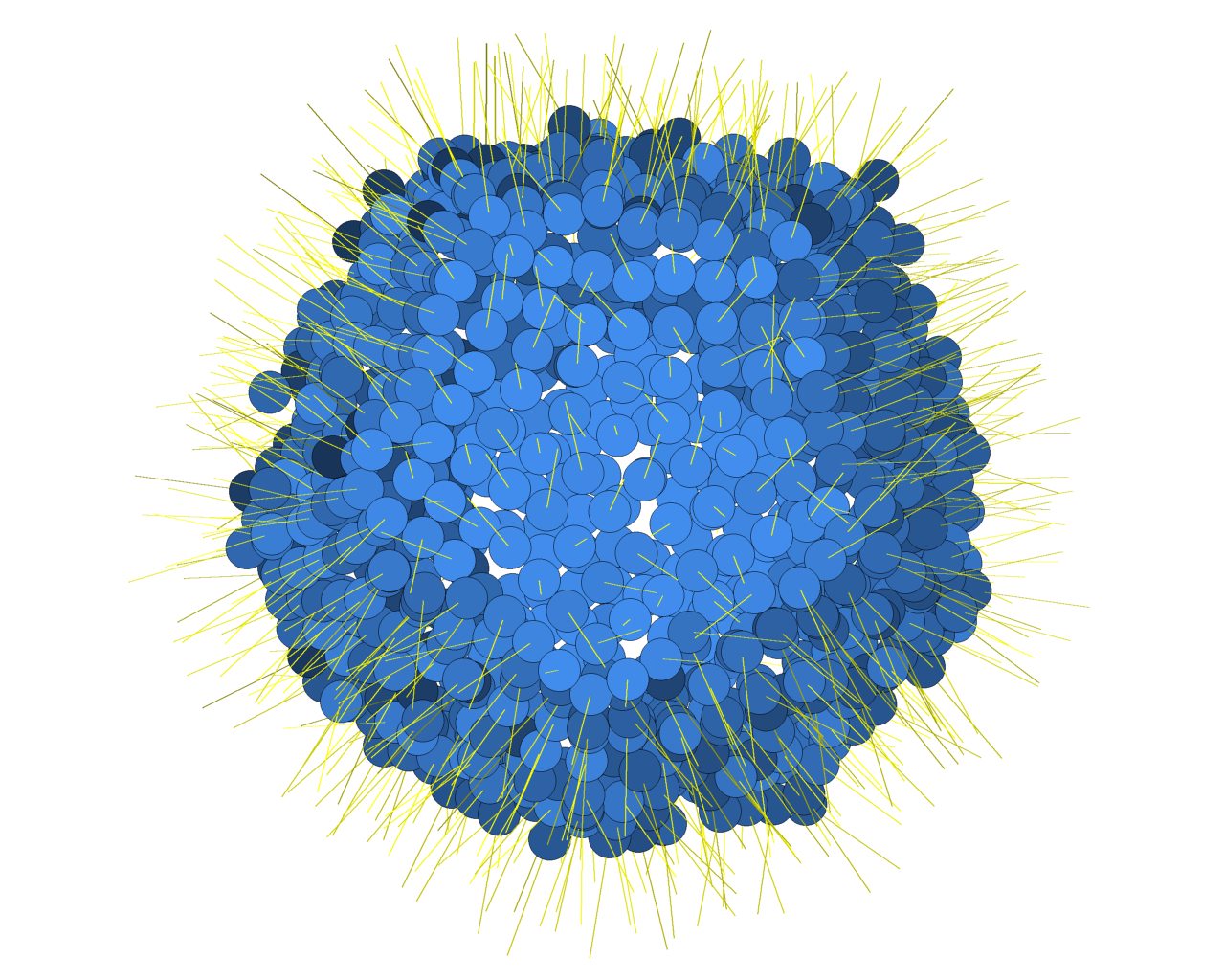}
	\includegraphics[width=0.24\textwidth]{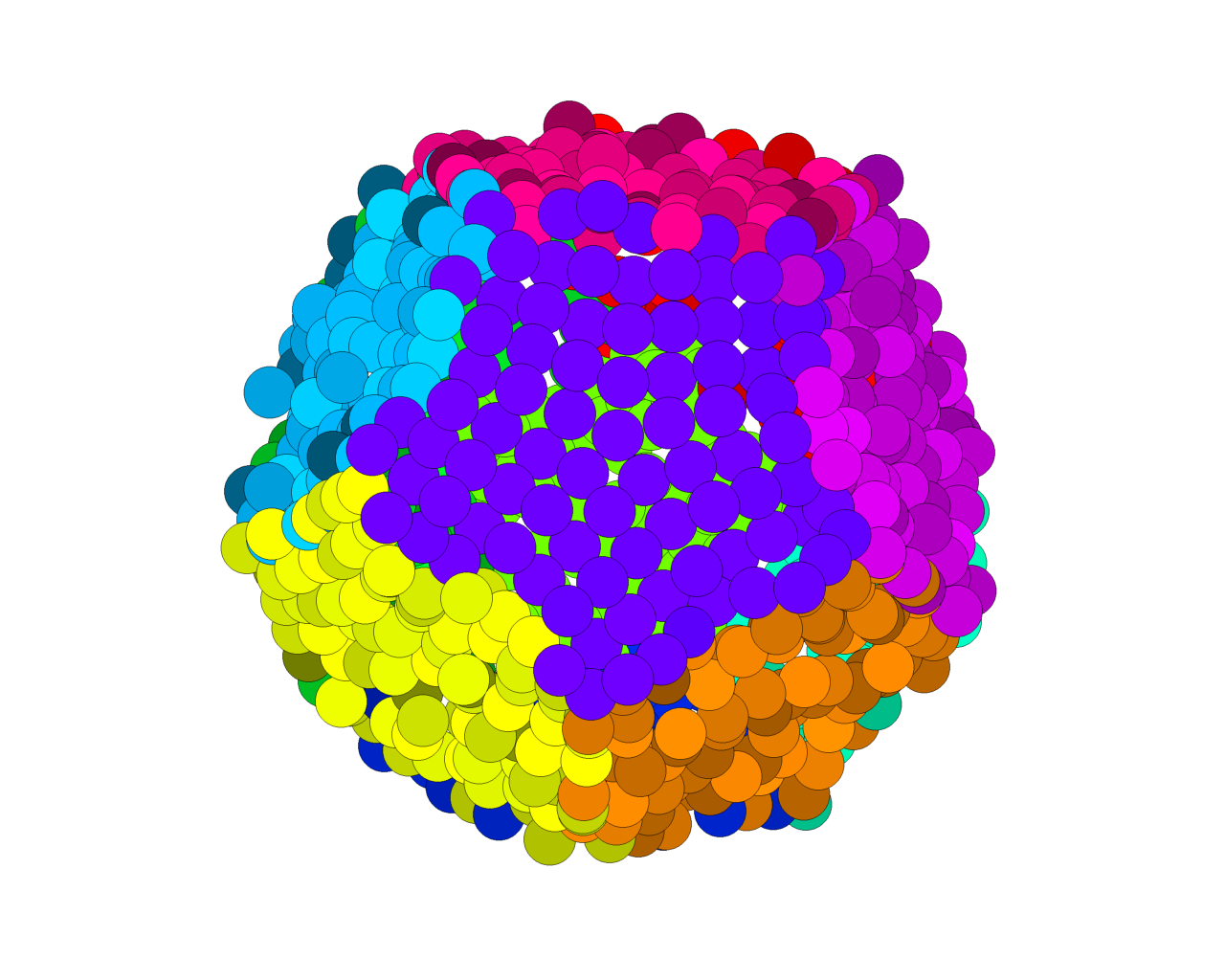}
	\includegraphics[width=0.24\textwidth]{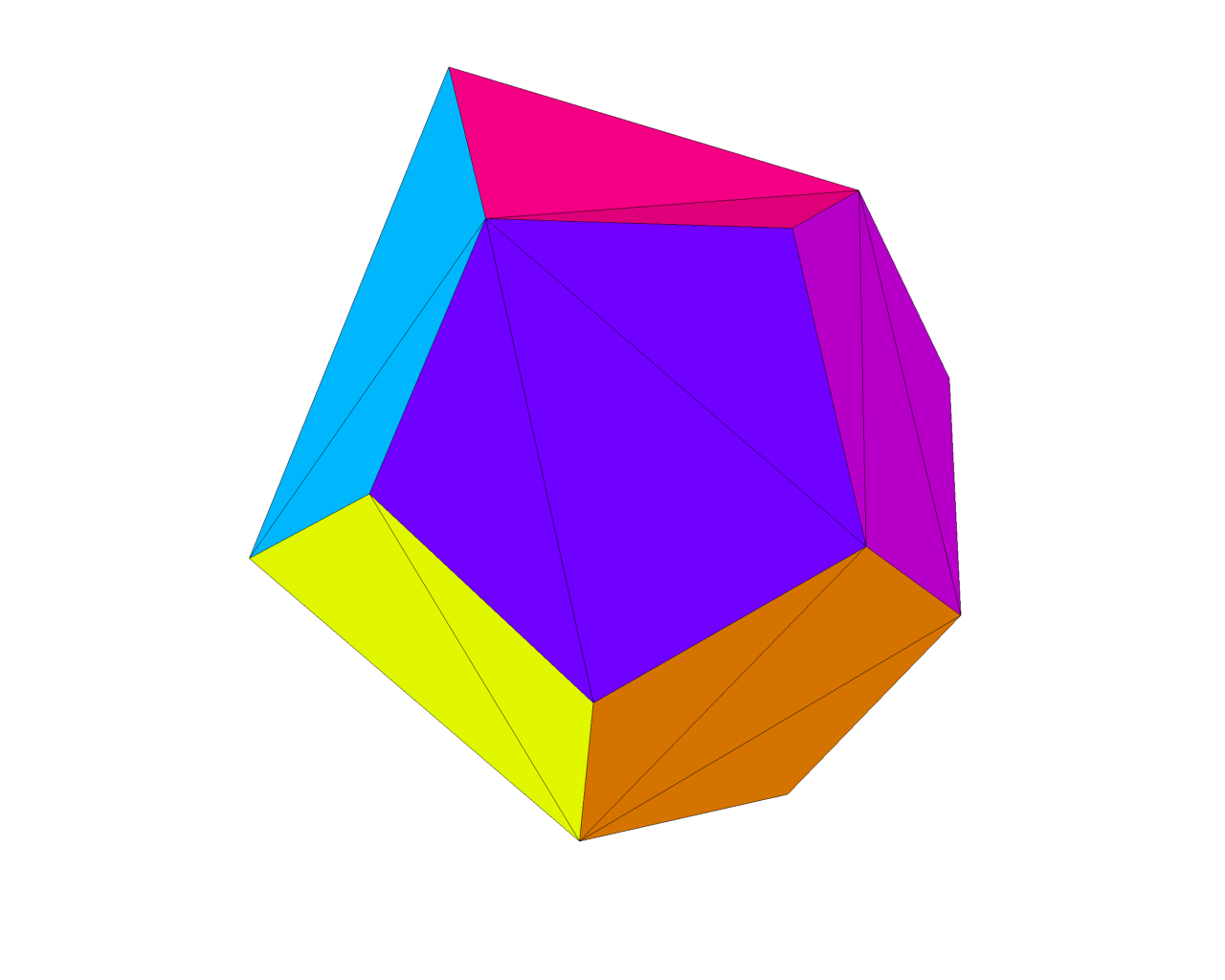}
	\includegraphics[width=0.24\textwidth]{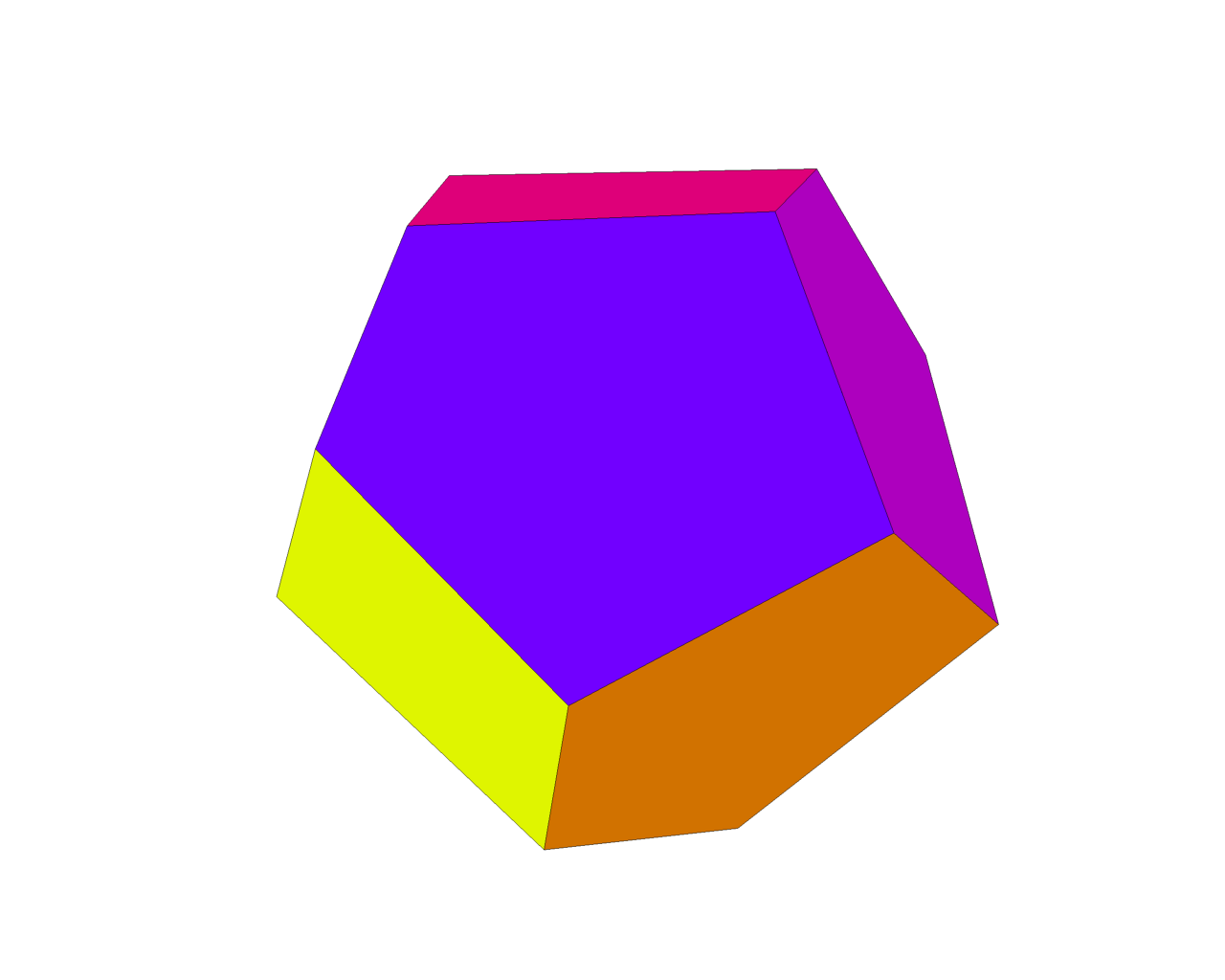}
	\caption{Noisy Dodecahedron, its segmentation and simplifications (planar intersection and optimization).}
	\label{fig:noise_dodeca}
\end{figure}


\section{Conclusion and Further Research} 
\label{sec:Conclusion}

\noindent We have shown in this paper that variational shape approximation is an effective approach to obtain a simplified mesh not only from meshed input, but also from geometries sampled by point sets. 
The presented example for non-convergence of the VSA method as used in~\cite{cohen2004variational,yan2006quadraicSurfExVSA,lee2016feature} was successfully circumvented by the introduction of a new \emph{switch} operation for which we proved convergence. 
Furthermore, by two more operations in the pipeline, namely \emph{split} and \emph{merge}, we eliminate the dependency of both the number and placement of initial seed points. 
Finally, we give a detailed description on how to obtain a simplified mesh from the segmented point set by building on the method of tangent plane intersection. Several directions are left for further research.

\edit{
First, on a theoretical level, we have shown that the introduction of the \emph{switch} operation results in guaranteed convergence. However, it remains unclear whether other alterations of the pipeline exist that came with the same result and do not affect the runtime of the algorithm as heavily as the \emph{switch} operation does.

Second, concerning the parameters, we currently do not provide any theoretical reasons for the choice of weights~$\omega_j$ in Equation~(\ref{equ:L21energyPointsOneProxy}) or weights~$\tilde{w}_j$ in Equation~(\ref{equ:TangentialPlaneEnergy}). A better understanding of these weights, aside from the experimental values used in the paper, is desirable. Similarly, the sum of normal differences~$\eta$ is not directly related to the curvature of a proxy, as it depends on the number of points contributing to the sum. Here, a threshold should be found that is independent of the number of points.

Third, our simplification process can currently not handle planar patches which are not star convex. An idea is to include border-detection algorithms for point sets to find both outer and possible inner borders---resulting from holes---of the points associated to the proxy. Given these, the mesh vertices can be easily sorted and a planar face can be obtained while not covering the holes. 

Fourth and finally, we have presented an experiment on a noisy point set. We assume that the treatment of meshes equipped with noise should be equally possible and yield even better results because of the explicit connectivity. To investigate this behavior is also left as future work.
}


\section*{Acknowledgments}

\noindent This research was supported by the DFG Collaborative Research Center TRR 109, ``Discretization in Geometry and Dynamics'', the BMS, ECMATH, and the German National Academic Foundation.

\edit{The authors would like to thank the anonymous reviewers for their comments which lead to a significant improvement of the paper.}

\appendix

\edit{
\section{Interpretation of parameter~$\eta$}
\label{app:InterpretationOfEta}

In Section~\ref{sec:UserControllerLevelOfDetail}, we introduced the user-chosen parameter~$\eta$. It relates to the energy~$\mathcal{L}^{2,1}$ as presented in Equation~(\ref{equ:L21energyPointsOneProxy}). From the definition of~$\mathcal{L}^{2,1}$, it is clear that two factors contribute to the value~${\mathcal{L}^{2,1}(P_i,N_i)}$ a given proxy~$P_i$ can achieve. These are:
\begin{itemize}
	\item The number of points~$p_j$ assigned to the proxy and
	\item the Euclidean distance of the proxy normal~$N_i$ to the respective point normals~$n_j$.
\end{itemize}
That is, a proxy can achieve a low energy by either exhibiting low deviation in its normals or by containing a low number of points. In particular the latter aspect highly depends on the number of points and the point densities in the considered model. Therefore, no general values of~$\eta$ can be presented in this paper, but the user has to choose an appropriate value for the given setup. In the following, we present a simple heuristic how to make an (initial) choice for~$\eta$.

Any point on a smooth surface can be approximated via a quadric, i.e.\@ a (hyperbolic) paraboloid~\cite{dai1998hyperbolic}. As we handle mostly (locally) convex objects, we consider an elliptic paraboloid as a simple model for a curved surface, parameterized as
\begin{align*}
	{\mathcal{P}=(u,v,\frac{u^2}{a^2}+\frac{v^2}{b^2})},
\end{align*}
then it has mean curvature
\begin{align}
\label{equ:ParaboloidCurvature}
	H(u,v) = \frac{a^2+b^2+\frac{4u^2}{a^2}+\frac{4v^2}{b^2}}{a^2b^2\sqrt{\left( 1+\frac{4u^2}{a^4}+\frac{4v^2}{b^4} \right)^3}}.
\end{align}
A normal to~$\mathcal{P}$ at~${{(u,v)}}$ is given as
\begin{align*}
	\mathcal{P}_u \times \mathcal{P}_v = \left(\begin{array}{c}1\\0\\2u/a^2\end{array}\right) \times \left(\begin{array}{c}0\\1\\2v/b^2\end{array}\right) = \left(\begin{array}{c}-2u/a^2\\-2v/b^2\\1\end{array}\right).
\end{align*}
Hence, after normalization, the point parameterized at~${(u,v)}$ contributes the following value to~$\mathcal{L}^{2,1}$, when assuming that the points are distributed uniformly on the paraboloid and therefore the proxy normal is just~${N_i=(0,0,1)^T}$:
\begin{align*}
	& \left\|\frac{1}{\sqrt{4u^2/a^4+4v^2/b^4+1}}\left(\begin{array}{c}-2u/a^2\\-2v/b^2\\1\end{array}\right) - \left(\begin{array}{c}0\\0\\1\end{array}\right)\right\|^2\\
	= & 2-\frac{2}{\sqrt{4u^2/a^4+4v^2/b^4+1}}.
\end{align*}
Placing a number~$\aleph_i$ of points regularly on the domain~${[-1,1]\times[-1,1]}$, i.e.\@ choosing ${u=j/\nu_i}$, ${v=\ell/\nu_i}$ for $ {j,\ell=1,\ldots,\nu_i} $ and $ {\nu_i:=\lceil\frac{\sqrt{\aleph_i}-1}{2}\rceil} $, we can compute the total value of~$\mathcal{L}^{2,1}$ for these points, depending on the curvature prescribed by~${(a,b)}$ as
\begin{align}
\label{equ:etaHeuristic}
	\mathcal{L}^{2,1}(P_i,N_i) = 2(2\nu_i+1)^2-\sum_{j=-\nu_i}^{\nu_i}\sum_{\ell=-\nu_i}^{\nu_i}\frac{1}{\sqrt{\left(\frac{j}{\nu_ia^2}\right)^2 +\left(\frac{\ell}{\nu_ib^2}\right)^2 + \frac{1}{4}}}.
\end{align}

Now Equation~(\ref{equ:etaHeuristic}) provides a heuristic to compute an (initial) value of~$\eta$: A user of the algorithm first chooses a desired curvature, to be covered by the proxies. From this choice and a distribution on the two main curvature directions, via Equation~(\ref{equ:ParaboloidCurvature}), the parameters~${(a,b)}$ can be computed. As the user also knows the models to which the algorithm will be applied and therefore the resolution, i.e. the number of points to be included, a second choice is the number of points~$\aleph_i$ that is roughly to be covered by a single proxy. From these two choices, using Equation~(\ref{equ:etaHeuristic}), a first estimate for~$\eta$ can be computed. If the output of the algorithm is not satisfactory, the user is of course free to tune the parameter towards the desired result.
}
\bibliographystyle{elsarticle-harv} 
\bibliography{literature}

\begin{thebibliography}{28}
\expandafter\ifx\csname natexlab\endcsname\relax\def\natexlab#1{#1}\fi
\providecommand{\url}[1]{\texttt{#1}}
\providecommand{\href}[2]{#2}
\providecommand{\path}[1]{#1}
\providecommand{\DOIprefix}{doi:}
\providecommand{\ArXivprefix}{arXiv:}
\providecommand{\URLprefix}{URL: }
\providecommand{\Pubmedprefix}{pmid:}
\providecommand{\doi}[1]{\href{http://dx.doi.org/#1}{\path{#1}}}
\providecommand{\Pubmed}[1]{\href{pmid:#1}{\path{#1}}}
\providecommand{\bibinfo}[2]{#2}
\ifx\xfnm\relax \def\xfnm[#1]{\unskip,\space#1}\fi
\bibitem[{Attene and Patan{\`e}(2010)}]{attene2010hierarchical}
\bibinfo{author}{Attene, M.}, \bibinfo{author}{Patan{\`e}, G.},
  \bibinfo{year}{2010}.
\newblock \bibinfo{title}{Hierarchical structure recovery of point-sampled
  surfaces}, in: \bibinfo{booktitle}{Computer Graphics Forum},
  \bibinfo{organization}{Wiley Online Library}. pp.
  \bibinfo{pages}{1905--1920}.
\bibitem[{Berger et~al.(2017)Berger, Tagliasacchi, Seversky, Alliez,
  Guennebaud, Levine, Sharf and Silva}]{berger2017survey}
\bibinfo{author}{Berger, M.}, \bibinfo{author}{Tagliasacchi, A.},
  \bibinfo{author}{Seversky, L.M.}, \bibinfo{author}{Alliez, P.},
  \bibinfo{author}{Guennebaud, G.}, \bibinfo{author}{Levine, J.A.},
  \bibinfo{author}{Sharf, A.}, \bibinfo{author}{Silva, C.T.},
  \bibinfo{year}{2017}.
\newblock \bibinfo{title}{A survey of surface reconstruction from point
  clouds}.
\newblock \bibinfo{journal}{Comput. Graph. Forum} \bibinfo{volume}{36 (1)},
  \bibinfo{pages}{pp. 301--329}.
\bibitem[{Cohen-Steiner et~al.(2004)Cohen-Steiner, Alliez and
  Desbrun}]{cohen2004variational}
\bibinfo{author}{Cohen-Steiner, D.}, \bibinfo{author}{Alliez, P.},
  \bibinfo{author}{Desbrun, M.}, \bibinfo{year}{2004}.
\newblock \bibinfo{title}{{Variational Shape Approximation}}, in:
  \bibinfo{booktitle}{ACM Transactions on Graphics (TOG)},
  \bibinfo{organization}{ACM}. pp. \bibinfo{pages}{905--914}.
\bibitem[{Cutler and Whiting(2007)}]{cutler2007constrained}
\bibinfo{author}{Cutler, B.}, \bibinfo{author}{Whiting, E.},
  \bibinfo{year}{2007}.
\newblock \bibinfo{title}{Constrained planar remeshing for architecture}, in:
  \bibinfo{booktitle}{Proceedings of Graphics Interface 2007}, pp.
  \bibinfo{pages}{11--18}.
\bibitem[{Dai and Newman(1998)}]{dai1998hyperbolic}
\bibinfo{author}{Dai, M.}, \bibinfo{author}{Newman, T.S.},
  \bibinfo{year}{1998}.
\newblock \bibinfo{title}{Hyperbolic and parabolic quadric surface fitting
  algorithms--Comparison between the least squares approach and the parameter
  optimization approach}.
\newblock \bibinfo{type}{Technical Report}. University of Alabama.
\bibitem[{Garland and Heckbert(1997)}]{garland1997surface}
\bibinfo{author}{Garland, M.}, \bibinfo{author}{Heckbert, P.S.},
  \bibinfo{year}{1997}.
\newblock \bibinfo{title}{Surface simplification using quadric error metrics},
  in: \bibinfo{booktitle}{Proceedings of the 24th annual conference on Computer
  graphics and interactive techniques}, pp. \bibinfo{pages}{209--216}.
\bibitem[{Grilli et~al.(2017)Grilli, Menna and Remondino}]{grilli2017review}
\bibinfo{author}{Grilli, E.}, \bibinfo{author}{Menna, F.},
  \bibinfo{author}{Remondino, F.}, \bibinfo{year}{2017}.
\newblock \bibinfo{title}{A review of point clouds segmentation and
  classification algorithms}.
\newblock \bibinfo{journal}{The International Archives of Photogrammetry,
  Remote Sensing and Spatial Information Sciences} \bibinfo{volume}{XLII-2
  (W3)}, \bibinfo{pages}{339--344}.
\bibitem[{Harris et~al.(2011)Harris, Brunsdon and
  Charlton}]{harris2011geographically}
\bibinfo{author}{Harris, P.}, \bibinfo{author}{Brunsdon, C.},
  \bibinfo{author}{Charlton, M.}, \bibinfo{year}{2011}.
\newblock \bibinfo{title}{Geographically weighted principal components
  analysis}.
\newblock \bibinfo{journal}{International Journal of Geographical Information
  Science} \bibinfo{volume}{25}, \bibinfo{pages}{pp. 1717--1736}.
\bibitem[{Hu et~al.(2018)Hu, Zhou, Gao, Jacobson, Zorin and
  Panozzo}]{hu2018tetrahedral}
\bibinfo{author}{Hu, Y.}, \bibinfo{author}{Zhou, Q.}, \bibinfo{author}{Gao,
  X.}, \bibinfo{author}{Jacobson, A.}, \bibinfo{author}{Zorin, D.},
  \bibinfo{author}{Panozzo, D.}, \bibinfo{year}{2018}.
\newblock \bibinfo{title}{Tetrahedral meshing in the wild.}
\newblock \bibinfo{journal}{ACM Trans. Graph.} \bibinfo{volume}{37},
  \bibinfo{pages}{60--1}.
\bibitem[{Hua et~al.(2015)Hua, Huang and Li}]{hua2015mesh}
\bibinfo{author}{Hua, Z.}, \bibinfo{author}{Huang, Z.}, \bibinfo{author}{Li,
  J.}, \bibinfo{year}{2015}.
\newblock \bibinfo{title}{Mesh simplification using vertex clustering based on
  principal curvature}.
\newblock \bibinfo{journal}{International Journal of Multimedia and Ubiquitous
  Engineering} \bibinfo{volume}{10}, \bibinfo{pages}{pp. 99--110}.
\bibitem[{Lee and Bo(2016)}]{lee2016feature}
\bibinfo{author}{Lee, K.W.}, \bibinfo{author}{Bo, P.}, \bibinfo{year}{2016}.
\newblock \bibinfo{title}{Feature curve extraction from point clouds via
  developable strip intersection}.
\newblock \bibinfo{journal}{Journal of Computational Design and Engineering}
  \bibinfo{volume}{3}, \bibinfo{pages}{pp. 102--111}.
\bibitem[{Levoy and Whitted(1985)}]{levoy1985use}
\bibinfo{author}{Levoy, M.}, \bibinfo{author}{Whitted, T.},
  \bibinfo{year}{1985}.
\newblock \bibinfo{title}{The use of points as a display primitive}.
\newblock \bibinfo{type}{Technical Report}. University of North Carolina,
  Department of Computer Science. \bibinfo{address}{Chapel Hill, NC, USA}.
\bibitem[{Lloyd(1982)}]{lloyd1982least}
\bibinfo{author}{Lloyd, S.P.}, \bibinfo{year}{1982}.
\newblock \bibinfo{title}{Least squares quantization in pcm}.
\newblock \bibinfo{journal}{IEEE transactions on information theory}
  \bibinfo{volume}{28}, \bibinfo{pages}{pp. 129--137}.
\bibitem[{Mathworks and Matlab()}]{matlab2019solved}
\bibinfo{author}{Mathworks}, \bibinfo{author}{Matlab}, .
\newblock \bibinfo{title}{Solve a constrained nonlinear problem,
  problem-based}.
\newblock \URLprefix
  \url{https://www.mathworks.com/help/optim/ug/solve-nonlinear-\\
  optimization-problem-based.html}.
\bibitem[{Nguyen and Le(2013)}]{nguyen2013point}
\bibinfo{author}{Nguyen, A.}, \bibinfo{author}{Le, B.}, \bibinfo{year}{2013}.
\newblock \bibinfo{title}{3d point cloud segmentation: A survey}, in:
  \bibinfo{booktitle}{2013 6th IEEE Conference on Robotics, Automation and
  Mechatronics (RAM)}, pp. \bibinfo{pages}{225--230}.
\bibitem[{Pauly et~al.(2002)Pauly, Gross and Kobbelt}]{pauly2002efficient}
\bibinfo{author}{Pauly, M.}, \bibinfo{author}{Gross, M.},
  \bibinfo{author}{Kobbelt, L.}, \bibinfo{year}{2002}.
\newblock \bibinfo{title}{Efficient simplification of point-sampled surfaces},
  in: \bibinfo{booktitle}{Proceedings of the conference on Visualization'02},
  \bibinfo{organization}{IEEE Computer Society}. pp. \bibinfo{pages}{163--170}.
\bibitem[{Rabbani et~al.(2006)Rabbani, Van Den~Heuvel and
  Vosselmann}]{rabbani2006segmentation}
\bibinfo{author}{Rabbani, T.}, \bibinfo{author}{Van Den~Heuvel, F.},
  \bibinfo{author}{Vosselmann, G.}, \bibinfo{year}{2006}.
\newblock \bibinfo{title}{Segmentation of point clouds using smoothness
  constraint}.
\newblock \bibinfo{journal}{International archives of photogrammetry, remote
  sensing and spatial information sciences} \bibinfo{volume}{36},
  \bibinfo{pages}{248--253}.
\bibitem[{Rusu and Cousins(2011)}]{rusu20113d}
\bibinfo{author}{Rusu, R.}, \bibinfo{author}{Cousins, S.},
  \bibinfo{year}{2011}.
\newblock \bibinfo{title}{3d is here: Point cloud library (pcl)}, in:
  \bibinfo{booktitle}{{IEEE International Conference on Robotics and Automation
  (ICRA)}}, pp. \bibinfo{pages}{6324--6328}.
\newblock \DOIprefix\doi{10.1109/ICRA.2011.5980567}.
\bibitem[{Schnabel et~al.(2007)Schnabel, Wahl and
  Klein}]{schnabel2007efficient}
\bibinfo{author}{Schnabel, R.}, \bibinfo{author}{Wahl, R.},
  \bibinfo{author}{Klein, R.}, \bibinfo{year}{2007}.
\newblock \bibinfo{title}{Efficient ransac for point-cloud shape detection},
  in: \bibinfo{booktitle}{Computer graphics forum},
  \bibinfo{organization}{Wiley Online Library}. pp. \bibinfo{pages}{214--226}.
\bibitem[{Skrodzki(2019)}]{skrodzki2019neighborhood}
\bibinfo{author}{Skrodzki, M.}, \bibinfo{year}{2019}.
\newblock \bibinfo{title}{{Neighborhood Data Structures, Manifold Properties,
  and Processing of Point Set Surfaces}}.
\newblock Ph.D. thesis. Freie Universit{\"a}t Berlin. \bibinfo{address}{Berlin,
  Germany}.
\newblock \URLprefix
  \url{\url{https://refubium.fu-berlin.de/handle/fub188/25320}}.
\bibitem[{Skrodzki et~al.(2018)Skrodzki, Jansen and
  Polthier}]{skrodzki2018directional}
\bibinfo{author}{Skrodzki, M.}, \bibinfo{author}{Jansen, J.},
  \bibinfo{author}{Polthier, K.}, \bibinfo{year}{2018}.
\newblock \bibinfo{title}{Directional density measure to intrinsically estimate
  and counteract non-uniformity in point clouds}.
\newblock \bibinfo{journal}{Computer Aided Geometric Design}
  \bibinfo{volume}{64}, \bibinfo{pages}{73--89}.
\bibitem[{Skrodzki et~al.(2019)Skrodzki, Zimmermann and
  Polthier}]{skrodzki2019variational}
\bibinfo{author}{Skrodzki, M.}, \bibinfo{author}{Zimmermann, E.},
  \bibinfo{author}{Polthier, K.}, \bibinfo{year}{2019}.
\newblock \bibinfo{title}{{Variational Shape Approximation of Point Set
  Surfaces}}, in: \bibinfo{booktitle}{IGS 2019 International Geometry Summit --
  Poster Proceedings}, pp. \bibinfo{pages}{54--57}.
\bibitem[{Sosorbaram et~al.(2010)Sosorbaram, Fujimoto and
  Chiba}]{sosorbaram2010simplification}
\bibinfo{author}{Sosorbaram, B.}, \bibinfo{author}{Fujimoto, T.},
  \bibinfo{author}{Chiba, N.}, \bibinfo{year}{2010}.
\newblock \bibinfo{title}{Simplification of point set surfaces using bilateral
  filter and multi-sized splats}.
\newblock \bibinfo{journal}{The Journal of the Society for Art and Science}
  \bibinfo{volume}{9 (3)}, \bibinfo{pages}{pp. 140--153}.
\bibitem[{Wu and Kobbelt(2005)}]{wu2005structure}
\bibinfo{author}{Wu, J.}, \bibinfo{author}{Kobbelt, L.}, \bibinfo{year}{2005}.
\newblock \bibinfo{title}{Structure recovery via hybrid variational surface
  approximation}, in: \bibinfo{booktitle}{Computer Graphics Forum},
  \bibinfo{organization}{Wiley Online Library}. pp. \bibinfo{pages}{277--284}.
\bibitem[{Xie et~al.(2019)Xie, Tian and Zhu}]{xie2019review}
\bibinfo{author}{Xie, Y.}, \bibinfo{author}{Tian, J.}, \bibinfo{author}{Zhu,
  X.X.}, \bibinfo{year}{2019}.
\newblock \bibinfo{title}{A review of point cloud semantic segmentation}.
\newblock \href{http://arxiv.org/abs/1908.08854}{{\tt arXiv:1908.08854}}.
\bibitem[{Yan et~al.(2006)Yan, Liu and Wang}]{yan2006quadraicSurfExVSA}
\bibinfo{author}{Yan, D.M.}, \bibinfo{author}{Liu, Y.}, \bibinfo{author}{Wang,
  W.}, \bibinfo{year}{2006}.
\newblock \bibinfo{title}{Quadric surface extraction by variational shape
  approximation}, in: \bibinfo{editor}{Kim, M.S.}, \bibinfo{editor}{Shimada,
  K.} (Eds.), \bibinfo{booktitle}{Geometric Modeling and Processing - GMP
  2006}, \bibinfo{publisher}{Springer Berlin Heidelberg},
  \bibinfo{address}{Berlin, Heidelberg}. pp. \bibinfo{pages}{73--86}.
\bibitem[{Yao et~al.(2015)Yao, Huang, Xu and Li}]{yao2015quadratic}
\bibinfo{author}{Yao, L.}, \bibinfo{author}{Huang, S.}, \bibinfo{author}{Xu,
  H.}, \bibinfo{author}{Li, P.}, \bibinfo{year}{2015}.
\newblock \bibinfo{title}{Quadratic error metric mesh simplification algorithm
  based on discrete curvature}.
\newblock \bibinfo{journal}{Mathematical Problems in Engineering}
  \bibinfo{volume}{2015}.
\bibitem[{Zimmer et~al.(2012)Zimmer, Campen, Herkrath and
  Kobbelt}]{zimmer2012variational}
\bibinfo{author}{Zimmer, H.}, \bibinfo{author}{Campen, M.},
  \bibinfo{author}{Herkrath, R.}, \bibinfo{author}{Kobbelt, L.},
  \bibinfo{year}{2012}.
\newblock \bibinfo{title}{Variational tangent plane intersection for planar
  polygonal meshing}, in: \bibinfo{editor}{Hesselgren, L.},
  \bibinfo{editor}{Sharma, S.}, \bibinfo{editor}{Wallner, J.},
  \bibinfo{editor}{Baldassini, N.}, \bibinfo{editor}{Bompas, P.},
  \bibinfo{editor}{Raynaud, J.} (Eds.), \bibinfo{booktitle}{Advances in
  Architectural Geometry 2012}. \bibinfo{publisher}{Springer}, pp.
  \bibinfo{pages}{319--332}.

\end{thebibliography}


%
%
%
\end{document}